\documentclass[A4,11pt]{article}
\usepackage{amsfonts}
\usepackage{amsmath,amsthm}
\usepackage{amssymb,amscd}
\usepackage{blkarray}
\usepackage{enumerate}

\usepackage{kbordermatrix}
\usepackage{color}
\usepackage{cite}
\usepackage{physics}
\usepackage{esint}
\usepackage{tikz}
\usepackage{tikzscale}
\usepackage[absolute,overlay]{textpos}
\usepackage{multicol}
\usepackage{framed}
\usepackage{lscape}
\usepackage{caption}
\usepackage{subcaption}
\usepackage[inline]{enumitem}
\usepackage{float}
\usepackage{lipsum}
\usepackage{titlesec}
\usepackage{mathtools}

\setcounter{secnumdepth}{4}

\titleformat{\paragraph}
{\normalfont\normalsize\bfseries}{\theparagraph}{1em}{}
\titlespacing*{\paragraph}
{0pt}{3.25ex plus 1ex minus .2ex}{1.5ex plus .2ex}
\makeatletter
\newcommand{\rmnum}[1]{\romannumeral #1}
\newcommand{\Rmnum}[1]{\expandafter\@slowromancap\romannumeral #1@}
\makeatother
\def\Xint#1{\mathchoice
{\XXint\displaystyle\textstyle{#1}}%
{\XXint\textstyle\scriptstyle{#1}}%
{\XXint\scriptstyle\scriptscriptstyle{#1}}%
{\XXint\scriptscriptstyle\scriptscriptstyle{#1}}%
\!\int}
\def\XXint#1#2#3{{\setbox0=\hbox{$#1{#2#3}{\int}$}
\vcenter{\hbox{$#2#3$}}\kern-.5\wd0}}

\def\dashint{\Xint-}

\setlength\topmargin{-0.15in}

\setlength\hoffset{-0.3in}
\setlength\headheight{0in}
\setlength\headsep{0.5in}
\setlength\textheight{8.0in}
\setlength\textwidth{6.5in}
\setlength\oddsidemargin{0in}
\setlength\evensidemargin{0in}
\setlength\parindent{0.1in}
\setlength\parskip{0.1in}
\setlength\footskip{0.5in}
\setlength\topskip{0.1in}
\setlength\parindent{0pt}
	
\oddsidemargin 0.3in
\evensidemargin 2.0in

\def\Xint#1{\mathchoice
{\XXint\displaystyle\textstyle{#1}}%
{\XXint\textstyle\scriptstyle{#1}}%
{\XXint\scriptstyle\scriptscriptstyle{#1}}%
{\XXint\scriptscriptstyle\scriptscriptstyle{#1}}%
\!\int}
\def\XXint#1#2#3{{\setbox0=\hbox{$#1{#2#3}{\int}$}
\vcenter{\hbox{$#2#3$}}\kern-.5\wd0}}

\def\dashint{\Xint-}

\theoremstyle{remark}

\theoremstyle{definition}
\newtheorem{theorem}{Theorem}[section]

\newtheorem{corollary}[theorem]{Corollary}
\newtheorem{lemma}[theorem]{Lemma}

\newtheorem{remark}{Remark}[section]
\newtheorem{definition}{Definition}[section]
\newtheorem{proposition}[theorem]{Proposition}

\newtheorem{example}{Example}[section]

\begin{document}
\title{Multispecies totally asymmetric simple exclusion process with long-range swap}
\author{\textbf{Eunghyun Lee\footnote{eunghyun.lee@nu.edu.kz}}\\ {\text{Department of Mathematics,}}
                                         \date{}   \\ {\text{School of Sciences and Humanities,}} \\ {\text{Nazarbayev University, }}\\ {\text{Kazakhstan }}   }

\date{}
\maketitle
\begin{abstract}
\noindent We introduce the multispecies totally asymmetric simple exclusion process (mTASEP) with long-range swap, a new interacting particle system combining the backward-push rule with the forward-jump rule. Although governed by local dynamics, the model induces effective long-range particle exchanges. We establish its integrability by proving two-particle reducibility and showing that the associated scattering matrix satisfies the Yang--Baxter equation. In addition, we derive explicit contour integral formulas for transition probabilities. These results position the long-range swap model as a novel exactly solvable multispecies process, characterized by distinctive algebraic features and opening new directions for further study in integrable probability and statistical mechanics.
\end{abstract}

\section{Introduction}\label{747pm819}
Interacting particle systems have long served as fundamental models in probability theory and statistical mechanics, providing both exactly solvable examples and rich universality phenomena. Among them, the totally asymmetric simple exclusion process (TASEP) occupies a central role, being one of the simplest nontrivial stochastic models of transport with broad connections to combinatorics, integrable systems, and the Kardar–Parisi–Zhang (KPZ) universality class \cite{Ayyer-Linusson,Borodin-Corwin, Corwin,KPZ,Liggett1,Schutz-1997,Tracy-Widom-2008}. Over the past decades, various generalizations have been proposed to capture more complex interaction rules, including multispecies extensions \cite{Kuniba-Maruyama-Okado,Tracy-Widom-2013}, inhomogeneous jumping rates \cite{Ayyer-Kuniba,Wang-Waugh}, particle/species-dependent rates \cite{Rakos-Schutz-2006,Lee-2021} and families with  one- or two-parameter deformations \cite{Ali,Ali2,Povolotsky-2}. These variants not only broaden the family of integrable stochastic models but also reveal novel algebraic structures and probabilistic behaviors. In this work, we introduce and analyze a new integrable multispecies model with long-range swap dynamics, and place it within the broader landscape of TASEP-type and drop–push-type dynamics \cite{Schutz-Ramaswamy-Barma}.

We adopt the convention used in  \cite{Lee-2020,Lee-2021,Lee-2024,Lee-Raimbekov-2025} to describe the  state of the multispecies interacting particle system on $\mathbb{Z}$ with the exclusion rule. A state of the $n$-particle system  is represented by a pair $(X,\pi)=(x_1,\dots,x_n, \pi_1\cdots\pi_n)$, where $X=(x_1,\dots, x_n)\in \mathbb{Z}^n$  denotes the positions of the particles, ordered such that $x_1< \cdots< x_n$. Thus,  $x_i$ is the position of the $i$th leftmost particle. The sequence $\pi = \pi_1\pi_2\cdots \pi_n$ is a word of length $n$ with entries in $\{1,\dots, N\}$, indicating that $\pi_i$ denotes the species of the $i$th leftmost particle.

In the classical multispecies totally asymmetric simple exclusion process (mTASEP), particles evolve in continuous time. Each particle at $x$ waits an exponential  time with rate $1$ before attempting a rightward move to $x+1$. If the target site is vacant, the particle occupies it. If the site is occupied by another particle, the interaction depends on their species labels. When a particle of species $i$ attempts to move into a site occupied by species $j$, the move succeeds with an exchange of positions if $i>j$, while it is suppressed if $i\leq j$.
      \begin{figure}[H]
  \centering
  \begin{tikzpicture}

    \node[circle, draw, minimum size=0.4cm, inner sep = 0] (left1) at (0,0) {$2$};
    \node[circle, draw, minimum size=0.4cm, inner sep = 0] (left2) at (1,0) {$1$};
    \node[circle, draw, minimum size=0.4cm, inner sep = 0] at (3,0) {$1$};
    \node[circle, draw, minimum size=0.4cm, inner sep = 0] at (4,0) {$2$};

    \draw[->, thick, bend left=45] ([yshift=0.2cm]left1.north) to ([yshift=0.2cm]left2.north);

    \draw[->] (1.8,0) -- (2.2,0);

  \end{tikzpicture}

   \vspace{0.5cm}

 \begin{tikzpicture}

    \node[circle, draw, minimum size=0.4cm, inner sep = 0] (left1) at (0,0) {$1$};
    \node[circle, draw, minimum size=0.4cm, inner sep = 0] (left2) at (1,0) {$2$};
    \node[circle, draw, minimum size=0.4cm, inner sep = 0] at (3,0) {$1$};
    \node[circle, draw, minimum size=0.4cm, inner sep = 0] at (4,0) {$2$};


    \draw[->] (1.8,0) -- (2.2,0);
    \draw[->, thick, bend left=45] ([yshift=0.2cm]left1.north) to ([yshift=0.2cm]left2.north);

    \draw[->] (1.8,0) -- (2.2,0);
  \end{tikzpicture}

   \vspace{0.5cm}

    \begin{tikzpicture}

    \node[circle, draw, minimum size=0.4cm, inner sep = 0] (left1) at (0,0) {$1$};
    \node[circle, draw, minimum size=0.4cm, inner sep = 0] (left2) at (1,0) {$1$};
    \node[circle, draw, minimum size=0.4cm, inner sep = 0] at (3,0) {$1$};
    \node[circle, draw, minimum size=0.4cm, inner sep = 0] at (4,0) {$1$};


    \draw[->] (1.8,0) -- (2.2,0);
    \draw[->, thick, bend left=45] ([yshift=0.2cm]left1.north) to ([yshift=0.2cm]left2.north);

    \draw[->] (1.8,0) -- (2.2,0);
  \end{tikzpicture}
  \caption{Illustration of the mTASEP dynamics. In the second and third figures, the configuration remains unchanged.}
  \label{figure1}
  \end{figure}
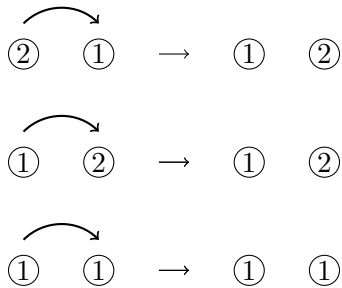

The notion of \textit{hidden states}, introduced in \cite{Lee-Raimbekov-2025}, provides a convenient reformulation of the mTASEP dynamics. A hidden state is a transient configuration in which a site momentarily accommodates two particles side by side before immediately transitioning to another state. If site $x$ holds particles of species $i$ (left) and $j$ (right), we denote this hidden state by $(x,x,ij)$. Since hidden states vanish instantaneously, they are not in the Markov state space. Using this perspective, the dynamics of the mTASEP can be described as follows: when a particle of species $i$ at site $x-1$ attempts to move onto a particle of species $j$ at site $x$, the particle $i$ jumps to site $x$ and occupies the left position there, creating the hidden state $(x,x,ij)$. A position adjustment then occurs: the particle of species $\min(i,j)$ remains on the left at $x$, and then it immediately jumps back to site $x-1$. This hidden state formalism enables a fully local description of the update rules and greatly facilitates algebraic treatment of the model.
\\
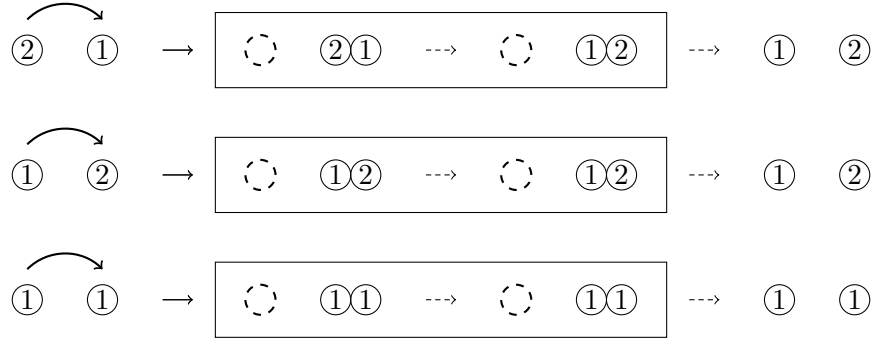
\begin{figure}[H]
  \centering
    \begin{tikzpicture}

    \draw[draw=black] (2.5,0.5) rectangle (8.5,-0.5);

    \node[circle, draw, minimum size=0.4cm, inner sep = 0] (left1) at (0,0) {$2$};
    \node[circle, draw, minimum size=0.4cm, inner sep = 0] (left2) at (1,0) {$1$};

    \node[circle, draw, dashed, thick, minimum size=0.4cm, inner sep = 0] (mid1) at (3.1,0) {};
    \node[circle, draw, minimum size=0.4cm, inner sep = 0] (mid2) at (4.1,0) {$2$};
    \node[circle, draw, minimum size=0.4cm, inner sep = 0] at (4.5,0) {$1$};

    \node[circle, draw, dashed, thick, minimum size=0.4cm, inner sep = 0] at (6.5,0) {};
    \node[circle, draw, minimum size=0.4cm, inner sep = 0] at (7.5,0) {$1$};
    \node[circle, draw, minimum size=0.4cm, inner sep = 0] at (7.9,0) {$2$};

    \node[circle, draw, minimum size=0.4cm, inner sep = 0] at (10.0,0) {$1$};
    \node[circle, draw, minimum size=0.4cm, inner sep = 0] at (11.0,0) {$2$};

    \draw[->] (1.8,0) -- (2.2,0);
    \draw[->, dash pattern=on 2pt off 1.5pt] (5.3,0) -- (5.7,0);
    \draw[->, dash pattern=on 2pt off 1.5pt] (8.8,0) -- (9.2,0);

    \draw[->, thick, bend left=45] ([yshift=0.2cm]left1.north) to ([yshift=0.2cm]left2.north);

    \draw[->] (1.8,0) -- (2.2,0);
  \end{tikzpicture}

  \bigskip

  \begin{tikzpicture}

    \draw[draw=black] (2.5,0.5) rectangle (8.5,-0.5);

    \node[circle, draw, minimum size=0.4cm, inner sep = 0] (left1) at (0,0) {$1$};
    \node[circle, draw, minimum size=0.4cm, inner sep = 0] (left2) at (1,0) {$2$};

    \node[circle, draw, dashed, thick, minimum size=0.4cm, inner sep = 0] (mid1) at (3.1,0) {};
    \node[circle, draw, minimum size=0.4cm, inner sep = 0] (mid2) at (4.1,0) {$1$};
    \node[circle, draw, minimum size=0.4cm, inner sep = 0] at (4.5,0) {$2$};

    \node[circle, draw, dashed, thick, minimum size=0.4cm, inner sep = 0] at (6.5,0) {};
    \node[circle, draw, minimum size=0.4cm, inner sep = 0] at (7.5,0) {$1$};
    \node[circle, draw, minimum size=0.4cm, inner sep = 0] at (7.9,0) {$2$};

    \node[circle, draw, minimum size=0.4cm, inner sep = 0] at (10.0,0) {$1$};
    \node[circle, draw, minimum size=0.4cm, inner sep = 0] at (11.0,0) {$2$};

    \draw[->] (1.8,0) -- (2.2,0);
    \draw[->, dash pattern=on 2pt off 1.5pt] (5.3,0) -- (5.7,0);
    \draw[->, dash pattern=on 2pt off 1.5pt] (8.8,0) -- (9.2,0);

    \draw[->, thick, bend left=45] ([yshift=0.2cm]left1.north) to ([yshift=0.2cm]left2.north);

    \draw[->] (1.8,0) -- (2.2,0);
  \end{tikzpicture}

  \bigskip

\begin{tikzpicture}

    \draw[draw=black] (2.5,0.5) rectangle (8.5,-0.5);

    \node[circle, draw, minimum size=0.4cm, inner sep = 0] (left1) at (0,0) {$1$};
    \node[circle, draw, minimum size=0.4cm, inner sep = 0] (left2) at (1,0) {$1$};

    \node[circle, draw, dashed, thick, minimum size=0.4cm, inner sep = 0] (mid1) at (3.1,0) {};
    \node[circle, draw, minimum size=0.4cm, inner sep = 0] (mid2) at (4.1,0) {$1$};
    \node[circle, draw, minimum size=0.4cm, inner sep = 0] at (4.5,0) {$1$};

    \node[circle, draw, dashed, thick, minimum size=0.4cm, inner sep = 0] at (6.5,0) {};
    \node[circle, draw, minimum size=0.4cm, inner sep = 0] at (7.5,0) {$1$};
    \node[circle, draw, minimum size=0.4cm, inner sep = 0] at (7.9,0) {$1$};

    \node[circle, draw, minimum size=0.4cm, inner sep = 0] at (10.0,0) {$1$};
    \node[circle, draw, minimum size=0.4cm, inner sep = 0] at (11.0,0) {$1$};

    \draw[->] (1.8,0) -- (2.2,0);
    \draw[->, dash pattern=on 2pt off 1.5pt] (5.3,0) -- (5.7,0);
    \draw[->, dash pattern=on 2pt off 1.5pt] (8.8,0) -- (9.2,0);

    \draw[->, thick, bend left=45] ([yshift=0.2cm]left1.north) to ([yshift=0.2cm]left2.north);

    \draw[->] (1.8,0) -- (2.2,0);
  \end{tikzpicture}
  \caption{Hidden-state representation of the mTASEP dynamics shown in Figure \ref{figure1}. Dashed arrows denote an instantaneous transition.}
  \label{fig:fig1}
\end{figure}
From the particle’s perspective, the mTASEP rules may be summarized as follows: a stronger particle entering a site occupied by a weaker one displaces the weaker one backward, while a weaker particle entering a site occupied by a stronger particle jumps back to the original site. This pair of interactions, ``backward push" and ``backward jump", characterizes the mTASEP.

A different variant of the multispecies model, the long-range push model, introduced in \cite{Lee-2024}, is defined by the rule that a particle of species $i$ jumps to the nearest site to its right that is occupied by a weaker particle $j<i$ (with $j=0$ denoting a vacancy), whereupon $i$ pushes $j$ forward. (Its single-species version is known as the drop–push model \cite{Schutz-Ramaswamy-Barma}.) The displaced particle $j$ then follows the same rule, which can trigger a cascade of displacements. Interpreted in terms of hidden states, this mechanism may be reformulated as a sequence of local updates: if a particle of species $i$ at site $x-1$ attempts to move onto a particle of species $j$ at site $x$, then $i$ jumps to site $x$ and occupies the left position, forming the hidden state $(x,x,ij)$. A position adjustment then occurs: the particle of species $\min(i,j)$ is placed on the right at $x$ and immediately jumps forward to $x+1$. If $x+1$ is already occupied, the same rule applies recursively, producing a chain of displacements. In this model, therefore, a stronger particle entering a site occupied by a weaker particle pushes the weaker one forward, while a weaker particle entering a site occupied by a stronger particle jumps forward over the stronger particle. This  pairing of ``forward push" and ``forward jump" yields another exactly solvable interacting particle system. \\
  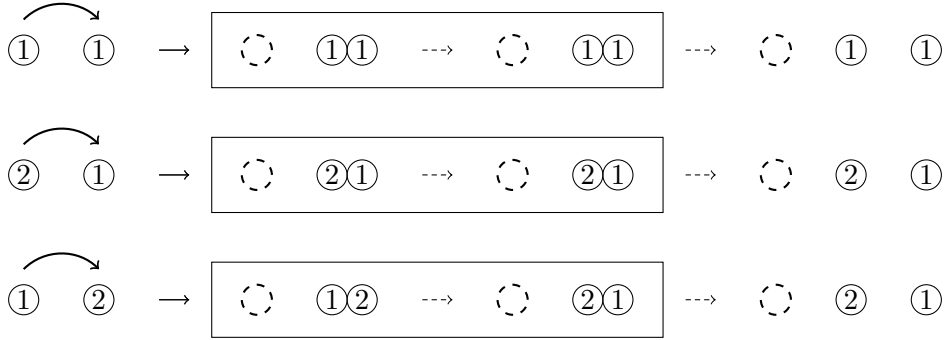
\begin{figure}[H]
  \centering
  \begin{tikzpicture}

    \draw[draw=black] (2.5,0.5) rectangle (8.5,-0.5);

    \node[circle, draw, minimum size=0.4cm, inner sep = 0] (left1) at (0,0) {$1$};
    \node[circle, draw, minimum size=0.4cm, inner sep = 0] (left2) at (1,0) {$1$};

    \node[circle, draw, dashed, thick, minimum size=0.4cm, inner sep = 0] (mid1) at (3.1,0) {};
    \node[circle, draw, minimum size=0.4cm, inner sep = 0] (mid2) at (4.1,0) {$1$};
    \node[circle, draw, minimum size=0.4cm, inner sep = 0] at (4.5,0) {$1$};

    \node[circle, draw, dashed, thick, minimum size=0.4cm, inner sep = 0] at (6.5,0) {};
    \node[circle, draw, minimum size=0.4cm, inner sep = 0] at (7.5,0) {$1$};
    \node[circle, draw, minimum size=0.4cm, inner sep = 0] at (7.9,0) {$1$};

    \node[circle, draw, dashed, thick,minimum size=0.4cm, inner sep = 0] at (10.0,0) {};
    \node[circle, draw, minimum size=0.4cm, inner sep = 0] at (11.0,0) {$1$};
    \node[circle, draw, minimum size=0.4cm, inner sep = 0] at (12.0,0) {$1$};

    \draw[->] (1.8,0) -- (2.2,0);
    \draw[->, dash pattern=on 2pt off 1.5pt] (5.3,0) -- (5.7,0);
    \draw[->, dash pattern=on 2pt off 1.5pt] (8.8,0) -- (9.2,0);

    \draw[->, thick, bend left=45] ([yshift=0.2cm]left1.north) to ([yshift=0.2cm]left2.north);

    \draw[->] (1.8,0) -- (2.2,0);
  \end{tikzpicture}

  \bigskip

  \begin{tikzpicture}

    \draw[draw=black] (2.5,0.5) rectangle (8.5,-0.5);

    \node[circle, draw, minimum size=0.4cm, inner sep = 0] (left1) at (0,0) {$2$};
    \node[circle, draw, minimum size=0.4cm, inner sep = 0] (left2) at (1,0) {$1$};

    \node[circle, draw, dashed, thick, minimum size=0.4cm, inner sep = 0] (mid1) at (3.1,0) {};
    \node[circle, draw, minimum size=0.4cm, inner sep = 0] (mid2) at (4.1,0) {$2$};
    \node[circle, draw, minimum size=0.4cm, inner sep = 0] at (4.5,0) {$1$};

    \node[circle, draw, dashed, thick, minimum size=0.4cm, inner sep = 0] at (6.5,0) {};
    \node[circle, draw, minimum size=0.4cm, inner sep = 0] at (7.5,0) {$2$};
    \node[circle, draw, minimum size=0.4cm, inner sep = 0] at (7.9,0) {$1$};

     \node[circle, draw, dashed, thick,minimum size=0.4cm, inner sep = 0] at (10.0,0) {};
    \node[circle, draw, minimum size=0.4cm, inner sep = 0] at (11.0,0) {$2$};
    \node[circle, draw, minimum size=0.4cm, inner sep = 0] at (12.0,0) {$1$};

    \draw[->] (1.8,0) -- (2.2,0);
    \draw[->, dash pattern=on 2pt off 1.5pt] (5.3,0) -- (5.7,0);
    \draw[->, dash pattern=on 2pt off 1.5pt] (8.8,0) -- (9.2,0);

    \draw[->, thick, bend left=45] ([yshift=0.2cm]left1.north) to ([yshift=0.2cm]left2.north);

    \draw[->] (1.8,0) -- (2.2,0);
  \end{tikzpicture}

  \bigskip

  \begin{tikzpicture}

    \draw[draw=black] (2.5,0.5) rectangle (8.5,-0.5);

    \node[circle, draw, minimum size=0.4cm, inner sep = 0] (left1) at (0,0) {$1$};
    \node[circle, draw, minimum size=0.4cm, inner sep = 0] (left2) at (1,0) {$2$};

    \node[circle, draw, dashed, thick, minimum size=0.4cm, inner sep = 0] (mid1) at (3.1,0) {};
    \node[circle, draw, minimum size=0.4cm, inner sep = 0] (mid2) at (4.1,0) {$1$};
    \node[circle, draw, minimum size=0.4cm, inner sep = 0] at (4.5,0) {$2$};

    \node[circle, draw, dashed, thick, minimum size=0.4cm, inner sep = 0] at (6.5,0) {};
    \node[circle, draw, minimum size=0.4cm, inner sep = 0] at (7.5,0) {$2$};
    \node[circle, draw, minimum size=0.4cm, inner sep = 0] at (7.9,0) {$1$};

     \node[circle, draw, dashed, thick,minimum size=0.4cm, inner sep = 0] at (10.0,0) {};
    \node[circle, draw, minimum size=0.4cm, inner sep = 0] at (11.0,0) {$2$};
    \node[circle, draw, minimum size=0.4cm, inner sep = 0] at (12.0,0) {$1$};

    \draw[->] (1.8,0) -- (2.2,0);
    \draw[->, dash pattern=on 2pt off 1.5pt] (5.3,0) -- (5.7,0);
    \draw[->, dash pattern=on 2pt off 1.5pt] (8.8,0) -- (9.2,0);

    \draw[->, thick, bend left=45] ([yshift=0.2cm]left1.north) to ([yshift=0.2cm]left2.north);

    \draw[->] (1.8,0) -- (2.2,0);
  \end{tikzpicture}

  \caption{Illustration of the dynamics of the long-range push model}
  \label{fig87}
\end{figure}

These two established dynamics naturally raise the question of what happens when the interaction rules are combined in a hybrid fashion. For example, one may consider a system where stronger particles push weaker ones backward while weaker particles jump forward over stronger ones, or, conversely, a system where stronger particles push weaker ones forward while weaker ones jump backward when they encounter stronger ones. As we demonstrate, the first hybrid leads precisely to the long-range swap model studied in this paper, which turns out to be integrable and structurally rich, whereas the second fails to yield a consistent multi-particle dynamics and is therefore non-integrable.

Integrability of an interacting particle system in one dimension requires two key ingredients. First, the system must satisfy two-particle reducibility: the master equation for any configuration can be reduced to the evolution equation not involving the particle interaction, supplemented by boundary conditions involving only two-particle interactions. For our model, verifying this reducibility is significantly more intricate than in the mTASEP or in the long-range push model \cite{Lee-2020,Lee-2024}. Second, the associated scattering matrix must obey the Yang–Baxter relation, ensuring the consistency of multi-particle scattering and thus the solvability of the model via the Bethe ansatz.

The remainder of the paper is organized as follows. In Section 2 we define the model precisely and present its nearest-neighbor formulation, which realizes long-range swaps through local moves. Section 3 establishes the model’s integrability, proving two-particle reducibility and Yang–Baxter consistency, and derives contour-integral formulas for the transition probabilities. Section 4 provides a discussion of structural features, followed by a summary and outlook in Section 5.

\section{Definition of the model}
\subsection{Notations}
We follow the conventions of \cite{Lee-2020,Lee-2024,Lee-Raimbekov-2025}. For an initial configuration $(Y,\nu)$ and terminal configuration $(X,\pi)$ after time $t$, let
\begin{equation*}
P_{(Y,\nu)}(X,\pi;t)
\end{equation*}
 denote the transition probability. We write $\mathbf{P}(t)$ for the infinite matrix whose columns are indexed by  initial states $(Y,\nu)$ and rows by terminal states $(X,\pi)$ at time $t$. The $((X,\pi),(Y,\nu))$-entry of $\mathbf{P}(t)$ is exactly $P_{(Y,\nu)}(X,\pi;t)$.

For fixed  $X$ and $Y$,  let $\mathbf{P}_Y(X;t)$ be the $N^n \times N^n$ submatrix  of $\mathbf{P}(t)$ obtained  by restricting to rows indexed $(X,\pi)$ and columns indexed $(Y,\nu)$, with $\pi,\nu$ ranging over all  words of length $n$ with letters in $\{1,\dots, N\}$.  Unless otherwise specified, the rows and columns of all $N^n \times N^n$ matrices in this paper  are ordered lexicographically, from $11\cdots1$ to $NN\cdots N$.

When the initial state $(Y,\nu)$ is clear from context or immaterial to the discussion, we abbreviate $\mathbf{P}_{Y}(X;t)$ as $\mathbf{P}(X;t)$, and similarly $P_{(Y,\nu)}(X,\pi;t)$ as $P_{\nu}(X,\pi;t)$ or simply  $P(X,\pi;t)$. For such matrices, derivatives are taken entrywise
\begin{equation*}
\Big(\frac{d}{dt}\textbf{P}(X;t)\Big)_{\pi,\nu} := \frac{d}{dt}P_{\nu}(X,\pi;t).
\end{equation*}
\subsection{Definition of the Dynamics}
\begin{definition}\label{614pm89}
Consider a continuous-time interacting particle system on $\mathbb{Z}$ with species labels in $\{1,\dots, N\}$. Each particle of species $i$ waits an exponential time with rate $1$, and then exchanges positions with the nearest weaker particle $j<i$ to its right (with the convention that an empty site is treated as a particle of species $0$). We call this process the \textbf{multispecies totally asymmetric simple exclusion process with long-range swap} (abbreviated as \emph{mTASEP with long-range swap}, or simply the \emph{long-range swap model}).
\end{definition}

Although phrased in terms of long-range exchanges, this dynamics can be realized entirely through local nearest-neighbor updates. Indeed, combining the backward push rule (from the mTASEP) with the forward jump rule (from the long-range push model) allows long-range swaps to be decomposed into consecutive local interactions..

Here are the descriptions of the two local rules with illustrations through hidden states for the mTASEP with long-range swap:
\begin{itemize}
  \item \textbf{Backward push by a stronger particle}: If a particle of species $i$ jumps into a site occupied by a weaker species $j$ with $j<i$, then the weaker particle is displaced back to the site vacated by $i$, producing an exchange of positions. \\
       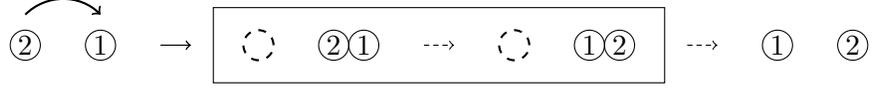
\begin{figure}[H]
\centering
\begin{tikzpicture}

    \draw[draw=black] (2.5,0.5) rectangle (8.5,-0.5);

    \node[circle, draw, minimum size=0.4cm, inner sep = 0] (left1) at (0,0) {$2$};
    \node[circle, draw, minimum size=0.4cm, inner sep = 0] (left2) at (1,0) {$1$};

    \node[circle, draw, dashed, thick, minimum size=0.4cm, inner sep = 0] (mid1) at (3.1,0) {};
    \node[circle, draw, minimum size=0.4cm, inner sep = 0] (mid2) at (4.1,0) {$2$};
    \node[circle, draw, minimum size=0.4cm, inner sep = 0] at (4.5,0) {$1$};

    \node[circle, draw, dashed, thick, minimum size=0.4cm, inner sep = 0] at (6.5,0) {};
    \node[circle, draw, minimum size=0.4cm, inner sep = 0] at (7.5,0) {$1$};
    \node[circle, draw, minimum size=0.4cm, inner sep = 0] at (7.9,0) {$2$};

    \node[circle, draw, minimum size=0.4cm, inner sep = 0] at (10.0,0) {$1$};
    \node[circle, draw, minimum size=0.4cm, inner sep = 0] at (11.0,0) {$2$};

    \draw[->] (1.8,0) -- (2.2,0);
    \draw[->, dash pattern=on 2pt off 1.5pt] (5.3,0) -- (5.7,0);
    \draw[->, dash pattern=on 2pt off 1.5pt] (8.8,0) -- (9.2,0);

    \draw[->, thick, bend left=45] ([yshift=0.2cm]left1.north) to ([yshift=0.2cm]left2.north);

    \draw[->] (1.8,0) -- (2.2,0);
  \end{tikzpicture}
\caption{Particle of species 2 pushes particle of species 1 backward.}
  \label{fig889}
\end{figure}
  \item \textbf{Forward jump by a weaker particle}: If a particle of species $i$ jumps into a site occupied by a stronger species $j$ with $j \geq i$, then $i$ immediately skips over $j$ to the next site in the same direction. (If a weaker particle is pushed leftward to a site occupied by a stronger particle, the same ``jump-over" rule applies to the left as well).\\
      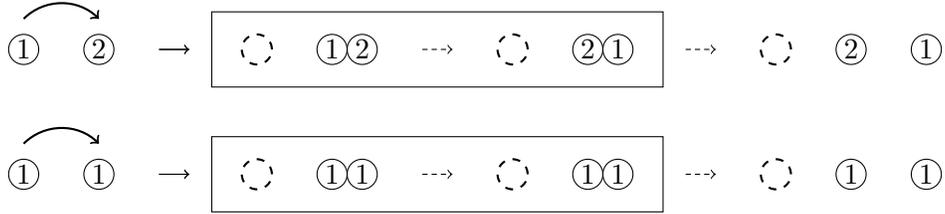
\begin{figure}[H]
\centering
 \begin{tikzpicture}

    \draw[draw=black] (2.5,0.5) rectangle (8.5,-0.5);

    \node[circle, draw, minimum size=0.4cm, inner sep = 0] (left1) at (0,0) {$1$};
    \node[circle, draw, minimum size=0.4cm, inner sep = 0] (left2) at (1,0) {$2$};

    \node[circle, draw, dashed, thick, minimum size=0.4cm, inner sep = 0] (mid1) at (3.1,0) {};
    \node[circle, draw, minimum size=0.4cm, inner sep = 0] (mid2) at (4.1,0) {$1$};
    \node[circle, draw, minimum size=0.4cm, inner sep = 0] at (4.5,0) {$2$};

    \node[circle, draw, dashed, thick, minimum size=0.4cm, inner sep = 0] at (6.5,0) {};
    \node[circle, draw, minimum size=0.4cm, inner sep = 0] at (7.5,0) {$2$};
    \node[circle, draw, minimum size=0.4cm, inner sep = 0] at (7.9,0) {$1$};

    \node[circle, draw, dashed, thick, minimum size=0.4cm, inner sep = 0] at (10.0,0) {};
    \node[circle, draw, minimum size=0.4cm, inner sep = 0] at (11.0,0) {$2$};
    \node[circle, draw, minimum size=0.4cm, inner sep = 0] at (12.0,0) {$1$};

    \draw[->] (1.8,0) -- (2.2,0);
    \draw[->, dash pattern=on 2pt off 1.5pt] (5.3,0) -- (5.7,0);
    \draw[->, dash pattern=on 2pt off 1.5pt] (8.8,0) -- (9.2,0);

    \draw[->, thick, bend left=45] ([yshift=0.2cm]left1.north) to ([yshift=0.2cm]left2.north);

    \draw[->] (1.8,0) -- (2.2,0);
  \end{tikzpicture}

  \bigskip

  \begin{tikzpicture}

    \draw[draw=black] (2.5,0.5) rectangle (8.5,-0.5);

    \node[circle, draw, minimum size=0.4cm, inner sep = 0] (left1) at (0,0) {$1$};
    \node[circle, draw, minimum size=0.4cm, inner sep = 0] (left2) at (1,0) {$1$};

    \node[circle, draw, dashed, thick, minimum size=0.4cm, inner sep = 0] (mid1) at (3.1,0) {};
    \node[circle, draw, minimum size=0.4cm, inner sep = 0] (mid2) at (4.1,0) {$1$};
    \node[circle, draw, minimum size=0.4cm, inner sep = 0] at (4.5,0) {$1$};

    \node[circle, draw, dashed, thick, minimum size=0.4cm, inner sep = 0] at (6.5,0) {};
    \node[circle, draw, minimum size=0.4cm, inner sep = 0] at (7.5,0) {$1$};
    \node[circle, draw, minimum size=0.4cm, inner sep = 0] at (7.9,0) {$1$};

    \node[circle, draw, dashed, thick,minimum size=0.4cm, inner sep = 0] at (10.0,0) {};
    \node[circle, draw, minimum size=0.4cm, inner sep = 0] at (11.0,0) {$1$};
    \node[circle, draw, minimum size=0.4cm, inner sep = 0] at (12.0,0) {$1$};

    \draw[->] (1.8,0) -- (2.2,0);
    \draw[->, dash pattern=on 2pt off 1.5pt] (5.3,0) -- (5.7,0);
    \draw[->, dash pattern=on 2pt off 1.5pt] (8.8,0) -- (9.2,0);

    \draw[->, thick, bend left=45] ([yshift=0.2cm]left1.north) to ([yshift=0.2cm]left2.north);

    \draw[->] (1.8,0) -- (2.2,0);
  \end{tikzpicture}
\caption{Particle of species $i$ jumps over particle of species $j\geq i$ to the next site in the direction of motion.}
  \label{fig88}
\end{figure}
\end{itemize}

\begin{example}\label{123pm813}
Suppose that the  particle of species $2$  jumps in the state $(x,x+1,x+2,231)$. By Definition \ref{614pm89}, the particle of species $2$ performs a long-range swap with the particle of species $1$,
\begin{equation*}
  (x,x+1,x+2,231) \rightarrow (x,x+1,x+2,132).
\end{equation*}
This transition can be decomposed into a sequence of local nearest-neighbor updates as follows:
\begin{itemize}
  \item [(1)] The particle of species $2$ first jumps forward over the particle of species $3$ at $x+1$, reaching $x+2$.
   \\
  \begin{figure}[H]
  \centering
  \begin{tikzpicture}

    \node[circle, draw, minimum size=0.4cm, inner sep = 0] (n20) at (0,0) {2};
    \node[circle, draw, minimum size=0.4cm, inner sep = 0] (n30) at (0.75,0) {3};
    \node[circle, draw, minimum size=0.4cm, inner sep = 0] (n10) at (1.5,0) {1};

    \node[circle, draw, densely dashed, minimum size=0.4cm, inner sep = 0] at (3.0,0) {};
    \node[circle, draw, minimum size=0.4cm, inner sep = 0] at (3.75,0) {2};
    \node[circle, draw, minimum size=0.4cm, inner sep = 0] at (4.15,0) {3};
    \node[circle, draw, minimum size=0.4cm, inner sep = 0] at (4.9,0) {1};

    \node[circle, draw, densely dashed, minimum size=0.4cm, inner sep = 0] at (6.4,0) {};
    \node[circle, draw, minimum size=0.4cm, inner sep = 0] at (7.15,0) {3};
    \node[circle, draw, minimum size=0.4cm, inner sep = 0] at (7.55,0) {2};
    \node[circle, draw, minimum size=0.4cm, inner sep = 0] at (8.3,0) {1};

    \node[circle, draw, densely dashed, minimum size=0.4cm, inner sep = 0] at (9.8,0) {};
    \node[circle, draw, minimum size=0.4cm, inner sep = 0] at (10.55,0) {3};
    \node[circle, draw, minimum size=0.4cm, inner sep = 0] at (11.3,0) {2};
    \node[circle, draw, minimum size=0.4cm, inner sep = 0] at (11.7,0) {1};

    \draw[->] (2.1,0) -- (2.4,0);
    \draw[->, dash pattern=on 1.5pt off 1pt] (5.5,0) -- (5.8,0);
    \draw[->, dash pattern=on 1.5pt off 1pt] (8.9,0) -- (9.2,0);

    \draw[->, thick, bend left=45] ([xshift=-1pt,yshift=4pt]n20.north)
                                   to ([xshift=1pt,yshift=4pt]n30.north);

\end{tikzpicture}
   \caption{Illustration of transition  $(x,x+1,x+2,231) \rightarrow  (x+1,x+2,x+2,321)$}
  \label{fig0811A}
\end{figure}
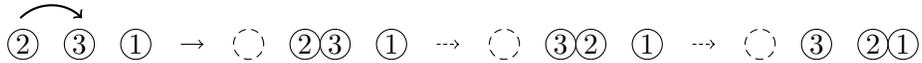

  \item [(2)] There it meets the particle of species $1$, which is pushed backward to $x+1$.
\\
   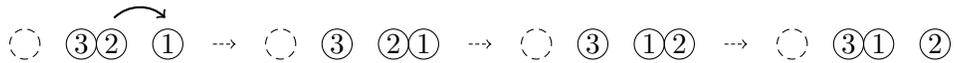
\begin{figure}[H]
  \centering
  \begin{tikzpicture}

    \node[circle, draw, densely dashed, minimum size=0.4cm, inner sep = 0] at (0,0) {};
    \node[circle, draw, minimum size=0.4cm, inner sep = 0] (n3a) at (0.75,0) {3};
    \node[circle, draw, minimum size=0.4cm, inner sep = 0] (n2a) at (1.15,0) {2};
    \node[circle, draw, minimum size=0.4cm, inner sep = 0] (n1a) at (1.9,0) {1};

    \node[circle, draw, densely dashed, minimum size=0.4cm, inner sep = 0] at (3.4,0) {};
    \node[circle, draw, minimum size=0.4cm, inner sep = 0] at (4.15,0) {3};
    \node[circle, draw, minimum size=0.4cm, inner sep = 0] at (4.9,0) {2};
    \node[circle, draw, minimum size=0.4cm, inner sep = 0] at (5.3,0) {1};

    \node[circle, draw, densely dashed, minimum size=0.4cm, inner sep = 0] at (6.8,0) {};
    \node[circle, draw, minimum size=0.4cm, inner sep = 0] at (7.55,0) {3};
    \node[circle, draw, minimum size=0.4cm, inner sep = 0] at (8.3,0) {1};
    \node[circle, draw, minimum size=0.4cm, inner sep = 0] at (8.7,0) {2};

    \node[circle, draw, densely dashed, minimum size=0.4cm, inner sep = 0] at (10.2,0) {};
    \node[circle, draw, minimum size=0.4cm, inner sep = 0] at (10.95,0) {3};
    \node[circle, draw, minimum size=0.4cm, inner sep = 0] at (11.35,0) {1};
    \node[circle, draw, minimum size=0.4cm, inner sep = 0] at (12.1,0) {2};

    \draw[->, dash pattern=on 1.5pt off 1pt] (2.5,0) -- (2.8,0);
    \draw[->, dash pattern=on 1.5pt off 1pt] (5.9,0) -- (6.2,0);
    \draw[->, dash pattern=on 1.5pt off 1pt] (9.3,0) -- (9.6,0);

    \draw[->, thick, bend left=45]
        ([yshift=4pt,xshift=1pt]n2a.north)
        to
        ([yshift=4pt,xshift=-1pt]n1a.north);

\end{tikzpicture}

   \caption{Illustration of transition  $(x+1,x+1,x+2,321) \rightarrow  (x+1,x+1,x+2,312)$}
  \label{fig0811B}
\end{figure}

  \item [(3)] The displaced particle of  species $1$ then encounters the particle of species $3$ at $x+1$ and jumps over it to $x$.
  \\
  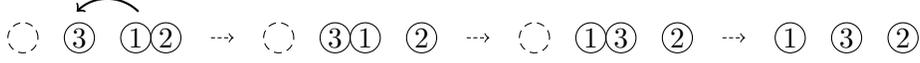
\begin{figure}[H]
  \centering
  \begin{tikzpicture}

    \node[circle, draw, densely dashed, minimum size=0.4cm, inner sep = 0] at (0,0) {};
    \node[circle, draw, minimum size=0.4cm, inner sep = 0] (n3a) at (0.75,0) {3};
    \node[circle, draw, minimum size=0.4cm, inner sep = 0] (n1a) at (1.5,0) {1};
    \node[circle, draw, minimum size=0.4cm, inner sep = 0] (n2a) at (1.9,0) {2};

    \node[circle, draw, densely dashed, minimum size=0.4cm, inner sep = 0] at (3.4,0) {};
    \node[circle, draw, minimum size=0.4cm, inner sep = 0] at (4.15,0) {3};
    \node[circle, draw, minimum size=0.4cm, inner sep = 0] at (4.55,0) {1};
    \node[circle, draw, minimum size=0.4cm, inner sep = 0] at (5.3,0) {2};

    \node[circle, draw, densely dashed, minimum size=0.4cm, inner sep = 0] at (6.8,0) {};
    \node[circle, draw, minimum size=0.4cm, inner sep = 0] at (7.55,0) {1};
    \node[circle, draw, minimum size=0.4cm, inner sep = 0] at (7.95,0) {3};
    \node[circle, draw, minimum size=0.4cm, inner sep = 0] at (8.7,0) {2};

    \node[circle, draw, minimum size=0.4cm, inner sep = 0] at (10.2,0) {1};
    \node[circle, draw, minimum size=0.4cm, inner sep = 0] at (10.95,0) {3};
    \node[circle, draw, minimum size=0.4cm, inner sep = 0] at (11.7,0) {2};

    \draw[->, dash pattern=on 1.5pt off 1pt] (2.5,0) -- (2.8,0);
    \draw[->, dash pattern=on 1.5pt off 1pt] (5.9,0) -- (6.2,0);
    \draw[->, dash pattern=on 1.5pt off 1pt] (9.3,0) -- (9.6,0);

    \draw[->, thick, bend left=-45]
        ([yshift=4pt,xshift=1pt]n1a.north)
        to
        ([yshift=4pt,xshift=-1pt]n3a.north);

\end{tikzpicture}

   \caption{Illustration of transition  $(x+1,x+2,x+2,312) \rightarrow  (x,x+1,x+2,132)$}
  \label{fig0811C}
\end{figure}
\end{itemize}
This chain of local updates realizes a long-range swap.
\end{example}

\begin{remark}
In this formulation, two particles of the same species interact as if one were stronger: the arriving particle treats the other as a stronger species as in the drop-push model.
\end{remark}

\subsection{Two-Particle Interaction Matrices}\label{113pm817}
In the case of the mTASEP, the interaction of two particles at a common site $x$ as illustrated in the boxes in Figure \ref{fig:fig1}, can be encoded in a $4 \times 4$ matrix whose rows and columns are labelled $11,12,21,22$:
\begin{equation}\label{623pm720}
 \kbordermatrix{
    & 11 & 12 & 21 & 22  \\
    11 & 1 & 0 & 0 & 0  \\
    12 & 0 & 1 & 1 & 0  \\
    21 & 0 & 0 & 0 & 0  \\
    22 & 0 & 0 & 0 & 1
  },
\end{equation}
where the $(ij,kl)$-entry represents the probability of  the transition from the hidden state $(x,x,kl)$ to $(x,x,ij)$. This interaction rule at a common site, combined with the rule for the backward move of a weaker particle is expressed in the Bethe ansatz formulation by the boundary condition
\begin{equation*}
\mathbf{U}(x,x;t) = \left(
                      \begin{array}{cccc}
                        1 & 0 & 0 & 0 \\
                        0 & 1 & 1 & 0 \\
                        0 & 0 & 0 & 0 \\
                        0 & 0 & 0 & 1 \\
                      \end{array}
                    \right)
\mathbf{U}(x,x+1;t)
\end{equation*}
as detailed  in \cite{Lee-2018}. Here, $\mathbf{U}(x_1,x_2;t) = (u_{\pi,\nu})$ denotes a $4 \times 4$ matrix where  $u_{\pi,\nu}$ is a function  $U_{\nu}(x_1,x_2,\pi;t)$ defined for all $(x_1,x_2) \in \mathbb{Z}^2$ and $t\geq 0$.

For the  mTASEP with long-range push, in which a weaker particle jumps forward from a common site,  the boundary condition is given by
\begin{equation*}
\mathbf{U}(x,x;t) = \left(
                      \begin{array}{cccc}
                        1 & 0 & 0 & 0 \\
                        0 & 0 & 0 & 0 \\
                        0 & 1 & 1 & 0 \\
                        0 & 0 & 0 & 1 \\
                      \end{array}
                    \right).
\mathbf{U}(x-1,x;t)
\end{equation*}
(See \cite{Lee-2024} for the details.)

We now derive the boundary condition for the mTASEP with long-range swap, which mixes the mTASEP rule and the long-range push rules. The local transition rules between hidden states at a common state, shown in the boxes in Figure \ref{fig889} and \ref{fig88}, can be encoded by the matrix
\begin{equation}\label{813pm810}
 \kbordermatrix{
    & 11 & 12 & 21 & 22  \\
    11 & 1 & 0 & 0 & 0  \\
    12 & 0 & 0 & 1 & 0  \\
    21 & 0 & 1 & 0 & 0  \\
    22 & 0 & 0 & 0 & 1
  }.
\end{equation}
However, this matrix splits into two parts in the boundary condition, as we now demonstrate. For example, the master equations of $P(x,x+1,12;t)$ and $P(x,x+1,21;t)$ are
\begin{equation*}
\frac{d}{dt}P(x,x+1,12;t)=P(x-1,x+1,12;t) + P(x,x+1,21;t) - 2P(x,x+1,12;t),
\end{equation*}
and
\begin{equation*}
\frac{d}{dt}P(x,x+1,21;t)=P(x-1,x+1,21;t) + P(x-1,x,12;t) - 2P(x,x+1,21;t),
\end{equation*}
respectively. Encoding the master equations of $P_{\nu}(x,x+1,\pi;t)$ for all $\pi,\nu \in \{11,12,21,22\}$ in a $4 \times 4$ matrix form yields
\begin{equation}\label{931pm810}
\begin{aligned}
& \frac{d}{dt}\mathbf{P}(x,x+1;t) \\
& =\, \mathbf{P}(x-1,x+1;t) + \left(
                                                            \begin{array}{cccc}
                                                              1 & 0 & 0 & 0 \\
                                                              0 & 0 & 0 & 0 \\
                                                              0 & 1 & 0 & 0 \\
                                                              0 & 0 & 0 & 1 \\
                                                            \end{array}
                                                          \right)\mathbf{P}(x-1,x;t) +  \left(
                                                            \begin{array}{cccc}
                                                              0 & 0 & 0 & 0 \\
                                                              0 & 0 & 1 & 0 \\
                                                              0 & 0 & 0 & 0 \\
                                                              0 & 0 & 0 & 0 \\
                                                            \end{array}
                                                          \right)\mathbf{P}(x,x+1;t) \\
 & \hspace{0.5cm} - \, 2\mathbf{P}(x,x+1;t).
\end{aligned}
\end{equation}
Similarly, for $x_1 < x_2-1$, we have
\begin{equation}\label{1005pm810}
\frac{d}{dt}\mathbf{P}(x_1,x_2;t) = \mathbf{P}(x_1-1,x_2;t) + \mathbf{P}(x_1,x_2-1;t) - 2\mathbf{P}(x_1,x_2;t).
\end{equation}
Let $\mathbf{U}(x_1,x_2;t)= (u_{\pi,\nu})$ denotes a $4 \times 4$ matrix, where each entry $u_{\pi,\nu}$ is given by a function $U_{\nu}(x_1,x_2,\pi;t)$, defined for all $(x_1,x_2) \in \mathbb{Z}^2$ and $t\geq 0$. Suppose that $\mathbf{U}(x_1,x_2;t)$ satisfies
\begin{equation}\label{350pm811}
\frac{d}{dt}\mathbf{U}(x_1,x_2;t) =  \mathbf{U}(x_1-1,x_2;t) + \mathbf{U}(x_1,x_2-1;t)  - 2\mathbf{U}(x_1,x_2;t)
\end{equation}
for all $(x_1,x_2) \in \mathbb{Z}^2$ together with  the \textit{boundary condition}
\begin{equation}\label{4201000pm726}
\mathbf{U}(x,x;t)= \left(
                                                            \begin{array}{cccc}
                                                              1 & 0 & 0 & 0 \\
                                                              0 & 0 & 0 & 0 \\
                                                              0 & 1 & 0 & 0 \\
                                                              0 & 0 & 0 & 1 \\
                                                            \end{array}
                                                          \right)\mathbf{U}(x-1,x;t)
+ \left(
                                                            \begin{array}{cccc}
                                                              0 & 0 & 0 & 0 \\
                                                              0 & 0 & 1 & 0 \\
                                                              0 & 0 & 0 & 0 \\
                                                              0 & 0 & 0 & 0 \\
                                                            \end{array}
                                                          \right)\mathbf{U}(x,x+1;t)
\end{equation}
for all $x \in \mathbb{Z}$. Then, it is straightforward that $\mathbf{U}(x_1,x_2;t)$ satisfies the master equation (\ref{931pm810}) and  (\ref{1005pm810}) for $X=(x,x+1)$ and $X=(x_1,x_2)$ with $x_1< x_2-1$, respectively.

\begin{definition}\label{1200pm812}
Let  $\mathbf{B} = (b_{\pi,\nu})$ and $\mathbf{B}' = (b'_{\pi,\nu})$ be  the $N^2 \times N^2$ matrices whose rows and columns are indexed lexicographically by words $\pi = \pi_1\pi_2$ and $\nu = \nu_1\nu_2$, respectively, from $11$ to $NN$. Their entries  are defined by
\begin{equation*}
\begin{aligned}
&b_{\pi,\nu} =
\begin{cases}
1&~\textrm{if $\pi = \nu$ and $\nu_1 = \nu_2$, or $\pi_1 = \nu_2,  \pi_2 = \nu_1$ and $\nu_1 < \nu_2$}\\[4pt]
0&~\textrm{otherwise};
\end{cases}
\\[5pt]
&b'_{\pi,\nu} =
\begin{cases}
1&~\textrm{if $\pi_1 = \nu_2,  \pi_2 = \nu_1$ and $\nu_1 > \nu_2$}\\[4pt]
0&~\textrm{otherwise},
\end{cases}
\end{aligned}
\end{equation*}
respectively.
\end{definition}
Thus,  the first and second matrices in (\ref{4201000pm726}) correspond precisely to  $\mathbf{B}$ and $\mathbf{B}'$ with $N=2$, respectively. Here, $b_{\pi,\nu}$ describes the rate of a transition from state $(x-1,x,\nu_1\nu_2)$ to $(x,x+1,\pi_1\pi_2)$ caused by the jump of particle $\nu_1$, while  $b'_{\pi,\nu}$ describes the rate of a transition from state $(x,x+1,\nu_1\nu_2)$ to $(x,x+1,\pi_1\pi_2)$ caused by the jump of particle $\nu_1$.
\subsection{Left-Right Symmetry of the Boundary Condition}\label{127pm813}
At the outset, we assumed that particles move to the right after their waiting times, and that when entering a site occupied by another particle of the same species, the incoming particle treats the resident as stronger. If instead we assume leftward motion and that an incoming particle treats a same-species resident as weaker, we obtain the same boundary condition as in (\ref{4201000pm726})

In the right-moving system, the first matrix in (\ref{4201000pm726}) corresponds to the forward jump of a weaker particle  over a stronger particle, while the second matrix corresponds to a stronger particle pushing a weaker particle backward. In the left-moving system, the roles are reversed: the second matrix corresponds to the forward jump of a weaker particle over a stronger particle, and the first matrix corresponds to a stronger particle pushing a weaker particle backward.

We also note that in the right-moving system, a particle may be forced to jump over a stronger particle to the left due to the backward push rule. This situation is naturally captured by the second matrix in (\ref{4201000pm726}).

\section{Integrability}
\subsection{Two-particle interaction reducibility}
\subsubsection{Master equations for three-particle system}\label{1152pm822}
For solvability via the Bethe ansatz, the master equation of an $n$-particle system must reduce to a combination of the evolution equation for non-interacting particles together with boundary conditions that depend only on pairwise interactions. This property is often referred to as two-particle reducibility. In the case $n=N=2$, we already observed in Section \ref{113pm817} that the master equation for two particles can be expressed as the evolution equation (\ref{350pm811}), which governs free particle motion (i.e., without interactions), together with the boundary condition (\ref{4201000pm726}).

To test whether the same principle extends to larger systems, we now examine the case $n=N=3$. Compared to the mTASEP and the long-range push model \cite{Lee-2020,Lee-2024,Lee-Raimbekov-2025}, the mTASEP with long-range swap requires more intricate analysis. We analyze each configuration systematically, distinguishing situations where particles are well separated from those where they occupy adjacent sites.
\begin{itemize}
  \item [(\rmnum{1})] First, it is clear that for $X = (x_1,x_2,x_3)$ with $x_i< x_{i+1}-1,\,(i=1,2)$,
\begin{equation}\label{422-628-pm}
\frac{d}{dt}\mathbf{P}(X;t) =  \mathbf{P}(x_1-1,x_2,x_3;t) + \mathbf{P}(x_1,x_2-1,x_3;t)+ \mathbf{P}(x_1,x_2,x_3-1;t)  - 3\mathbf{P}(X;t),
\end{equation}
where we recall that the matrices $\mathbf{P}$ are $27 \times 27$ matrices.
  \item [(\rmnum{2})] For $X=(x,x+1,x_3)$ with $x+1< x_3-1$, the master equation of $\mathbf{P}(X;t)$ takes the form
\begin{equation}\label{1133am812}
\begin{aligned}
\frac{d}{dt}\mathbf{P}(X;t)~=&~\mathbf{P}(x-1,x+1,x_3;t)  +\mathbf{P}(x,x+1,x_3-1;t) \\
& + \mathbf{M}\,\mathbf{P}(x-1,x,x_3;t) + \mathbf{M}'\,\mathbf{P}(x,x+1,x_3;t) -3\mathbf{P}(X;t)
\end{aligned}
\end{equation}
where  $\mathbf{M} = (m_{\pi,\nu})$ and $\mathbf{M}' = (m'_{\pi,\nu})$ are  $27 \times 27 $ matrices.

Following the arguments in \cite{Lee-2024,Lee-Raimbekov-2025}, we claim that
\begin{equation*}
\mathbf{M} = \mathbf{B} \otimes \mathbf{I}, \qquad
\mathbf{M}' = \mathbf{B}' \otimes \mathbf{I},
\end{equation*}
where $\mathbf{I}$ is the $3 \times 3$ identity matrix, $\otimes$ denotes the tensor (Kronecker) product of matrices, and $\mathbf{B}, \mathbf{B}'$ are the $3^2 \times 3^2$ matrices defined in Definition~\ref{1200pm812}.

Indeed, $m_{\pi,\nu}$ gives the rate of the transition
\[
(x-1,x,x_3,\nu_1\nu_2\nu_3) \;\longrightarrow\; (x,x+1,x_3,\pi_1\pi_2\pi_3),
\]
while $m'_{\pi,\nu}$ gives the rate of the transition
\[
(x,x+1,x_3,\nu_1\nu_2\nu_3) \;\longrightarrow\; (x,x+1,x_3,\pi_1\pi_2\pi_3).
\]
In the first case, the third particle neither moves nor interacts, considering the change in the position configuration, so $\pi_3=\nu_3$ and the transition rate depends only on the first two particles, exactly as in the two-particle system. Hence, using the relation
\begin{equation*}
(\mathbf{B} \otimes \mathbf{I})_{\pi_1\pi_2\pi_3,\nu_1\nu_2\nu_3} =
\begin{cases}
(\mathbf{B})_{\pi_1\pi_2,\nu_1\nu_2}, & \text{if $\pi_3 = \nu_3$,} \\\\[4pt]
0, & \text{if $\pi_3 \neq \nu_3$,}
\end{cases}
\end{equation*}
we conclude that $\mathbf{M} = \mathbf{B} \otimes \mathbf{I}$. Analogous arguments give
$\mathbf{M}' = \mathbf{B}' \otimes \mathbf{I}$.
\item[(\rmnum{3})] By the same reasoning as in the case $X=(x,x+1,x_3)$ above,
the master equation for $\mathbf{P}(X;t)$ in the case $X=(x_1,x,x+1)$ with $x_1 < x-1$ is
\begin{equation}\label{1134am812}
\begin{aligned}
\frac{d}{dt}\mathbf{P}(X;t) \;=&\; \mathbf{P}(x_1-1,x,x+1;t) + \mathbf{P}(x_1,x-1,x+1;t) \\
&\; + (\mathbf{I} \otimes \mathbf{B})\,\mathbf{P}(x_1,x-1,x;t)
     + (\mathbf{I} \otimes \mathbf{B}')\,\mathbf{P}(x_1,x,x+1;t)
     - 3\mathbf{P}(X;t).
\end{aligned}
\end{equation}
\item[(\rmnum{4})] Finally, consider the case $X=(x,x+1,x+2)$.
In this case, the master equation for $\mathbf{P}(X;t)$ takes the form
\begin{equation}\label{1135am812}
\begin{aligned}
\frac{d}{dt}\mathbf{P}(X;t) \;=&\; \mathbf{P}(x-1,x+1,x+2;t)
   + \mathbf{M}\,\mathbf{P}(x-1,x,x+2;t)
   + \mathbf{M}'\,\mathbf{P}(x-1,x,x+1;t) \\
&\; + \mathbf{M}''\,\mathbf{P}(x,x+1,x+2;t)
   - 3\mathbf{P}(X;t),
\end{aligned}
\end{equation}
for some $27\times 27$ matrices $\mathbf{M},\mathbf{M}',\mathbf{M}''$.
By the same reasoning as in case~(\rmnum{2}), we have $\mathbf{M} = \mathbf{B} \otimes \mathbf{I}$.
To determine $\mathbf{M}' = (m'_{\pi,\nu})$, note that $m'_{\pi,\nu}$ is the rate of the transition
\[
(x-1,x,x+1,\nu_1\nu_2\nu_3) \;\longrightarrow\; (x,x+1,x+2,\pi_1\pi_2\pi_3),
\]
caused by the forward jump of particle $\nu_1$ over the stronger particles $\nu_2$ and $\nu_3$.
This occurs through two consecutive forward jumps: first $\nu_1$ over $\nu_2$, and then over $\nu_3$, as illustrated in Figure~\ref{fig0811D}.
Therefore,
\begin{equation}\label{838pm814}
\mathbf{M}' = (\mathbf{I} \otimes \mathbf{B})(\mathbf{B} \otimes \mathbf{I}).
\end{equation}
\begin{figure}[H]
\centering
\begin{tikzpicture}

    \node[circle, draw, minimum size=0.4cm, inner sep = 0] (n1) at (0,0) {$\nu_1$};
    \node[circle, draw, minimum size=0.4cm, inner sep = 0] (n2) at (0.7,0) {$\nu_2$};
    \node[circle, draw, minimum size=0.4cm, inner sep = 0] at (1.4,0) {$\nu_3$};

    \draw[->, thick, bend left=40]
      ([xshift=-1pt,yshift=4pt]n1.north east) to ([xshift=1pt,yshift=4pt]n2.north west);

    \node[circle, draw, densely dashed, minimum size=0.4cm, inner sep = 0] at (2.9,0) {};
    \node[circle, draw, minimum size=0.4cm, inner sep = 0] at (3.6,0) {$\nu_1$};
    \node[circle, draw, minimum size=0.4cm, inner sep = 0] at (4.03,0) {$\nu_2$};
    \node[circle, draw, minimum size=0.4cm, inner sep = 0] at (4.7,0) {$\nu_3$};

    \node[circle, draw, densely dashed, minimum size=0.4cm, inner sep = 0] at (6.2,0) {};
    \node[circle, draw, minimum size=0.4cm, inner sep = 0] at (6.9,0) {$\nu_2$};
    \node[circle, draw, minimum size=0.4cm, inner sep = 0] at (7.33,0) {$\nu_1$};
    \node[circle, draw, minimum size=0.4cm, inner sep = 0] at (8.0,0) {$\nu_3$};

    \draw[->] (2.0,0) -- (2.3,0);
    \draw[->, dash pattern=on 1.5pt off 1pt] (5.3,0) -- (5.6,0);

\end{tikzpicture}

   \vspace{0.5cm}

  \begin{tikzpicture}

    \node[circle, draw, densely dashed, minimum size=0.4cm, inner sep = 0] at (0,0) {};
    \node[circle, draw, minimum size=0.4cm, inner sep = 0] at (0.7,0) {$\nu_2$};
    \node[circle, draw, minimum size=0.4cm, inner sep = 0] at (1.4,0) {$\nu_1$};
    \node[circle, draw, minimum size=0.4cm, inner sep = 0] at (1.83,0) {$\nu_3$};

    \node[circle, draw, densely dashed, minimum size=0.4cm, inner sep = 0] at (3.3,0) {};
    \node[circle, draw, minimum size=0.4cm, inner sep = 0] at (4.0,0) {$\nu_2$};
    \node[circle, draw, minimum size=0.4cm, inner sep = 0] at (4.7,0) {$\nu_3$};
    \node[circle, draw, minimum size=0.4cm, inner sep = 0] at (5.13,0) {$\nu_1$};

    \node[circle, draw, densely dashed, minimum size=0.4cm, inner sep = 0] at (6.6,0) {};
    \node[circle, draw, minimum size=0.4cm, inner sep = 0] at (7.3,0) {$\nu_2$};
    \node[circle, draw, minimum size=0.4cm, inner sep = 0] at (8.0,0) {$\nu_3$};
    \node[circle, draw, minimum size=0.4cm, inner sep = 0] at (8.7,0) {$\nu_1$};

    \draw[->, dash pattern=on 1.5pt off 1pt] (-0.9,0) -- (-0.6,0);
    \draw[->, dash pattern=on 1.5pt off 1pt] (2.4,0) -- (2.7,0);
    \draw[->, dash pattern=on 1.5pt off 1pt] (5.7,0) -- (6.0,0);

  \end{tikzpicture}

  \caption{This transition occurs when $\nu_1\leq \nu_2,\nu_3$.}
  \label{fig0811D}
\end{figure}
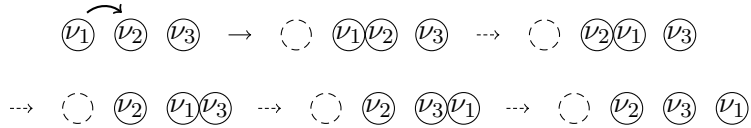
To determine $\mathbf{M}'' = (m''_{\pi,\nu})$, note that $m''_{\pi,\nu}$ is the rate of the transition
\[
(x,x+1,x+2,\nu_1\nu_2\nu_3) \;\longrightarrow\; (x,x+1,x+2,\pi_1\pi_2\pi_3),
\]
which corresponds to an interchange of particle species without altering the position configuration.
This can occur in three distinct ways:
\begin{itemize}
  \item[$\bullet$] Interchange of the first two particles, leaving the third unchanged.
  This occurs when $\nu_1 > \nu_2$ and the particle of species $\nu_1$ attempts to jump.
  It is represented by $\mathbf{B}' \otimes \mathbf{I}$.

  \item[$\bullet$] Interchange of the second and third particles, leaving the first unchanged.
  This occurs when $\nu_2 > \nu_3$ and the particle of species $\nu_2$ attempts to jump.
  It is represented by $\mathbf{I} \otimes \mathbf{B}'$.

  \item[$\bullet$] Interchange of the first and third particles, leaving the second unchanged.
  This occurs when $\nu_3 < \nu_1 \leq \nu_2$ and the particle of species $\nu_1$ attempts to jump.
  The interchange proceeds through three consecutive moves (see Example~\ref{123pm813}):
  \begin{enumerate}
    \item [(1)] $\nu_1$ jumps forward over $\nu_2$;
    \item [(2)] $\nu_1$ pushes $\nu_3$ backward;
    \item [(3)] $\nu_3$ jumps forward over $\nu_2$ to the left.
  \end{enumerate}
  This sequence is represented by
  \begin{equation}\label{1141pm822}
  \underbrace{(\mathbf{B}' \otimes \mathbf{I})}_{(3)}\;
  \underbrace{(\mathbf{I} \otimes \mathbf{B}')}_{(2)}\;
  \underbrace{(\mathbf{B} \otimes \mathbf{I})}_{(1)}.
  \end{equation}
\end{itemize}

Combining these contributions, we obtain
\begin{equation}\label{129pm813}
\mathbf{M}'' \;=\; (\mathbf{B}' \otimes \mathbf{I}) \;+\; (\mathbf{I} \otimes \mathbf{B}') \;+\;
(\mathbf{B}' \otimes \mathbf{I})(\mathbf{I} \otimes \mathbf{B}')(\mathbf{B} \otimes \mathbf{I}).
\end{equation}
 \end{itemize}
The last term in (\ref{129pm813}) can be expressed in an alternative form, as stated in the following lemma.

\begin{lemma}\label{1219am812}
Let $\mathbf{I}$ denote the $N \times N$ identity matrix, and let $\mathbf{B}, \mathbf{B}'$ be the $N^2 \times N^2$ matrices from Definition~\ref{1200pm812}.
Then, for each $N \geq 2$,
\begin{equation}\label{1224am812}
(\mathbf{B}' \otimes \mathbf{I})(\mathbf{I} \otimes \mathbf{B}')(\mathbf{B} \otimes \mathbf{I})
= (\mathbf{I} \otimes \mathbf{B})(\mathbf{B}' \otimes \mathbf{I})(\mathbf{I} \otimes \mathbf{B}').
\end{equation}
\end{lemma}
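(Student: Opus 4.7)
The cleanest route is a direct verification on basis vectors: both sides are operators on $(\mathbb{C}^N)^{\otimes 3}$, so I will evaluate each side on an arbitrary basis vector $e_i\otimes e_j\otimes e_k$ and show the outputs coincide. By linearity this gives the full identity. The preliminary step is to restate the action of $\mathbf{B}$ and $\mathbf{B}'$ on a pair $e_a\otimes e_b\in\mathbb{C}^N\otimes\mathbb{C}^N$, which follows immediately from Definition~\ref{1200pm812}: $\mathbf{B}(e_a\otimes e_b)$ equals $e_a\otimes e_b$ if $a=b$, equals $e_b\otimes e_a$ if $a<b$, and vanishes if $a>b$; whereas $\mathbf{B}'(e_a\otimes e_b)$ equals $e_b\otimes e_a$ if $a>b$ and vanishes otherwise. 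With these rules in hand, each triple-operator composition becomes a sequence of at most three case splits depending on orderings among $i,j,k$.

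For the left-hand side I would trace the composition from right to left on $e_i\otimes e_j\otimes e_k$. Applying $\mathbf{B}\otimes\mathbf{I}$ kills the vector unless $i\leq j$; the result is $e_i\otimes e_j\otimes e_k$ if $i=j$ and $e_j\otimes e_i\otimes e_k$ if $i<j$. In either case the middle entry is $i$, so the next factor $\mathbf{I}\otimes\mathbf{B}'$ is nonzero precisely when $i>k$, producing $e_{\alpha}\otimes e_k\otimes e_i$ with $\alpha=i$ or $\alpha=j$. Finally $\mathbf{B}'\otimes\mathbf{I}$ requires $\alpha>k$, which is automatic from the chain $k<i\leq\alpha$, and yields $e_k\otimes e_\alpha\otimes e_i$. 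Thus the LHS is nonzero exactly when $k<i\leq j$ and in that case equals $e_k\otimes e_i\otimes e_i$ (if $i=j$) or $e_k\otimes e_j\otimes e_i$ (if $i<j$). The right-hand side is handled the same way: $\mathbf{I}\otimes\mathbf{B}'$ acts nontrivially iff $j>k$, giving $e_i\otimes e_k\otimes e_j$; then $\mathbf{B}'\otimes\mathbf{I}$ requires $i>k$ and produces $e_k\otimes e_i\otimes e_j$; finally $\mathbf{I}\otimes\mathbf{B}$ demands $i\leq j$ and outputs $e_k\otimes e_i\otimes e_i$ when $i=j$ and $e_k\otimes e_j\otimes e_i$ when $i<j$. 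Collecting conditions gives $k<i\leq j$, with identical output.

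Comparing the two summaries completes the proof on every basis vector, and linearity finishes the argument. The only real subtlety, and hence the main potential source of error, is the asymmetric role of the diagonal case $a=b$ for $\mathbf{B}$ (which acts as the identity rather than as the swap), since this case shows up on the LHS in step one and on the RHS in step three; one must check that it produces the same output in both orderings, as it does. A secondary point to be careful about is verifying that the intermediate constraints $\{i\leq j,\ i>k\}$ on the LHS and $\{j>k,\ i>k,\ i\leq j\}$ on the RHS really describe the \emph{same} subset $\{k<i\leq j\}$ of index triples, so that no spurious extra contributions survive on either side. Once these two bookkeeping points are settled, the equality is immediate and holds for every $N\geq 2$.
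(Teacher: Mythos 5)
Your proof is correct. You verify the identity by tracing both triple products on an arbitrary basis vector $e_i\otimes e_j\otimes e_k$, showing that each side is nonzero exactly on the set $\{k<i\le j\}$ and there equals $e_k\otimes e_j\otimes e_i$; the case analysis and the reconciliation of the two constraint sets are both carried out correctly, and the argument is uniform in $N$. This differs in spirit from the paper's treatment: the paper does not perform the computation but instead appeals to the braid relation, interpreting the left- and right-hand sides as two braid-equivalent sequences of adjacent interchanges realizing the same long-range swap $\nu_1\nu_2\nu_3\to\nu_3\nu_2\nu_1$ under the condition $\nu_3<\nu_1\le\nu_2$ (it explicitly remarks that a direct matrix verification is possible, which is essentially what you supply). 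The trade-off is that the paper's argument conveys the structural reason the identity holds and connects it to the dynamics, but it leaves implicit the check that both products annihilate every basis vector outside the common support and agree on it; your computation makes precisely that check explicit, so your version is the more self-contained and rigorous of the two, at the cost of being purely computational. Your closing observations — that the diagonal case of $\mathbf{B}$ enters at different stages on the two sides yet yields the same output, and that the two constraint sets coincide — correctly identify the only places where the bookkeeping could go wrong.
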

While the identity (\ref{1224am812}) can be verified directly by matrix computation,
its validity for general $N$ follows naturally from the braid relation.
The product
\begin{equation}\label{837pm814}
(\mathbf{B}' \otimes \mathbf{I})(\mathbf{I} \otimes \mathbf{B}')(\mathbf{B} \otimes \mathbf{I})
\end{equation}
represents three consecutive interchanges of neighboring species labels in
$\nu_1\nu_2\nu_3$ with $\nu_3 < \nu_1 \leq \nu_2$, namely,
\begin{equation*}
\nu_1\nu_2\nu_3 \;\longrightarrow\; \nu_2\nu_1\nu_3
\;\longrightarrow\; \nu_2\nu_3\nu_1
\;\longrightarrow\; \nu_3\nu_2\nu_1.
\end{equation*}
By the braid relation, the same final configuration can also be obtained via
\begin{equation*}
\nu_1\nu_2\nu_3 \;\longrightarrow\; \nu_1\nu_3\nu_2
\;\longrightarrow\; \nu_3\nu_1\nu_2
\;\longrightarrow\; \nu_3\nu_2\nu_1,
\end{equation*}
which corresponds precisely to the right-hand side of (\ref{1224am812}).

\begin{lemma}\label{1159an812}
Let $\mathbf{I}$ denote the $N \times N$ identity matrix, and let $\mathbf{B}, \mathbf{B}'$ be the $N^2 \times N^2$ matrices from Definition~\ref{1200pm812}.
Then, for each $N \geq 2$,
\begin{equation}\label{547pm815}
\Big(\mathbf{I}^{\otimes 3} - (\mathbf{I} \otimes \mathbf{B})(\mathbf{B}' \otimes \mathbf{I})\Big)^{-1}
= \mathbf{I}^{\otimes 3} + (\mathbf{I} \otimes \mathbf{B})(\mathbf{B}' \otimes \mathbf{I})
\end{equation}
where
\begin{equation*}
 \mathbf{I}^{\otimes k} := \underbrace{\mathbf{I} ~\otimes~ \cdots~ \otimes~ \mathbf{I}}_{k~ \textrm{times}}.
\end{equation*}
\end{lemma}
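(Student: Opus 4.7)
The plan is to reduce the claim to a nilpotency statement. Setting $A := (\mathbf{I}\otimes \mathbf{B})(\mathbf{B}'\otimes \mathbf{I})$, the identity (\ref{547pm815}) is equivalent to
\begin{equation*}
(\mathbf{I}^{\otimes 3} - A)(\mathbf{I}^{\otimes 3} + A) = \mathbf{I}^{\otimes 3} - A^2 = \mathbf{I}^{\otimes 3},
\end{equation*}
so it suffices to show $A^2 = 0$. I will do this by tracking the action of $A$ on the standard tensor-product basis $\{e_{\nu_1}\otimes e_{\nu_2}\otimes e_{\nu_3} : \nu_i\in\{1,\dots,N\}\}$.

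First I will read off from Definition~\ref{1200pm812} how each factor acts. The operator $\mathbf{B}'$ sends $e_{\nu_1}\otimes e_{\nu_2}$ to $e_{\nu_2}\otimes e_{\nu_1}$ when $\nu_1 > \nu_2$ and annihilates it otherwise, while $\mathbf{B}$ sends $e_{\mu_1}\otimes e_{\mu_2}$ to $e_{\mu_2}\otimes e_{\mu_1}$ when $\mu_1 < \mu_2$, fixes it when $\mu_1 = \mu_2$, and annihilates it when $\mu_1 > \mu_2$. Composing these (first $\mathbf{B}'\otimes \mathbf{I}$, then $\mathbf{I}\otimes \mathbf{B}$) on $e_{\nu_1}\otimes e_{\nu_2}\otimes e_{\nu_3}$, the surviving case is exactly $\nu_1 > \nu_2$ and $\nu_1 \leq \nu_3$, and there
\begin{equation*}
A\bigl(e_{\nu_1}\otimes e_{\nu_2}\otimes e_{\nu_3}\bigr) \;=\; e_{\nu_2}\otimes e_{\nu_3}\otimes e_{\nu_1}.
\end{equation*}

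Next I will iterate $A$ on this image, with the new labels $(\mu_1,\mu_2,\mu_3)=(\nu_2,\nu_3,\nu_1)$. By the same characterization, a second application is nonzero only if $\mu_1 > \mu_2$ and $\mu_1 \leq \mu_3$, i.e.\ $\nu_2 > \nu_3$ and $\nu_2 \leq \nu_1$. Combined with the constraint $\nu_1 \leq \nu_3$ inherited from the first application, this forces the impossible chain $\nu_2 \leq \nu_1 \leq \nu_3 < \nu_2$. Hence $A^2$ kills every basis vector, so $A^2=0$ and the lemma follows.

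There is no real obstacle here; the content is purely combinatorial bookkeeping. The one subtlety worth flagging is the asymmetric inequality conventions built into Definition~\ref{1200pm812}: $\mathbf{B}'$ requires the strict inequality $\nu_1 > \nu_2$ whereas $\mathbf{B}$ allows equality $\mu_1 = \mu_2$. It is precisely this asymmetry that produces the contradictory chain $\nu_2 \leq \nu_1 \leq \nu_3 < \nu_2$ above, so the proof would fail without it.
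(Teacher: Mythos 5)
Your proof is correct and follows essentially the same route as the paper's: both reduce the claim to the nilpotency $\bigl((\mathbf{I}\otimes\mathbf{B})(\mathbf{B}'\otimes\mathbf{I})\bigr)^2=\mathbf{0}$ and verify it by case analysis of the action on the standard basis vectors $\ket{\nu_1\nu_2\nu_3}$, using the explicit swap/annihilate rules for $\mathbf{B}$ and $\mathbf{B}'$. The only cosmetic difference is that the paper annihilates the inner triple product $(\mathbf{B}'\otimes\mathbf{I})(\mathbf{I}\otimes\mathbf{B})(\mathbf{B}'\otimes\mathbf{I})$, whereas you characterize the action of the full product $A$ and iterate it; both amount to the same inequality bookkeeping.
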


\begin{proof}
It suffices to show that
\begin{equation}\label{645pm813}
(\mathbf{I} \otimes \mathbf{B})(\mathbf{B}' \otimes \mathbf{I})(\mathbf{I} \otimes \mathbf{B})(\mathbf{B}' \otimes \mathbf{I}) = \mathbf{0}.
\end{equation}
View $(\mathbf{I} \otimes \mathbf{B})$ and $(\mathbf{B}' \otimes \mathbf{I})$ as operators on the $N^3$-dimensional real vector space with standard basis
\[
\{\ket{\nu_1\nu_2\nu_3} : \nu_i \in \{1,\dots, N\}\},
\]
ordered lexicographically from $111$ to $NNN$.
We must show that for all bra vectors $\bra{\pi_1\pi_2\pi_3}$ and ket vectors $\ket{\nu_1\nu_2\nu_3}$,
\begin{equation}\label{446pm812}
\bra{\pi_1\pi_2\pi_3} (\mathbf{I} \otimes \mathbf{B})(\mathbf{B}' \otimes \mathbf{I})(\mathbf{I} \otimes \mathbf{B})(\mathbf{B}' \otimes \mathbf{I}) \ket{\nu_1\nu_2\nu_3} = 0.
\end{equation}
From Definition~\ref{1200pm812},
\begin{equation}\label{453pm812}
\begin{aligned}
 (\mathbf{B}' \otimes \mathbf{I}) \ket{\nu_1\nu_2\nu_3 } &=
\begin{cases}
\ket{\nu_2\nu_1\nu_3}, & \text{if $\nu_1 > \nu_2$,}\\[4pt]
\ket{0}, & \text{if $\nu_1 \leq \nu_2$,}
\end{cases}
\\[6pt]
(\mathbf{I} \otimes \mathbf{B}) \ket{\nu_1\nu_2\nu_3 } &=
\begin{cases}
\ket{\nu_1\nu_3\nu_2}, & \text{if $\nu_2 \leq \nu_3$,}\\[4pt]
\ket{0}, & \text{if $\nu_2 > \nu_3$,}
\end{cases}
\end{aligned}
\end{equation}
where $\ket{0}$ denotes the zero vector.
We now check that
\[
(\mathbf{B}' \otimes \mathbf{I})(\mathbf{I} \otimes \mathbf{B})(\mathbf{B}' \otimes \mathbf{I}) \ket{\nu_1\nu_2\nu_3} = \ket{0}
\]
for all $\ket{\nu_1\nu_2\nu_3}$:
\begin{itemize}
  \item If $\nu_1 \leq \nu_2$, then the first application of $(\mathbf{B}' \otimes \mathbf{I})$ yields $\ket{0}$.
  \item If $\nu_2 < \nu_1$ and $\nu_3 < \nu_1$, then
  \[
  (\mathbf{I} \otimes \mathbf{B})(\mathbf{B}' \otimes \mathbf{I}) \ket{\nu_1\nu_2\nu_3}
  = (\mathbf{I} \otimes \mathbf{B})\ket{\nu_2\nu_1\nu_3} = \ket{0}.
  \]
  \item If $\nu_2 < \nu_1 \leq \nu_3$, then
  \[
  (\mathbf{B}' \otimes \mathbf{I})(\mathbf{I} \otimes \mathbf{B})(\mathbf{B}' \otimes \mathbf{I}) \ket{\nu_1\nu_2\nu_3}
  = (\mathbf{B}' \otimes \mathbf{I})(\mathbf{I} \otimes \mathbf{B})\ket{\nu_2\nu_1\nu_3}
  = (\mathbf{B}' \otimes \mathbf{I})\ket{\nu_2\nu_3\nu_1} = \ket{0}.
  \]
\end{itemize}
This completes the proof.
\end{proof}

\begin{corollary}\label{646pm813}
Let $\mathbf{I}$ denote the $N \times N$ identity matrix, and let $\mathbf{B}, \mathbf{B}'$ be the $N^2 \times N^2$ matrices from Definition~\ref{1200pm812}.
Then, for each $N \geq 2$,
\begin{equation*}
\begin{aligned}
\mathbf{0} \;=\;& (\mathbf{I} \otimes \mathbf{B})(\mathbf{B}' \otimes \mathbf{I})(\mathbf{I} \otimes \mathbf{B})
= (\mathbf{I} \otimes \mathbf{B}')(\mathbf{B} \otimes \mathbf{I})(\mathbf{I} \otimes \mathbf{B}') \\[6pt]
=\;& (\mathbf{B}' \otimes \mathbf{I})(\mathbf{I} \otimes \mathbf{B})(\mathbf{B}' \otimes \mathbf{I})
= (\mathbf{B} \otimes \mathbf{I})(\mathbf{I} \otimes \mathbf{B}')(\mathbf{B} \otimes \mathbf{I}).
\end{aligned}
\end{equation*}
\end{corollary}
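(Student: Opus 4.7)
The plan is to verify each of the four identities by the same basis-vector case analysis used in the proof of Lemma \ref{1159an812}. Observe first that the third product, $(\mathbf{B}' \otimes \mathbf{I})(\mathbf{I} \otimes \mathbf{B})(\mathbf{B}' \otimes \mathbf{I}) = \mathbf{0}$, was in fact already established as the central computation inside that proof, so only the remaining three identities require a new argument.

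For each of the three remaining identities I would apply the explicit action formulas (\ref{453pm812}), which show that each of $\mathbf{B}\otimes\mathbf{I}$, $\mathbf{B}'\otimes\mathbf{I}$, $\mathbf{I}\otimes\mathbf{B}$, $\mathbf{I}\otimes\mathbf{B}'$ either transposes a prescribed pair of neighboring coordinates when a specific strict or non-strict inequality holds, or annihilates the vector. For a product of the form $XYX$ applied to a basis vector $\ket{\nu_1\nu_2\nu_3}$, I would track the inequality imposed at each of the three steps and show that the three conditions cannot be satisfied simultaneously, so that the composition sends every basis vector to $\ket{0}$.

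Concretely, for $(\mathbf{I}\otimes\mathbf{B})(\mathbf{B}'\otimes\mathbf{I})(\mathbf{I}\otimes\mathbf{B})\ket{\nu_1\nu_2\nu_3}$ the rightmost $\mathbf{I}\otimes\mathbf{B}$ requires $\nu_2\leq\nu_3$ and outputs $\ket{\nu_1\nu_3\nu_2}$; then $\mathbf{B}'\otimes\mathbf{I}$ requires $\nu_1>\nu_3$ and outputs $\ket{\nu_3\nu_1\nu_2}$; the outer $\mathbf{I}\otimes\mathbf{B}$ would then require $\nu_1\leq\nu_2$, contradicting the chain $\nu_2\leq\nu_3<\nu_1$ forced by the first two steps. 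The same template handles the other two: for $(\mathbf{I}\otimes\mathbf{B}')(\mathbf{B}\otimes\mathbf{I})(\mathbf{I}\otimes\mathbf{B}')$ the first two applications force $\nu_1\leq\nu_3<\nu_2$ while the third demands $\nu_1>\nu_2$; for $(\mathbf{B}\otimes\mathbf{I})(\mathbf{I}\otimes\mathbf{B}')(\mathbf{B}\otimes\mathbf{I})$ they force $\nu_3<\nu_1\leq\nu_2$ while the third demands $\nu_2\leq\nu_3$. Each time the three inequality constraints form an unsatisfiable chain on the totally ordered alphabet $\{1,\dots,N\}$.

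There is no genuine obstacle beyond careful bookkeeping: every $XYX$ pattern in which $X$ and $Y$ act on overlapping neighboring pairs with opposite strict/non-strict orientations is self-annihilating after three steps, and the only thing one must track accurately is which pair of coordinates is being transposed at each stage. Alternatively, some of the identities could be derived from one another using the braid relation of Lemma \ref{1219am812} (for instance, right-multiplying (\ref{1224am812}) by $\mathbf{B}\otimes\mathbf{I}$ and using $(\mathbf{B}\otimes\mathbf{I})(\mathbf{B}\otimes\mathbf{I})=\mathbf{0}$, which also follows directly from (\ref{453pm812})), but the direct case analysis above is the most transparent route.
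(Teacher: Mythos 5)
Your direct case analysis is correct and is exactly the argument the paper intends (its proof of Corollary~\ref{646pm813} is a one-line appeal to the method of~\eqref{645pm813}, which you have carried out explicitly for each of the three remaining products; the inequality chains you record are all accurate). One caveat about your parenthetical alternative: the claim that $(\mathbf{B}\otimes\mathbf{I})(\mathbf{B}\otimes\mathbf{I})=\mathbf{0}$ is false. By Definition~\ref{1200pm812}, $\mathbf{B}$ fixes every diagonal state $\ket{\nu\nu}$, so $\mathbf{B}^2\ket{\nu\nu}=\ket{\nu\nu}\neq\ket{0}$ (indeed $\mathbf{B}^2$ is the projection onto such states); moreover, right-multiplying~\eqref{1224am812} by $\mathbf{B}\otimes\mathbf{I}$ would produce a four-fold product on the other side rather than one of the four triple products in the corollary. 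Since that route is offered only as an aside and your primary argument is complete, the proof stands, but the aside should be deleted or corrected.
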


\begin{proof}
The result follows by an argument analogous to that used in the proof of~(\ref{645pm813}).
\end{proof}

Let $\mathbf{U}(x_1,x_2,x_3;t) = (u_{\pi,\nu})$ denote a $27 \times 27$ matrix, where each entry $u_{\pi,\nu}$ is given by the function $U_{\nu}(x_1,x_2,x_3,\pi;t)$, defined for all $(x_1,x_2,x_3) \in \mathbb{Z}^3$ and $t \geq 0$.

\begin{proposition}\label{1158am812}
Suppose that $\mathbf{U}(x_1,x_2,x_3;t)$ satisfies
\begin{equation}\label{1153am812}
\begin{aligned}
\frac{d}{dt}\mathbf{U}(x_1,x_2,x_3;t)
&= \mathbf{U}(x_1-1,x_2,x_3;t) + \mathbf{U}(x_1,x_2-1,x_3;t) + \mathbf{U}(x_1,x_2,x_3-1;t) \\
&\quad - 3\mathbf{U}(x_1,x_2,x_3;t),
\end{aligned}
\end{equation}
for all $(x_1,x_2,x_3) \in \mathbb{Z}^3$, together with the boundary conditions
\begin{eqnarray}
\mathbf{U}(x,x,x';t) &=& (\mathbf{B} \otimes \mathbf{I})\,\mathbf{U}(x-1,x,x';t) + (\mathbf{B}' \otimes \mathbf{I})\,\mathbf{U}(x,x+1,x';t), \label{1154am812}\\[6pt]
\mathbf{U}(x',x,x;t) &=& (\mathbf{I} \otimes \mathbf{B})\,\mathbf{U}(x',x-1,x;t) + (\mathbf{I} \otimes \mathbf{B}')\,\mathbf{U}(x',x,x+1;t), \label{1155am812}
\end{eqnarray}
for all $x,x' \in \mathbb{Z}$.
Then $\mathbf{U}(x_1,x_2,x_3;t)$ satisfies all of the master equations \eqref{422-628-pm}, \eqref{1133am812}, \eqref{1134am812}, and \eqref{1135am812}, for their respective position configurations $(x_1,x_2,x_3)$.
\end{proposition}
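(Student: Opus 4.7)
The plan is to start, in each of the four cases, from the free evolution equation (\ref{1153am812}) written at the specified position configuration, and then to eliminate every appearance of $\mathbf{U}$ evaluated at a coincident-position configuration by means of the boundary conditions (\ref{1154am812}) and (\ref{1155am812}). Case (\rmnum{1}) is immediate: when $x_{i+1}-x_i>1$ for $i=1,2$, the right-hand side of (\ref{1153am812}) involves only admissible configurations, so (\ref{1153am812}) literally reduces to (\ref{422-628-pm}). In case (\rmnum{2}), the free evolution at $X=(x,x+1,x_3)$ produces a single forbidden term $\mathbf{U}(x,x,x_3;t)$, and a single application of (\ref{1154am812}) rewrites it as $(\mathbf{B}\otimes\mathbf{I})\mathbf{U}(x-1,x,x_3;t)+(\mathbf{B}'\otimes\mathbf{I})\mathbf{U}(x,x+1,x_3;t)$; substituting back recovers exactly (\ref{1133am812}). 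Case (\rmnum{3}) is the mirror image, using (\ref{1155am812}) to remove the lone coincidence $\mathbf{U}(x_1,x,x;t)$.

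The main obstacle is case (\rmnum{4}) at $X=(x,x+1,x+2)$, where the right-hand side of (\ref{1153am812}) contains two simultaneous forbidden terms, $\mathbf{U}(x,x,x+2;t)$ and $\mathbf{U}(x,x+1,x+1;t)$. The first is eliminated as in case (\rmnum{2}) by (\ref{1154am812}). For the second, I would apply (\ref{1155am812}) to write
\begin{equation*}
\mathbf{U}(x,x+1,x+1;t) = (\mathbf{I}\otimes\mathbf{B})\mathbf{U}(x,x,x+1;t) + (\mathbf{I}\otimes\mathbf{B}')\mathbf{U}(x,x+1,x+2;t),
\end{equation*}
and then apply (\ref{1154am812}) to the residual forbidden term $\mathbf{U}(x,x,x+1;t)$, which reintroduces $\mathbf{U}(x,x+1,x+1;t)$ on the right-hand side. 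After collecting, this self-referential relation becomes
\begin{equation*}
\bigl(\mathbf{I}^{\otimes 3}-(\mathbf{I}\otimes\mathbf{B})(\mathbf{B}'\otimes\mathbf{I})\bigr)\mathbf{U}(x,x+1,x+1;t) = (\mathbf{I}\otimes\mathbf{B})(\mathbf{B}\otimes\mathbf{I})\mathbf{U}(x-1,x,x+1;t) + (\mathbf{I}\otimes\mathbf{B}')\mathbf{U}(x,x+1,x+2;t),
\end{equation*}
and Lemma \ref{1159an812} provides the explicit inverse $\mathbf{I}^{\otimes 3}+(\mathbf{I}\otimes\mathbf{B})(\mathbf{B}'\otimes\mathbf{I})$ of the operator on the left, giving $\mathbf{U}(x,x+1,x+1;t)$ as an explicit linear combination of $\mathbf{U}(x-1,x,x+1;t)$ and $\mathbf{U}(x,x+1,x+2;t)$.

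The last step is to match the resulting coefficients against those in (\ref{1135am812}) and (\ref{129pm813}). Expanding $\mathbf{I}^{\otimes 3}+(\mathbf{I}\otimes\mathbf{B})(\mathbf{B}'\otimes\mathbf{I})$ against the inhomogeneous term produces four summands. The one carrying the factor $(\mathbf{I}\otimes\mathbf{B})(\mathbf{B}'\otimes\mathbf{I})(\mathbf{I}\otimes\mathbf{B})$ vanishes by Corollary \ref{646pm813}, while the one carrying $(\mathbf{I}\otimes\mathbf{B})(\mathbf{B}'\otimes\mathbf{I})(\mathbf{I}\otimes\mathbf{B}')$ is rewritten as $(\mathbf{B}'\otimes\mathbf{I})(\mathbf{I}\otimes\mathbf{B}')(\mathbf{B}\otimes\mathbf{I})$ via the braid identity of Lemma \ref{1219am812}. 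Summing the surviving contributions with the rewritten $\mathbf{U}(x,x,x+2;t)$ yields coefficient $\mathbf{B}\otimes\mathbf{I}$ for $\mathbf{U}(x-1,x,x+2;t)$, coefficient $(\mathbf{I}\otimes\mathbf{B})(\mathbf{B}\otimes\mathbf{I})$ for $\mathbf{U}(x-1,x,x+1;t)$, and coefficient $(\mathbf{B}'\otimes\mathbf{I})+(\mathbf{I}\otimes\mathbf{B}')+(\mathbf{B}'\otimes\mathbf{I})(\mathbf{I}\otimes\mathbf{B}')(\mathbf{B}\otimes\mathbf{I})=\mathbf{M}''$ for $\mathbf{U}(x,x+1,x+2;t)$, which matches (\ref{1135am812}) exactly. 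The hard part is thus concentrated in case (\rmnum{4}): the recursion entangling $\mathbf{U}(x,x+1,x+1;t)$ and $\mathbf{U}(x,x,x+1;t)$ has to be closed using Lemma \ref{1159an812}, and the resulting expansion must then be collapsed using the braid identity in Lemma \ref{1219am812} together with the nilpotency relations in Corollary \ref{646pm813}; the other three cases are essentially bookkeeping built on case (\rmnum{2}).
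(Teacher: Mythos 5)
Your proposal is correct and follows essentially the same route as the paper: cases (\rmnum{1})--(\rmnum{3}) by direct substitution of the boundary conditions, and case (\rmnum{4}) by closing the self-referential relation for $\mathbf{U}(x,x+1,x+1;t)$ with the explicit inverse from Lemma \ref{1159an812}, then collapsing the expansion via the braid identity of Lemma \ref{1219am812} and the nilpotency relations of Corollary \ref{646pm813}. The resulting coefficients match \eqref{1135am812} exactly as in the paper's own argument.
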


\begin{proof}
It is immediate that $\mathbf{U}(x_1,x_2,x_3;t)$ satisfies \eqref{422-628-pm} for the corresponding configurations.
For the cases corresponding to \eqref{1133am812} and \eqref{1134am812}, the result follows directly from \eqref{1153am812} together with the boundary conditions \eqref{1154am812} and \eqref{1155am812}.

It remains to treat the configuration $(x,x+1,x+2)$. In this case, \eqref{1153am812} becomes
\begin{equation}\label{134am813}
\begin{aligned}
\frac{d}{dt}\mathbf{U}(x,x+1,x+2;t)
&= \mathbf{U}(x-1,x+1,x+2;t) + \mathbf{U}(x,x,x+2;t) \\
&\quad + \mathbf{U}(x,x+1,x+1;t) - 3\mathbf{U}(x,x+1,x+2;t).
\end{aligned}
\end{equation}
From \eqref{1154am812} we obtain
\begin{equation}\label{546pm812}
\mathbf{U}(x,x,x+2;t)
= (\mathbf{B} \otimes \mathbf{I})\,\mathbf{U}(x-1,x,x+2;t)
+ (\mathbf{B}' \otimes \mathbf{I})\,\mathbf{U}(x,x+1,x+2;t).
\end{equation}
Next, applying \eqref{1155am812} to $\mathbf{U}(x,x+1,x+1;t)$ yields
\begin{equation}\label{537pm812}
\mathbf{U}(x,x+1,x+1;t)
= (\mathbf{I} \otimes \mathbf{B})\,\mathbf{U}(x,x,x+1;t)
+ (\mathbf{I} \otimes \mathbf{B}')\,\mathbf{U}(x,x+1,x+2;t).
\end{equation}
Substituting \eqref{1154am812} into the first term on the right-hand side of \eqref{537pm812}, we get
\begin{equation}\label{538pm812}
\begin{aligned}
\mathbf{U}(x,x+1,x+1;t)
&= (\mathbf{I} \otimes \mathbf{B})\Big[(\mathbf{B} \otimes \mathbf{I})\,\mathbf{U}(x-1,x,x+1;t)
+ (\mathbf{B}' \otimes \mathbf{I})\,\mathbf{U}(x,x+1,x+1;t)\Big]\\[4pt]
&\quad + (\mathbf{I} \otimes \mathbf{B}')\,\mathbf{U}(x,x+1,x+2;t).
\end{aligned}
\end{equation}
Rearranging terms and using Lemmas~\ref{1219am812}, \ref{1159an812}, and Corollary~\ref{646pm813}, we obtain
\begin{equation}\label{547pm812}
\begin{aligned}
\mathbf{U}(x,x+1,x+1;t)
&= (\mathbf{I} \otimes \mathbf{B})(\mathbf{B} \otimes \mathbf{I})\,\mathbf{U}(x-1,x,x+1;t) \\[4pt]
&\quad + (\mathbf{I} \otimes \mathbf{B}')\,\mathbf{U}(x,x+1,x+2;t) \\[4pt]
&\quad + (\mathbf{B}'\otimes \mathbf{I})(\mathbf{I} \otimes \mathbf{B}')(\mathbf{B} \otimes \mathbf{I})\,\mathbf{U}(x,x+1,x+2;t).
\end{aligned}
\end{equation}
Finally, substituting \eqref{546pm812} and \eqref{547pm812} into \eqref{134am813} verifies that $\mathbf{U}(x,x+1,x+2;t)$ satisfies \eqref{1135am812}.
This completes the proof.
\end{proof}

The outcome of this section can be summarized as follows: for the three-particle system, all master equations reduce to the evolution equation for free motion plus boundary conditions depending only on the two-particle matrices $\mathbf{B}$ and $\mathbf{B}'$. This establishes two-particle reducibility at the three-particle level.
\subsubsection{Extension to General $n$}

We now generalize to arbitrary $n \geq 2$. For an $n$-particle system, let
\begin{equation*}
 \mathbf{U}(x_1,\dots, x_n;t) = (u_{\pi,\nu})
\end{equation*}
denote the $N^n \times N^n$ matrix, where each entry $u_{\pi,\nu}$ is the function
$U_{\nu}(x_1,\dots,x_n,\pi;t)$ defined for all $(x_1,\dots,x_n) \in \mathbb{Z}^n$ and $t \geq 0$.

For the $N^2 \times N^2$ matrices $\mathbf{B}$ and $\mathbf{B}'$ from Definition~\ref{1200pm812},
define for $i = 1,\dots,n-1$,
\begin{equation}\label{628pm815}
\begin{aligned}
\mathbf{B}_i &:= \mathbf{I}^{\otimes (i-1)} \otimes \mathbf{B} \otimes \mathbf{I}^{\otimes (n-i-1)}, \\[4pt]
\mathbf{B}'_i &:= \mathbf{I}^{\otimes (i-1)} \otimes \mathbf{B}' \otimes \mathbf{I}^{\otimes (n-i-1)},
\end{aligned}
\end{equation}
where $\mathbf{I}$ denotes the $N \times N$ identity matrix, and define $\mathbf{B}_0 = \mathbf{B}'_0 := \mathbf{I}^{\otimes n}$ which is denoted by $\mathbb{I}$.

Proposition~\ref{443pm813} below shows that for arbitrary $n$,
the master equation of the system reduces to the evolution equation for non-interacting particles together with boundary conditions depending only on pairwise interactions.
Hence the model satisfies two-particle reducibility.

\begin{proposition}\label{443pm813}
Suppose that $\mathbf{U}(x_1,\dots,x_n;t)$ satisfies
\begin{equation}\label{514pm817}
\begin{aligned}
\frac{d}{dt}\mathbf{U}(x_1,\dots,x_n;t)
&= \mathbf{U}(x_1-1,x_2,\dots,x_n;t) + \mathbf{U}(x_1,x_2-1,x_3,\dots,x_n;t) \\[4pt]
&\quad + \cdots + \mathbf{U}(x_1,x_2,\dots,x_{n-1},x_n-1;t)
- n\,\mathbf{U}(x_1,\dots,x_n;t)
\end{aligned}
\end{equation}
for all $(x_1,\dots,x_n) \in \mathbb{Z}^n$, together with the boundary conditions
\begin{equation}\label{122pm816}
\begin{aligned}
\mathbf{U}(x_1,\dots,x_{i-1},x_i,x_i,x_{i+2},\dots,x_n;t)
&= \mathbf{B}_i\,\mathbf{U}(x_1,\dots,x_{i-1},x_i-1,x_i,x_{i+2},\dots,x_n;t) \\[4pt]
&\quad + \mathbf{B}'_i\,\mathbf{U}(x_1,\dots,x_{i-1},x_i,x_i+1,x_{i+2},\dots,x_n;t),
\end{aligned}
\end{equation}
for $i = 1,\dots,n-1$.
Then $\mathbf{U}(x_1,\dots,x_n;t)$ satisfies the master equations corresponding to each configuration $(x_1,\dots,x_n)$ with $x_1 < \cdots < x_n$.
\end{proposition}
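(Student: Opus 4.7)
The plan is to extend the case-by-case analysis of Proposition~\ref{1158am812} from $n=3$ to general $n$. For a configuration $X=(x_1,\dots,x_n)$ with $x_1<\cdots<x_n$, I would define its adjacency pattern $\mathcal{A}\subseteq\{1,\dots,n-1\}$ by $i\in\mathcal{A}$ iff $x_{i+1}=x_i+1$, which partitions the particles into maximal clusters of consecutive positions. The free evolution equation~\eqref{514pm817} produces a term $\mathbf{U}(x_1,\dots,x_i-1,\dots,x_n;t)$ for every $i$, and precisely those with $i\in\mathcal{A}$ (so $x_i-1=x_{i-1}$) are ``bad'' and must be rewritten via the boundary condition~\eqref{122pm816} before we can match the true master equation for $X$.

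The key structural observation is that $\mathbf{B}_i,\mathbf{B}'_i$ commute with $\mathbf{B}_j,\mathbf{B}'_j$ whenever $|i-j|\geq 2$, since they act on disjoint tensor factors. Hence the contributions of two maximal clusters separated by a gap of at least two in $X$ factorize, and it suffices to treat a single cluster of length $m+1$, say $x_k<x_{k+1}<\cdots<x_{k+m}$ with $x_{k+j}=x_k+j$. I would then induct on $m$: the base cases $m=0$ (no interaction) and $m=1$ (single adjacent pair) are immediate from~\eqref{514pm817} and~\eqref{122pm816}, while $m=2$ is exactly what is done in Proposition~\ref{1158am812}.

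For the inductive step $m\geq 2$, applying~\eqref{122pm816} at an index $i\in\{k,\dots,k+m-1\}$ inside the cluster produces both a term $\mathbf{B}_i\,\mathbf{U}(\dots,x_i-1,x_i,\dots;t)$, in which the adjacency at $i$ is resolved, and a term $\mathbf{B}'_i\,\mathbf{U}(\dots,x_i,x_i+1,\dots;t)$, which is again a cluster configuration. Iterating through the cluster yields a self-referential linear system generalizing~\eqref{538pm812}--\eqref{547pm812}. It is solved using three algebraic inputs acting at each adjacent triple $i-1,i,i+1$: (a) the braid relation $\mathbf{B}'_i\mathbf{B}'_{i+1}\mathbf{B}_i=\mathbf{B}_{i+1}\mathbf{B}'_i\mathbf{B}'_{i+1}$ extending Lemma~\ref{1219am812}; (b) the nilpotency identities extending Corollary~\ref{646pm813}; and (c) the Neumann-type inversion $(\mathbb{I}-\mathbf{B}_{i+1}\mathbf{B}'_i)^{-1}=\mathbb{I}+\mathbf{B}_{i+1}\mathbf{B}'_i$ from Lemma~\ref{1159an812}. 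Because $\mathbf{B}_i,\mathbf{B}'_i$ act only on the $i$ and $i+1$ tensor slots, the $n=3$ proofs of these identities transfer verbatim to any adjacent triple of indices.

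The main obstacle is the combinatorial bookkeeping: for each cluster, one must verify that the rates produced by this reduction match the physical transition rates on the left-hand side of the true master equation for $X$, which in turn requires enumerating all sequences of local nearest-neighbor moves (as decomposed in Example~\ref{123pm813}) that effect each transition inside the cluster. I expect the cleanest execution to induct on $m$, resolving one adjacency at a time: after invoking~\eqref{122pm816} at the leftmost adjacency of the cluster, the residual expression contains a cluster of length $m$ to which the inductive hypothesis applies, and the local algebraic identities (a)--(c) guarantee that the remaining self-referential terms collapse to precisely the expected long-range-swap contributions.
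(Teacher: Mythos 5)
Your overall architecture coincides with the paper's: reduce to a single maximal cluster (the paper reduces to the fully consecutive case $X=(x,x+1,\dots,x+n-1)$, using exactly the disjoint-tensor-factor commutativity you cite), then resolve the adjacencies one at a time from the left by an induction that generalizes \eqref{538pm812}--\eqref{547pm812}; this is precisely the content of the paper's Lemma~\ref{1126pm816}. You have also correctly identified the three algebraic ingredients (braid relation, nilpotency, Neumann-type inversion) and the main obstacle (matching the resulting rates against the physical long-range-swap rates, which the paper handles via the identification $\mathbf{M}_{ij}=\mathbf{B}_{j-1}\cdots\mathbf{B}_{i+1}\mathbf{B}'_i\mathbf{B}'_{i+1}\cdots\mathbf{B}'_{j-1}$ and Lemma~\ref{932pm816}).

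However, there is a genuine gap in your claim that ``the $n=3$ proofs of these identities transfer verbatim to any adjacent triple of indices.'' Shifted three-slot identities are not enough to close the induction once the cluster has length $\geq 4$. At the step where the adjacency at position $k+1$ is resolved, the self-referential term is $\mathbf{B}_{k+1}\bigl(\sum_{i=1}^{k}\mathbf{M}_{i(k+1)}\bigr)=\mathbf{B}_{k+1}\mathfrak{A}_{k-1}\mathbf{B}'_{k}$ with $\mathfrak{A}_{k-1}$ the recursively built operator of Corollary~\ref{1226pm816}, and one must show that this matrix squares to zero in order to invert $\mathbb{I}-\mathbf{B}_{k+1}\mathfrak{A}_{k-1}\mathbf{B}'_{k}$ by a two-term Neumann series. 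Already for $k=2$ this amounts to
\begin{equation*}
\bigl(\mathbf{B}_{3}(\mathbf{B}'_{2}+\mathbf{B}_{2}\mathbf{B}'_{1}\mathbf{B}'_{2})\bigr)^{2}=\mathbf{0},
\end{equation*}
an identity on four tensor slots that is not a translate of $(\mathbf{B}_{2}\mathbf{B}'_{1})^{2}=\mathbf{0}$. The paper devotes Lemma~\ref{1106pm815} (nilpotency of the long products $\mathbf{B}_{k+1}\cdots\mathbf{B}_{2}\mathbf{B}'_{1}\cdots\mathbf{B}'_{k}$), Corollary~\ref{1104pm815}, Lemma~\ref{520pm814}, and Corollary~\ref{1226pm816} to establishing exactly these range-spanning identities by a separate induction that collapses the recursive middle factor; the braid relation likewise must be upgraded to the long-product form \eqref{947pm814} to identify $\mathbf{M}_{ij}$ with the correct physical swap rate. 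Your plan would stall at clusters of four or more particles without these genuinely new lemmas, so they need to be stated and proved rather than inherited from the $n=3$ case.
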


To prove Proposition~\ref{443pm813}, it suffices to verify that $\mathbf{U}(x_1,\dots,x_n;t)$ satisfies the master equation for position configurations of the form
\begin{equation*}
(x_1,\dots,x_n) =
(\underbrace{x_1,\dots,x_{i-1}}_{\text{non-neighboring}},\,
\underbrace{x_i,x_i+1,\dots,x_i+l-1}_{\text{neighboring}},\,
\underbrace{x_{i+l},\dots,x_n}_{\text{non-neighboring}})
\end{equation*}
for each $i = 1,\dots,n-1$ and $l = 2,\dots,n-i+1$, subject to
\begin{equation*}
x_k < x_{k+1}-1
\quad\text{for } k=1,\dots,i-1 \quad\text{and}\quad k = i+l-1,\dots,n-1.
\end{equation*}
Equivalently, it is enough to treat the fully consecutive case
\[
(x_1,\dots,x_n) = (x,x+1,\dots,x+n-1),
\]
for each $n = 2,3,\dots$.

Before proving Proposition~\ref{443pm813}, we collect several preliminary results that will be used in the argument. These include generalizations of the lemmas established earlier for the case $n=3$.

\paragraph*{Matrices for long-range interactions}\label{937pm814}

Consider the transition
\begin{equation}\label{815pm814}
(x,x+1,\dots, x+n-1,\nu_1\cdots \nu_n) \;\longrightarrow\; (x+1,\dots, x+n,\pi_1\cdots \pi_n),
\end{equation}
which occurs only when $\nu_1 \leq \nu_i$ for all $i=2,\dots,n$, and  $\pi_1\cdots \pi_n = \nu_2\cdots\nu_n\nu_1$.
This jump of the particle $\nu_1$ from $x$ to $x+n$ can be viewed as a sequence of forward jumps of $\nu_1$ over each stronger particle ahead.
Consequently, the rate of the transition~\eqref{815pm814} is
\begin{equation*}
\bra{\pi_1\cdots \pi_n} \mathbf{B}_{n-1}\cdots \mathbf{B}_2 \mathbf{B}_1 \ket{\nu_1\cdots \nu_n},
\end{equation*}
which generalizes~\eqref{838pm814}.

Next, consider the transition
\begin{equation}\label{819pm814}
(x,x+1,\dots, x+n-1,\nu_1\cdots \nu_n) \;\longrightarrow\; (x,x+1,\dots, x+n-1,\pi_1\cdots \pi_n),
\end{equation}
which occurs when two particles exchange their positions.
In particular, let us consider the case where the first and last particles exchange positions.
This swap occurs when $\nu_n < \nu_1 \leq \nu_i$ for all $i=2,\dots,n-1$ (see Example~\ref{123pm813} for a special case).
Such a long-range swap can be realized in the following order:
\begin{enumerate}
  \item $\nu_1$ jumps forward over each stronger particle $\nu_2,\dots,\nu_{n-1}$ sequentially;
  \item $\nu_1$ pushes $\nu_n$ backward;
  \item $\nu_n$ jumps forward (to the left) over each stronger particle $\nu_{n-1},\dots,\nu_2$ sequentially.
\end{enumerate}
The corresponding transition rate is
\begin{equation}
\bra{\pi_1\cdots \pi_n}\mathbf{B}'_{1}\cdots \mathbf{B}'_{n-2}\mathbf{B}'_{n-1}\,\mathbf{B}_{n-2}\cdots \mathbf{B}_1 \ket{\nu_1\cdots\nu_n},
\end{equation}
and
\begin{equation*}
\mathbf{B}'_{1}\cdots \mathbf{B}'_{n-2}\mathbf{B}'_{n-1}\,\mathbf{B}_{n-2}\cdots \mathbf{B}_1
\end{equation*}
generalizes~\eqref{1141pm822}.

The swap of $\nu_1$ and $\nu_n$ can also be realized in an alternative order:
\begin{enumerate}
  \item $\nu_n$ jumps forward (to the left) over the stronger particles $\nu_{n-1},\dots,\nu_1$ sequentially;
  \item $\nu_1$ then jumps forward over the stronger particles $\nu_2,\dots,\nu_{n-1}$ sequentially.
\end{enumerate}
The corresponding transition rate is
\begin{equation}
\bra{\pi_1\cdots \pi_n}\mathbf{B}_{n-1}\cdots \mathbf{B}_{2}\mathbf{B}'_{1}\mathbf{B}'_{2}\cdots \mathbf{B}'_{n-1} \ket{\nu_1\cdots\nu_n},
\end{equation}
and
\begin{equation}\label{1158pm822}
\mathbf{B}_{n-1}\cdots \mathbf{B}_{2}\mathbf{B}'_{1}\mathbf{B}'_{2}\cdots \mathbf{B}'_{n-1}
\end{equation}
generalizes the right-hand side of~\eqref{1224am812}.
Hence, we obtain the identity
\begin{equation}\label{947pm814}
\mathbf{B}'_{1}\cdots \mathbf{B}'_{n-2}\mathbf{B}'_{n-1}\,\mathbf{B}_{n-2}\cdots \mathbf{B}_1
= \mathbf{B}_{n-1}\cdots \mathbf{B}_{2}\mathbf{B}'_{1}\mathbf{B}'_{2}\cdots \mathbf{B}'_{n-1},
\end{equation}
which generalizes the identity~\eqref{1224am812}.

\paragraph*{Master equation for $X=(x,x+1,\dots, x+n-1)$}

Let $X^-_{i}$ denote the configuration obtained from
\[
X = (x,x+1,\dots, x+n-1)
\]
by shifting the first $i$ coordinates one step to the left, that is,
\begin{equation*}
X^-_{i} = (x-1,\,x,\,x+1,\,\dots,\,x+i-2,\,x+i,\,x+i+1,\dots,\,x+n-1).
\end{equation*}
Then the master equation for $\mathbf{P}(X;t)$ takes the form
\begin{equation}\label{1148pm816}
\frac{d}{dt}\mathbf{P}(X;t)
= \mathbf{P}(X^-_1;t) + \mathbf{M}_2\,\mathbf{P}(X^-_2;t) + \cdots + \mathbf{M}_n\,\mathbf{P}(X^-_n;t)
+ \mathbf{M}_0\,\mathbf{P}(X;t) - n\,\mathbf{P}(X;t).
\end{equation}

Using the same reasoning as in the derivation of~\eqref{838pm814}, we obtain
\begin{equation*}
\mathbf{M}_i = \mathbf{B}_{i-1}\cdots \mathbf{B}_2 \mathbf{B}_1, \qquad i=2,\dots,n.
\end{equation*}

The entries of the matrix $\mathbf{M}_0$ describe transitions of the form
\begin{equation}\label{1201am823}
(X, \nu_1\cdots\nu_n) \;\longrightarrow\; (X,\pi_1\cdots \pi_n),
\end{equation}
that is, swaps of two particles within the same configuration $X$.

More precisely, let $\mathbf{M}_{ij}$ with $i<j$ denote the matrix describing the interchange of the $i$th and $j$th leftmost particles.
As in~\eqref{1158pm822}, one has
\begin{equation}\label{1001pm814}
\mathbf{M}_{ij} = \mathbf{B}_{j-1}\cdots \mathbf{B}_{i+1}\,\mathbf{B}'_{i}\,\mathbf{B}'_{i+1}\cdots \mathbf{B}'_{j-1},
\end{equation}
from which it follows that
\begin{equation}\label{1216am815}
\mathbf{B}_{j}\,\mathbf{M}_{ij}\,\mathbf{B}'_{j} = \mathbf{M}_{i(j+1)}.
\end{equation}
This identity will be used later. There are $n(n-1)/2$ possible swaps of two particles, and the matrix $\mathbf{M}_0$ can be expressed as the sum of $\mathbf{M}_{ij}$,
\begin{equation*}
\mathbf{M}_0 = \sum_{i=1}^{n-1}\sum_{j=i+1}^{n}\mathbf{M}_{ij}.
\end{equation*}

\paragraph*{Lemmas}
We now provide some technical results used to prove Proposition~\ref{443pm813} for
$X = (x,x+1,\dots,x+n-1)$.
The following result is a direct generalization of~\eqref{645pm813}.

\begin{lemma}\label{1106pm815}
Let $\mathbf{B}_i$ and $\mathbf{B}'_i$ be the $N^n \times N^n$ matrices defined in~\eqref{628pm815}.
Then, for each $k=1,2,\dots,n-2$,
\begin{equation*}
\big(\mathbf{B}_{k+1}\cdots\mathbf{B}_2\mathbf{B}'_1\cdots \mathbf{B}'_k\big)^2 = \mathbf{0}.
\end{equation*}
\end{lemma}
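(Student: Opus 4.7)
My strategy is to verify $\mathcal{P}^2\ket{\nu_1\cdots\nu_n}=\ket{0}$ on every basis vector, where $\mathcal{P}:=\mathbf{B}_{k+1}\cdots\mathbf{B}_2\mathbf{B}'_1\cdots\mathbf{B}'_k$. I first extend the rules in (\ref{453pm812}) to the operators $\mathbf{B}_i,\mathbf{B}'_i$ of (\ref{628pm815}): acting on a basis vector $\ket{\nu_1\cdots\nu_n}$, the operator $\mathbf{B}_i$ transposes $\nu_i$ and $\nu_{i+1}$ precisely when $\nu_i\le\nu_{i+1}$ (returning $\ket{0}$ otherwise), while $\mathbf{B}'_i$ performs the same transposition under the strict condition $\nu_i>\nu_{i+1}$. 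Consequently $\mathcal{P}$ maps each basis vector either to $\ket{0}$ or to a single basis vector; it suffices to identify which.

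Next, I trace $\mathcal{P}\ket{\nu_1\cdots\nu_n}$ through the cascade of nearest-neighbor swaps. Read right to left, the successive actions of $\mathbf{B}'_k,\mathbf{B}'_{k-1},\ldots,\mathbf{B}'_1$ bubble $\nu_{k+1}$ leftward from position $k{+}1$ to position $1$, producing the intermediate vector $\ket{\nu_{k+1},\nu_1,\nu_2,\ldots,\nu_k,\nu_{k+2},\ldots,\nu_n}$ under the condition
\[
\text{(a)}\ \ \nu_j>\nu_{k+1}\ \text{ for all }j=1,\dots,k.
\]
Then $\mathbf{B}_2,\mathbf{B}_3,\ldots,\mathbf{B}_{k+1}$ bubble $\nu_1$ rightward from position $2$ to position $k{+}2$, yielding
\[
\mathcal{P}\ket{\nu_1\cdots\nu_n}=\ket{\nu_{k+1},\nu_2,\ldots,\nu_k,\nu_{k+2},\nu_1,\nu_{k+3},\ldots,\nu_n}
\]
under the further condition
\[
\text{(b)}\ \ \nu_1\le\nu_j\ \text{ for all }j=2,\dots,k,\ \text{and}\ \nu_1\le\nu_{k+2}.
\]
If either (a) or (b) fails, $\mathcal{P}\ket{\nu_1\cdots\nu_n}=\ket{0}$.

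Finally, I apply $\mathcal{P}$ a second time to the output. Setting $\mu_1=\nu_{k+1}$, $\mu_j=\nu_j$ for $2\le j\le k$, $\mu_{k+1}=\nu_{k+2}$, $\mu_{k+2}=\nu_1$, and $\mu_j=\nu_j$ for $j\ge k+3$, condition (a) of the second application at $j=1$ demands $\mu_1>\mu_{k+1}$, that is, $\nu_{k+1}>\nu_{k+2}$. Combining this with the first-round conditions $\nu_1>\nu_{k+1}$ (from (a)) and $\nu_1\le\nu_{k+2}$ (from (b) at $j=k+2$), one obtains
\[
\nu_{k+1}>\nu_{k+2}\ge\nu_1>\nu_{k+1},
\]
which is impossible. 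Hence $\mathcal{P}^2\ket{\nu_1\cdots\nu_n}=\ket{0}$ for every basis vector, so $\mathcal{P}^2=\mathbf{0}$.

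The main obstacle is the bookkeeping in the second step: one has to keep precise track of which species sits at which position at every intermediate stage of the cascade so that the strict/non-strict inequality conditions imposed by each $\mathbf{B}'_i$ and $\mathbf{B}_i$ can be collected without error. Once the action of $\mathcal{P}$ is pinned down, the final contradiction is a short inequality chase; the argument generalizes Lemma~\ref{1159an812}, which corresponds to the case $k=1$, $n=3$.
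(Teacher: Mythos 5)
Your proposal is correct and follows essentially the same route as the paper: both track the action of the product on basis vectors through the two bubbling cascades and derive the contradiction $\nu_{k+1}>\nu_{k+2}\ge\nu_1>\nu_{k+1}$ from the conditions imposed by the first two blocks together with the leading $\mathbf{B}'_1\cdots\mathbf{B}'_k$ of the second application (the paper phrases this as the vanishing of the middle triple product $(\mathbf{B}'_1\cdots\mathbf{B}'_k)(\mathbf{B}_{k+1}\cdots\mathbf{B}_2)(\mathbf{B}'_1\cdots\mathbf{B}'_k)$). Your statement of condition (b) is in fact slightly more careful than the paper's, which nominally includes the vacuous index $i=k+1$.
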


\begin{proof}
We will show that
\begin{equation}\label{1122pm815}
(\mathbf{B}'_1\cdots \mathbf{B}'_k)(\mathbf{B}_{k+1}\cdots\mathbf{B}_2)(\mathbf{B}'_1\cdots \mathbf{B}'_k)\ket{\nu_1\cdots \nu_n} = \ket{0}
\end{equation}
for every basis vector $\ket{\nu_1\cdots \nu_n}$.

From the definition of $\mathbf{B}'_i$, we have
\begin{equation}\label{926pm815}
(\mathbf{B}'_1\cdots \mathbf{B}'_k) \ket{\nu_1\cdots \nu_n} =
\begin{cases}
  \ket{\nu_{k+1}\nu_1\cdots \nu_k\nu_{k+2}\cdots \nu_n},
  & \text{if $\nu_{k+1} < \nu_i$ for all $i=1,\dots,k$,}\\[4pt]
  \ket{0}, & \text{otherwise.}
\end{cases}
\end{equation}
From the definition of $\mathbf{B}_i$, applying $(\mathbf{B}_{k+1}\cdots\mathbf{B}_2)$ to the resulting vector in (\ref{926pm815}) yields
\begin{equation}\label{936pm815}
\begin{aligned}
& (\mathbf{B}_{k+1}\cdots\mathbf{B}_2)
   \ket{\nu_{k+1}\nu_1\cdots\nu_k\nu_{k+2}\cdots\nu_n} \\[6pt]
& \hspace{1cm} =
\begin{cases}
\ket{\nu_{k+1}\nu_2\cdots\nu_k\nu_{k+2}\nu_1\nu_{k+3}\cdots\nu_n},
& \text{if $\nu_1 \leq \nu_i$ for all $i=2,\dots,k+2$,} \\[6pt]
\ket{0}, & \text{otherwise.}
\end{cases}
\end{aligned}
\end{equation}

Finally, applying $(\mathbf{B}'_1\cdots \mathbf{B}'_k)$ again to the resulting vector in~\eqref{936pm815}, and recalling the requirement for non-vanishing action in~\eqref{926pm815}, we see that the inequality
\[
\nu_{k+1} < \nu_1 \leq \nu_{k+2}
\]
implied by~\eqref{926pm815} and~\eqref{936pm815} contradicts the condition for
\begin{equation}\label{1013pm815}
(\mathbf{B}'_1\cdots \mathbf{B}'_k)\ket{\nu_{k+1}\nu_2\cdots \nu_k\nu_{k+2}\nu_1\nu_{k+3}\cdots\nu_n}
\end{equation}
to be nonzero, namely $\nu_{k+2} < \nu_2,\dots,\nu_{k+1}$.
Hence~\eqref{1013pm815} must equal $\ket{0}$, completing the proof.
\end{proof}

Corollary~\ref{1104pm815} below generalizes Corollary~\ref{646pm813}, and its proof follows the same reasoning as in Lemma~\ref{1106pm815}.

\begin{corollary}\label{1104pm815}
We have
\begin{equation*}
\begin{aligned}
\mathbf{0}
&= (\mathbf{B}_{k+1}\cdots\mathbf{B}_l)\,
   (\mathbf{B}'_{l-1}\cdots \mathbf{B}'_k)\,
   (\mathbf{B}_{k+1}\cdots\mathbf{B}_l) \\[6pt]
&= (\mathbf{B}'_{l-1}\cdots \mathbf{B}'_k)\,
   (\mathbf{B}_{k+1}\cdots\mathbf{B}_{l})\,
   (\mathbf{B}'_{l-1}\cdots \mathbf{B}'_k)\\[6pt]
&= (\mathbf{B}'_{k+1}\cdots\mathbf{B}'_l)\,
   (\mathbf{B}_{l-1}\cdots \mathbf{B}_k)\,
   (\mathbf{B}'_{k+1}\cdots\mathbf{B}'_l) \\[6pt]
&= (\mathbf{B}_{l-1}\cdots \mathbf{B}_k)\,
   (\mathbf{B}'_{k+1}\cdots\mathbf{B}'_{l})\,
   (\mathbf{B}_{l-1}\cdots \mathbf{B}_k),
\end{aligned}
\end{equation*}
for each $k,l$ with $l \leq k+1$.
\end{corollary}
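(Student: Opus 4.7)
The plan is to adapt the argument used in the proof of Lemma~\ref{1106pm815}: for each of the four triple products listed in the corollary, we apply it to an arbitrary basis vector $\ket{\nu_1\cdots\nu_n}$ and trace the monotonicity constraints imposed by each factor, showing that the combined constraints are inconsistent. Recall from Definition~\ref{1200pm812} that $\mathbf{B}_i$ swaps the entries at positions $i$ and $i+1$ when $\nu_i\leq\nu_{i+1}$ (and annihilates otherwise), while $\mathbf{B}'_i$ does so only when $\nu_i>\nu_{i+1}$. Iterating these local rules, the product $\mathbf{B}_{k+1}\cdots\mathbf{B}_l$ (with $l\leq k+1$) acts as a cyclic left-shift carrying the content of position $l$ to position $k+2$ and sliding the intervening entries one step left; a nonzero outcome forces $\nu_l\leq \nu_{l+1},\ldots,\nu_{k+2}$. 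Dually, $\mathbf{B}'_{l-1}\cdots\mathbf{B}'_k$ shifts content in the opposite direction on an overlapping window under a strict decreasing chain.

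For the first identity, consider the action of $(\mathbf{B}_{k+1}\cdots \mathbf{B}_l)(\mathbf{B}'_{l-1}\cdots \mathbf{B}'_k)(\mathbf{B}_{k+1}\cdots \mathbf{B}_l)$ on $\ket{\nu_1\cdots \nu_n}$. Assume each intermediate vector is nonzero. The rightmost factor moves $\nu_l$ to position $k+2$ and imposes $\nu_l\leq \nu_{l+1},\ldots,\nu_{k+2}$; in particular $\nu_{k+1}\leq\nu_{k+2}$. The middle factor $\mathbf{B}'_{l-1}\cdots \mathbf{B}'_k$ then transports the entry presently at position $k+1$ (which is $\nu_{k+2}$) back to position $l-1$, under a strictly descending chain that in particular demands $\nu_{k+2}<\nu_{k+1}$. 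The two inequalities $\nu_{k+1}\leq\nu_{k+2}$ and $\nu_{k+2}<\nu_{k+1}$ contradict one another, so the intermediate vector is already $\ket{0}$ and the triple product annihilates every basis vector. This reproduces, in longer form, the three-line clash that concluded the proof of Lemma~\ref{1106pm815}.

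The other three identities are handled by the same bookkeeping after exchanging the roles of $\mathbf{B}$ and $\mathbf{B}'$ (which flips $\leq$ with $>$) and exchanging the directions of the two cyclic shifts. In every case, the outer two factors introduce a non-strict chain of comparisons along one window, and the middle factor introduces a strict chain of the opposite direction along the adjacent window; the two chains share a comparison of the same pair of entries with inconsistent strictness, which is impossible. Since the basis spans the whole space, each triple product is the zero matrix.

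The main obstacle is purely notational: after each partial product the positions have been permuted, so one must carefully identify which original $\nu_j$ is being compared to which at each step. Once the cyclic-shift description of $\mathbf{B}_{k+1}\cdots\mathbf{B}_l$ and $\mathbf{B}'_{l-1}\cdots\mathbf{B}'_k$ is fixed (as in \eqref{926pm815}--\eqref{936pm815}), the contradiction is structurally identical to the one displayed in Lemma~\ref{1106pm815}, and no new ideas are required.
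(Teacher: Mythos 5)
Your overall strategy --- act on a basis vector $\ket{\nu_1\cdots\nu_n}$, describe each block product as a shift with an attached chain of comparisons, and derive an inconsistency --- is exactly the reasoning the paper invokes (the paper itself only says the proof ``follows the same reasoning as in Lemma~\ref{1106pm815}''). However, the specific contradiction you exhibit is not valid. You assert that the rightmost factor $\mathbf{B}_{k+1}\cdots\mathbf{B}_l$ imposes ``$\nu_l\leq\nu_{l+1},\ldots,\nu_{k+2}$; in particular $\nu_{k+1}\leq\nu_{k+2}$.'' The conditions produced by this product are only comparisons against the transported entry $\nu_l$, namely $\nu_l\leq\nu_j$ for $j=l+1,\dots,k+2$; nothing is imposed on the relative order of $\nu_{k+1}$ and $\nu_{k+2}$ when $l<k+1$. (In the boundary case $l=k+1$ the first factor does give $\nu_{k+1}\le\nu_{k+2}$, but then the middle factor's condition is $\nu_{k+2}<\nu_k$, not $\nu_{k+2}<\nu_{k+1}$, so the clash fails there too.) Consequently your claim that ``the intermediate vector is already $\ket{0}$'' after two factors is false. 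Concretely, take $l=k=2$ and $\nu_1\nu_2\nu_3\nu_4=3\,1\,3\,2$: one has $\nu_2\leq\nu_3,\nu_4$ and $\nu_4<\nu_1,\nu_3$, and indeed
$(\mathbf{B}'_{1}\mathbf{B}'_{2})(\mathbf{B}_{3}\mathbf{B}_{2})\ket{3\,1\,3\,2\cdots}=\ket{2\,3\,3\,1\cdots}\neq\ket{0}$.

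The contradiction genuinely requires all three factors, exactly as in the displayed proof of Lemma~\ref{1106pm815}. Tracking positions for the first identity: after the rightmost factor, position $k+2$ holds $\nu_l$ and the constraint includes $\nu_l\leq\nu_{k+2}$; after the middle factor, position $l-1$ holds $\nu_{k+2}$, position $l$ holds $\nu_{l-1}$, and the constraint includes $\nu_{k+2}<\nu_{l-1}$; the leftmost factor then attempts to carry $\nu_{l-1}$ from position $l$ to position $k+2$, which forces $\nu_{l-1}\leq\nu_l$ because $\nu_l$ now occupies position $k+2$. The actual impossibility is the cyclic chain $\nu_l\leq\nu_{k+2}<\nu_{l-1}\leq\nu_l$ (in the example above this is where the product dies: $\nu_1=3\not\leq\nu_2=1$). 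The same three-step bookkeeping, with $\leq$ and $>$ interchanged and the shift directions reversed, disposes of the other three products. Until the two-factor ``clash'' is replaced by this three-factor cycle, the write-up does not constitute a proof.
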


\begin{lemma}\label{520pm814}
Define matrices $\mathbf{A}_0, \mathbf{A}_1, \dots, \mathbf{A}_{n-2}$ by
\begin{equation}\label{1120pm81566}
\mathbf{A}_0 = \mathbb{I},
\qquad
\mathbf{A}_k = \mathbf{B}_{k+1}\big(\mathbb{I} + \mathbf{A}_{k-1}\big)\mathbf{B}'_{k}
\quad (k\ge 1).
\end{equation}
Then $(\mathbf{A}_k)^2=\mathbf{0}$ for each $k=1,2,\dots,n-2$.
\end{lemma}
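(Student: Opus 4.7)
The plan is to unfold the recursion $\mathbf{A}_k=\mathbf{B}_{k+1}(\mathbb{I}+\mathbf{A}_{k-1})\mathbf{B}'_k$ all the way down to $\mathbf{A}_0=\mathbb{I}$, obtaining the explicit expansion
\[
\mathbf{A}_k \;=\; \sum_{j=0}^{k-2} T_j \;+\; 2\,T_{k-1},
\qquad T_j \;:=\; \bigl(\mathbf{B}_{k+1}\mathbf{B}_k\cdots \mathbf{B}_{k+1-j}\bigr)\bigl(\mathbf{B}'_{k-j}\mathbf{B}'_{k-j+1}\cdots \mathbf{B}'_k\bigr),
\]
which is verified by a short induction on $k$; the factor of $2$ in front of $T_{k-1}$ arises because the innermost $\mathbb{I}$ (from $\mathbf{A}_0=\mathbb{I}$) contributes an extra copy of the longest product, whereas each outer iteration only adds one new shorter summand. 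Since $\mathbf{A}_k^2$ is then a linear combination of products $T_aT_b$ with $a,b\in\{0,\ldots,k-1\}$, the whole result will follow from the stronger claim that $T_aT_b=\mathbf{0}$ for every such pair.

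To prove this stronger claim, I will first compute the action of $T_j$ on an arbitrary basis vector, in the style of equations \eqref{926pm815} and \eqref{936pm815}. Applying $(\mathbf{B}'_{k-j}\cdots \mathbf{B}'_k)$ first bubbles $\nu_{k+1}$ to position $k-j$, and this requires $\nu_i>\nu_{k+1}$ for every $i\in\{k-j,\ldots,k\}$. The subsequent $(\mathbf{B}_{k+1}\cdots \mathbf{B}_{k+1-j})$ then bubbles $\nu_{k-j}$ from position $k-j+1$ all the way to position $k+2$, requiring $\nu_{k-j}\le \nu_i$ for every $i\in\{k-j+1,\ldots,k,k+2\}$. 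Under these conditions $T_j\ket{\nu_1\cdots\nu_n}$ agrees with $\ket{\nu_1\cdots\nu_n}$ outside positions $k-j$, $k+1$, $k+2$, where the triple of labels $(\nu_{k-j},\nu_{k+1},\nu_{k+2})$ is cyclically rotated to $(\nu_{k+1},\nu_{k+2},\nu_{k-j})$; otherwise the image is $\ket{0}$.

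With this description in hand, I will verify $T_aT_b\ket{\nu_1\cdots\nu_n}=\ket{0}$ by a case analysis on $a$ versus $b$. The first application of $T_b$ forces the chain $\nu_{k+1}<\nu_{k-b}\le \nu_{k+2}$ and places $\nu_{k+2}$ at position $k+1$ of the intermediate vector. If $a\ge b$, then $k-b$ lies in the range $\{k-a,\ldots,k\}$, so the condition for $T_a$ at that position demands that the intermediate entry $\nu_{k+1}$ exceed the intermediate $\nu_{k+1}$-slot $\nu_{k+2}$, i.e.\ $\nu_{k+1}>\nu_{k+2}$, contradicting the first chain. If $a<b$, then the analogous condition at the index $k-a$ gives $\nu_{k-a}>\nu_{k+2}$, while the weak-inequality condition at the new position $k+2$ (now occupied by $\nu_{k-b}$) gives $\nu_{k-a}\le \nu_{k-b}$; combined with $\nu_{k-b}\le \nu_{k+2}$ this yields $\nu_{k-a}>\nu_{k-a}$, again a contradiction. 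The main obstacle is the bookkeeping of these interwoven inequality chains, in particular remembering that the conditions for the second application must be read off the \emph{permuted} vector rather than the original. Degenerate cases in which some $\nu_i$ coincide present no additional difficulty, because $\mathbf{B}$ acts as the identity on a diagonal pair $\ket{\nu\nu}$ while $\mathbf{B}'$ annihilates it, so the strict/weak inequality structure producing the contradictions above is preserved verbatim.
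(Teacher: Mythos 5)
Your proof is correct, but it takes a genuinely different route from the paper's. The paper writes $\mathbf{A}_k=\mathbf{B}_{k+1}\mathbf{B}'_k+\mathbf{B}_{k+1}\mathbf{A}_{k-1}\mathbf{B}'_k$, kills three of the four terms of the square using Corollary~\ref{1104pm815}, and then recursively collapses the surviving term $(\mathbf{B}_{k+1}\mathbf{A}_{k-1}\mathbf{B}'_k)^2$ down to $\big(\mathbf{B}_{k+1}\cdots\mathbf{B}_2\,\mathbf{B}'_1\cdots\mathbf{B}'_k\big)^2$, which vanishes by Lemma~\ref{1106pm815}. You instead unfold the recursion completely into the sum of nested products $T_j$ (your expansion, including the factor $2$ on the longest term, checks out by induction) and prove the strictly stronger statement that $T_aT_b=\mathbf{0}$ for \emph{every} pair $a,b\in\{0,\dots,k-1\}$, by computing the action of each $T_j$ on basis vectors and exhibiting incompatible inequality chains. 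I verified your description of $T_j$ (the cyclic rotation of the labels at positions $k-j$, $k+1$, $k+2$ under the conditions $\nu_{k+1}<\nu_i$ for $i\in\{k-j,\dots,k\}$ and $\nu_{k-j}\le\nu_i$ for $i\in\{k-j+1,\dots,k\}\cup\{k+2\}$) and both branches of the case analysis; the contradictions are sound, and the equal-label degeneracies are handled correctly since $\mathbf{B}$ fixes $\ket{\nu\nu}$ while $\mathbf{B}'$ annihilates it. Your claim $T_aT_b=\mathbf{0}$ subsumes Lemma~\ref{1106pm815} (the case $a=b=k-1$) and the instances of Corollary~\ref{1104pm815} that the paper invokes, so your argument is self-contained where the paper's reuses earlier machinery; the trade-off is the heavier bookkeeping of reading the second set of conditions off the permuted intermediate vector, which you carry out correctly.
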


\begin{proof}
Expanding $(\mathbf{A}_k)^2$ gives
\[
(\mathbf{A}_k)^2
= (\mathbf{B}_{k+1}\mathbf{B}'_{k})^2
  + (\mathbf{B}_{k+1}\mathbf{B}'_{k})(\mathbf{B}_{k+1}\mathbf{A}_{k-1}\mathbf{B}'_{k})
  + (\mathbf{B}_{k+1}\mathbf{A}_{k-1}\mathbf{B}'_{k})(\mathbf{B}_{k+1}\mathbf{B}'_{k})
  + (\mathbf{B}_{k+1}\mathbf{A}_{k-1}\mathbf{B}'_{k})^2.
\]
By Corollary~\ref{1104pm815} with $l=k+1$, we have
\[
\mathbf{B}_{k+1}\mathbf{B}'_{k}\mathbf{B}_{k+1}=\mathbf{0}
\quad\text{and}\quad
\mathbf{B}'_{k}\mathbf{B}_{k+1}\mathbf{B}'_{k}=\mathbf{0},
\]
which imply
\[
(\mathbf{B}_{k+1}\mathbf{B}'_{k})^2=\mathbf{0},\qquad
(\mathbf{B}_{k+1}\mathbf{B}'_{k})(\mathbf{B}_{k+1}\mathbf{A}_{k-1}\mathbf{B}'_{k})=\mathbf{0},\qquad
(\mathbf{B}_{k+1}\mathbf{A}_{k-1}\mathbf{B}'_{k})(\mathbf{B}_{k+1}\mathbf{B}'_{k})=\mathbf{0}.
\]
Thus it remains to show $(\mathbf{B}_{k+1}\mathbf{A}_{k-1}\mathbf{B}'_{k})^2=\mathbf{0}$.

Using the recursion \eqref{1120pm81566},
\begin{equation}\label{eq:Ak-1-expansion}
\mathbf{B}_{k+1}\mathbf{A}_{k-1}\mathbf{B}'_{k}
= \mathbf{B}_{k+1}\big(\mathbf{B}_{k}(\mathbb{I}+\mathbf{A}_{k-2})\mathbf{B}'_{k-1}\big)\mathbf{B}'_{k}
= \underbrace{\mathbf{B}_{k+1}\mathbf{B}_{k}\mathbf{B}'_{k-1}\mathbf{B}'_{k}}_{=:X}
  + \underbrace{\mathbf{B}_{k+1}\mathbf{B}_{k}\mathbf{A}_{k-2}\mathbf{B}'_{k-1}\mathbf{B}'_{k}}_{=:Y}.
\end{equation}
Squaring and applying Corollary~\ref{1104pm815} (now with $l=k$) yields $X^2=XY=YX=\mathbf{0}$, so
\[
(\mathbf{B}_{k+1}\mathbf{A}_{k-1}\mathbf{B}'_{k})^2
= Y^2
= \big(\mathbf{B}_{k+1}\mathbf{B}_{k}\mathbf{A}_{k-2}\mathbf{B}'_{k-1}\mathbf{B}'_{k}\big)^2.
\]
Repeating this reduction step \(k-1\) times gives
\[
(\mathbf{B}_{k+1}\mathbf{A}_{k-1}\mathbf{B}'_{k})^2
= \big(\mathbf{B}_{k+1}\mathbf{B}_{k}\cdots \mathbf{B}_{2}\,\mathbf{B}'_{1}\cdots \mathbf{B}'_{k}\big)^2,
\]
which is the zero matrix by Lemma~\ref{1106pm815}. Hence $(\mathbf{A}_k)^2=\mathbf{0}$.
\end{proof}

\begin{corollary}\label{1226pm816}
Define matrices $\mathfrak{A}_0,\mathfrak{A}_1,\dots,\mathfrak{A}_{n-2}$ by
\begin{equation}\label{1120pm815}
\mathfrak{A}_0=\mathbb{I},\qquad\mathfrak{A}_k=\big(\mathbb{I}-\mathbf{B}_{k+1}\mathfrak{A}_{k-1}\mathbf{B}'_{k}\big)^{-1} \quad (k\ge 1).
\end{equation}
Then, for each $k=1,2,\dots,n-2$,
\begin{equation}\label{1018pm816}
\mathfrak{A}_k=\mathbb{I}+\mathbf{B}_{k+1}\mathfrak{A}_{k-1}\mathbf{B}'_{k}.
\end{equation}
\end{corollary}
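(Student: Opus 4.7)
My plan is to observe that the identity (\ref{1018pm816}) is equivalent to the nilpotency statement $(\mathbf{B}_{k+1}\mathfrak{A}_{k-1}\mathbf{B}'_k)^2 = \mathbf{0}$, because for any matrix $M$ satisfying $M^2=\mathbf{0}$ one has $(\mathbb{I}-M)(\mathbb{I}+M)=\mathbb{I}-M^2=\mathbb{I}$, hence $(\mathbb{I}-M)^{-1}=\mathbb{I}+M$. I would then prove this nilpotency by induction on $k$, leveraging the telescoping strategy already developed in the proof of Lemma~\ref{520pm814}.

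For the base case $k=1$, the relevant matrix is $\mathbf{B}_2\mathfrak{A}_0\mathbf{B}'_1 = \mathbf{B}_2\mathbf{B}'_1$, and $(\mathbf{B}_2\mathbf{B}'_1)^2 = \mathbf{B}_2(\mathbf{B}'_1\mathbf{B}_2\mathbf{B}'_1) = \mathbf{0}$ by Corollary~\ref{1104pm815}. For the inductive step I would use the hypothesis $\mathfrak{A}_{k-1} = \mathbb{I} + \mathbf{B}_k\mathfrak{A}_{k-2}\mathbf{B}'_{k-1}$ to decompose
\begin{equation*}
\mathbf{B}_{k+1}\mathfrak{A}_{k-1}\mathbf{B}'_k \;=\; \underbrace{\mathbf{B}_{k+1}\mathbf{B}'_k}_{=:U} \;+\; \underbrace{\mathbf{B}_{k+1}\mathbf{B}_k\,\mathfrak{A}_{k-2}\,\mathbf{B}'_{k-1}\mathbf{B}'_k}_{=:V},
\end{equation*}
and then invoke Corollary~\ref{1104pm815} with $l=k+1$ (supplying $\mathbf{B}_{k+1}\mathbf{B}'_k\mathbf{B}_{k+1}=\mathbf{0}$ and $\mathbf{B}'_k\mathbf{B}_{k+1}\mathbf{B}'_k=\mathbf{0}$) to annihilate $U^2$, $UV$, and $VU$, reducing $(U+V)^2$ to $V^2$.

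Continuing in the same spirit, I would expand $\mathfrak{A}_{k-2}$ via the inductive hypothesis once more, apply Corollary~\ref{1104pm815} now with $l=k$ to eliminate the resulting cross terms, and iterate. Each step strips the $\mathbb{I}$-contribution from the inner $\mathfrak{A}$ while absorbing one more $\mathbf{B}$ factor on the left and one more $\mathbf{B}'$ factor on the right. After $k-1$ such reductions the square collapses to
\begin{equation*}
\bigl(\mathbf{B}_{k+1}\mathbf{B}_k\cdots\mathbf{B}_2\,\mathfrak{A}_0\,\mathbf{B}'_1\cdots\mathbf{B}'_k\bigr)^2
= \bigl(\mathbf{B}_{k+1}\mathbf{B}_k\cdots\mathbf{B}_2\mathbf{B}'_1\cdots\mathbf{B}'_k\bigr)^2,
\end{equation*}
where the equality uses $\mathfrak{A}_0=\mathbb{I}$, and Lemma~\ref{1106pm815} provides the final vanishing. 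The principal obstacle is the bookkeeping across the $k-1$ iterations---in particular, tracking that the correct value of $l$ is used in Corollary~\ref{1104pm815} at each step so that the requisite sandwich identity is available---but this proceeds exactly as in the proof of Lemma~\ref{520pm814}, with the matrix $\mathbf{B}_{k+1}\mathfrak{A}_{k-1}\mathbf{B}'_k$ playing the role of $\mathbf{A}_k$.
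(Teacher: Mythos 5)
Your proposal is correct and follows essentially the same route as the paper's own proof: reduce the inverse identity to the nilpotency $(\mathbf{B}_{k+1}\mathfrak{A}_{k-1}\mathbf{B}'_k)^2=\mathbf{0}$, establish the base case via Corollary~\ref{1104pm815}, and in the inductive step expand $\mathfrak{A}_{k-1}$ by the induction hypothesis, kill the cross terms with Corollary~\ref{1104pm815}, and telescope down to $\big(\mathbf{B}_{k+1}\cdots\mathbf{B}_2\mathbf{B}'_1\cdots\mathbf{B}'_k\big)^2=\mathbf{0}$ from Lemma~\ref{1106pm815}. The decomposition into $U+V$ and the iteration mirroring Lemma~\ref{520pm814} are exactly the paper's argument.
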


\begin{proof}
It suffices to show that, for every $k=1,\dots,n-2$,
\[
\big(\mathbf{B}_{k+1}\mathfrak{A}_{k-1}\mathbf{B}'_{k}\big)^2=\mathbf 0.
\]
Let $\mathbf X_k:=\mathbf{B}_{k+1}\mathfrak{A}_{k-1}\mathbf{B}'_{k}$. We prove $\mathbf X_k^2=\mathbf 0$ by induction on $k$.

\emph{Base case $k=1$.} Here $\mathfrak{A}_0=\mathbb{I}$, so $\mathbf X_1=\mathbf{B}_{2}\mathbf{B}'_{1}$.
By Corollary~\ref{1104pm815}, we have
\((\mathbf{B}_{2}\mathbf{B}'_{1})^2=\mathbf 0\).

\emph{Induction step.} Assume $\mathbf X_{k-1}^2=\mathbf 0$ for some $k\ge2$.
By the induction hypothesis and \eqref{1018pm816} at level $k-1$, we may write
\[
\mathfrak{A}_{k-1}=\mathbb{I}+\mathbf{B}_{k}\mathfrak{A}_{k-2}\mathbf{B}'_{k-1}.
\]
Hence
\[
\mathbf X_k
= \mathbf{B}_{k+1}\mathfrak{A}_{k-1}\mathbf{B}'_{k}
= \underbrace{\mathbf{B}_{k+1}\mathbf{B}'_{k}}_{=:U}
  + \underbrace{\mathbf{B}_{k+1}\mathbf{B}_{k}\mathfrak{A}_{k-2}\mathbf{B}'_{k-1}\mathbf{B}'_{k}}_{=:V}.
\]
Expanding $\mathbf X_k^2=(U+V)^2$ gives $U^2+UV+VU+V^2$.
By Corollary~\ref{1104pm815}, the terms $U^2$, $UV$, and $VU$ vanish.
Thus
\[
\mathbf X_k^2 = V^2
= \big(\mathbf{B}_{k+1}\mathbf{B}_{k}\mathfrak{A}_{k-2}\mathbf{B}'_{k-1}\mathbf{B}'_{k}\big)^2.
\]
Repeating this reduction (as in the proof of Lemma~\ref{520pm814}) collapses the middle $\mathfrak{A}_{\cdot}$ recursively until
\[
\mathbf X_k^2
= \big(\mathbf{B}_{k+1}\mathbf{B}_{k}\cdots \mathbf{B}_{2}\,\mathbf{B}'_{1}\cdots \mathbf{B}'_{k}\big)^2,
\]
which equals $\mathbf 0$ by Lemma~\ref{1106pm815}.
This completes  the proof.
\end{proof}

\begin{lemma}\label{932pm816}
With $\mathfrak{A}_k$ as in Corollary~\ref{1226pm816} and $\mathbf{M}_{ij}$ as in~\eqref{1001pm814},
one has, for each $k=1,\dots,n-1$,
\[
\mathfrak{A}_{k-1}\mathbf{B}'_{k}
= \mathbf{M}_{1(k+1)} + \cdots + \mathbf{M}_{k(k+1)}.
\]
\end{lemma}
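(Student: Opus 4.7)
The plan is to prove the identity by induction on $k$, using the explicit (non-inverted) form of $\mathfrak{A}_{k-1}$ supplied by Corollary~\ref{1226pm816} together with the shift identity~\eqref{1216am815}.

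For the base case $k=1$, observe that $\mathfrak{A}_0 = \mathbb{I}$, so $\mathfrak{A}_0 \mathbf{B}'_1 = \mathbf{B}'_1$. By the definition~\eqref{1001pm814}, the product defining $\mathbf{M}_{12}$ has empty $\mathbf{B}$- and $\mathbf{B}'$-factors apart from $\mathbf{B}'_1$ itself, so $\mathbf{M}_{12} = \mathbf{B}'_1$, matching the right-hand side.

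For the inductive step, I assume
\begin{equation*}
\mathfrak{A}_{k-2}\mathbf{B}'_{k-1} = \mathbf{M}_{1k} + \mathbf{M}_{2k} + \cdots + \mathbf{M}_{(k-1)k}
\end{equation*}
and compute $\mathfrak{A}_{k-1}\mathbf{B}'_k$ using~\eqref{1018pm816}, namely $\mathfrak{A}_{k-1} = \mathbb{I} + \mathbf{B}_k \mathfrak{A}_{k-2}\mathbf{B}'_{k-1}$. This yields
\begin{equation*}
\mathfrak{A}_{k-1}\mathbf{B}'_k = \mathbf{B}'_k + \mathbf{B}_k \bigl(\mathfrak{A}_{k-2}\mathbf{B}'_{k-1}\bigr)\mathbf{B}'_k = \mathbf{B}'_k + \sum_{i=1}^{k-1} \mathbf{B}_k\,\mathbf{M}_{ik}\,\mathbf{B}'_k .
\end{equation*}
Applying the shift identity~\eqref{1216am815} term-by-term gives $\mathbf{B}_k\,\mathbf{M}_{ik}\,\mathbf{B}'_k = \mathbf{M}_{i(k+1)}$ for each $i=1,\dots,k-1$, while the stray term $\mathbf{B}'_k$ equals $\mathbf{M}_{k(k+1)}$ (again by~\eqref{1001pm814}, where the factor is empty). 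Summing these contributions yields exactly $\mathbf{M}_{1(k+1)} + \cdots + \mathbf{M}_{k(k+1)}$, completing the induction.

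The argument is essentially bookkeeping once the right tools are in hand; the only subtlety is that it relies on having the explicit geometric-series form~\eqref{1018pm816} rather than the implicit inverse~\eqref{1120pm815}, so the proof would not go through without Corollary~\ref{1226pm816} (which in turn rests on the nilpotency results established via Lemma~\ref{1106pm815} and Corollary~\ref{1104pm815}). I do not anticipate any genuine obstacle beyond verifying that the corner term $\mathbf{B}'_k$ is correctly identified with $\mathbf{M}_{k(k+1)}$, which is immediate from the convention on empty products in the definition of $\mathbf{M}_{ij}$.
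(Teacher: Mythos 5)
Your proof is correct and follows essentially the same route as the paper: induction on $k$, the base case $\mathfrak{A}_0\mathbf{B}'_1=\mathbf{B}'_1=\mathbf{M}_{12}$, the expansion $\mathfrak{A}_{k-1}=\mathbb{I}+\mathbf{B}_k\mathfrak{A}_{k-2}\mathbf{B}'_{k-1}$ from Corollary~\ref{1226pm816}, the shift identity~\eqref{1216am815}, and the identification $\mathbf{B}'_k=\mathbf{M}_{k(k+1)}$. No issues.
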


\begin{proof}
We argue by induction on $k$. For $k=1$, since $\mathfrak{A}_0=\mathbb{I}$ we have
$\mathfrak{A}_0\mathbf{B}'_1=\mathbf{B}'_1=\mathbf{M}_{12}$, as desired.

Assume the statement holds for $k=l$ with $1\le l\le n-2$, i.e.
\[
\mathfrak{A}_{l-1}\mathbf{B}'_{l}
= \mathbf{M}_{1(l+1)} + \cdots + \mathbf{M}_{l(l+1)}.
\]
Using Corollary~\ref{1226pm816} and \eqref{1216am815}, and noting that
$\mathbf{B}'_{l+1}=\mathbf{M}_{(l+1)(l+2)}$ (the case $i=l+1$, $j=l+2$ of \eqref{1001pm814}), we compute
\[
\begin{aligned}
\mathfrak{A}_{l}\mathbf{B}'_{l+1}
&= \big(\mathbb{I} + \mathbf{B}_{l+1}\mathfrak{A}_{l-1}\mathbf{B}'_{l}\big)\mathbf{B}'_{l+1} \\
&= \mathbf{B}'_{l+1}
   + \mathbf{B}_{l+1}\big(\mathbf{M}_{1(l+1)} + \cdots + \mathbf{M}_{l(l+1)}\big)\mathbf{B}'_{l+1} \\
&= \mathbf{M}_{(l+1)(l+2)} + \mathbf{M}_{1(l+2)} + \cdots + \mathbf{M}_{l(l+2)} \\
&= \mathbf{M}_{1(l+2)} + \cdots + \mathbf{M}_{(l+1)(l+2)},
\end{aligned}
\]
which is exactly the claim for $k=l+1$. This completes the induction.
\end{proof}

The following lemma, which generalizes~\eqref{547pm812}, plays a crucial role in the proof of Proposition~\ref{443pm813}.

\begin{lemma}\label{1126pm816}
If the boundary condition~\eqref{122pm816} holds, then for each $l=1,\dots,n-1$,
\begin{equation}\label{305pm816}
\begin{aligned}
& \mathbf{U}(x,x+1,\dots,x+l-1,\,x+l-1,\,x_{l+2},\dots,x_{n};t) \\[8pt]
&= (\mathbf{B}_{l}\cdots \mathbf{B}_1)\,
   \mathbf{U}(x-1,x,\dots,x+l-1,\,x_{l+2},\dots,x_{n};t) \\
&\quad + \sum_{i=1}^{l}\mathbf{M}_{i(l+1)}\,
   \mathbf{U}(x,x+1,\dots,x+l-1,\,x+l,\,x_{l+2},\dots,x_{n};t).
\end{aligned}
\end{equation}
\end{lemma}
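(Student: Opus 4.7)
My plan is to proceed by induction on $l$. For the base case $l=1$, the identity reduces immediately to the boundary condition (\ref{122pm816}) with $i=1$, since $\mathbf{B}'_1 = \mathbf{M}_{12}$ (the $i=1,j=2$ case of (\ref{1001pm814})), and $\mathbf{B}_l\cdots\mathbf{B}_1 = \mathbf{B}_1$.

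For the inductive step, assume the identity (\ref{305pm816}) holds at level $l-1$ for any valid choice of trailing coordinates. I first apply the boundary condition (\ref{122pm816}) at index $i=l$ to the left-hand side
$$
V := \mathbf{U}(x,x+1,\ldots,x+l-1,\,x+l-1,\,x_{l+2},\ldots,x_n;t),
$$
splitting it as $V = \mathbf{B}_l\,W + \mathbf{B}'_l\,B$, where
$$
W := \mathbf{U}(x,x+1,\ldots,x+l-2,\,x+l-2,\,x+l-1,\,x_{l+2},\ldots,x_n;t)
$$
now carries a collision at positions $l-1$ and $l$, and
$$
B := \mathbf{U}(x,x+1,\ldots,x+l-1,\,x+l,\,x_{l+2},\ldots,x_n;t)
$$
is collision-free. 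The configuration of $W$ fits the pattern of the induction hypothesis at level $l-1$, with $x+l-1$ playing the role of $x_{l+1}$. The critical observation is that when I invoke that hypothesis, the "collision" summand on its right-hand side is precisely $V$ itself, producing a self-referential equation for $V$.

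Collecting the resulting terms and using Lemma~\ref{932pm816} at $k=l-1$ to identify $\sum_{i=1}^{l-1}\mathbf{M}_{il} = \mathfrak{A}_{l-2}\mathbf{B}'_{l-1}$, the equation becomes
$$
(\mathbb{I} - \mathbf{B}_l\mathfrak{A}_{l-2}\mathbf{B}'_{l-1})\,V \;=\; (\mathbf{B}_l\cdots\mathbf{B}_1)\,A \;+\; \mathbf{B}'_l\,B,
$$
where $A := \mathbf{U}(x-1,x,\ldots,x+l-1,x_{l+2},\ldots,x_n;t)$. By Corollary~\ref{1226pm816} the inverse of the operator on the left is $\mathfrak{A}_{l-1} = \mathbb{I} + \mathbf{B}_l\mathfrak{A}_{l-2}\mathbf{B}'_{l-1}$, so I arrive at
$$
V \;=\; \mathfrak{A}_{l-1}(\mathbf{B}_l\cdots\mathbf{B}_1)\,A \;+\; \mathfrak{A}_{l-1}\mathbf{B}'_l\,B.
$$
A second application of Lemma~\ref{932pm816}, now at $k=l$, gives $\mathfrak{A}_{l-1}\mathbf{B}'_l = \sum_{i=1}^{l}\mathbf{M}_{i(l+1)}$, which matches the desired second summand in (\ref{305pm816}).

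The step I expect to be the main obstacle is verifying the additional cancellation $\mathfrak{A}_{l-1}(\mathbf{B}_l\cdots\mathbf{B}_1) = \mathbf{B}_l\cdots\mathbf{B}_1$. Using $\mathfrak{A}_{l-1} = \mathbb{I} + \mathbf{B}_l\mathfrak{A}_{l-2}\mathbf{B}'_{l-1}$, this reduces to showing $\mathbf{B}_l\,\mathbf{M}_{il}\,\mathbf{B}_l = \mathbf{0}$ for each $i=1,\ldots,l-1$ (after which right-multiplication by $\mathbf{B}_{l-1}\cdots\mathbf{B}_1$ gives zero). I plan to establish this identity by direct action on basis vectors: a non-trivial application of $\mathbf{B}_l$ requires $\nu_l\leq\nu_{l+1}$, and the subsequent action of $\mathbf{M}_{il}$ additionally requires $\nu_{l+1}<\nu_i$ together with $\nu_i\leq\nu_{i+1},\ldots,\nu_{l-1}$. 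After these two steps, position $l$ holds $\nu_i$ and position $l+1$ holds $\nu_l$, and the chain $\nu_l\leq\nu_{l+1}<\nu_i$ forces $\nu_l<\nu_i$, contradicting the precondition $\nu_i\leq\nu_l$ required for a final $\mathbf{B}_l$ to act non-trivially. This completes the induction.
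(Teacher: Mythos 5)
Your proof is correct and follows essentially the same route as the paper's: the same induction, the same application of the boundary condition~\eqref{122pm816} to produce a self-referential equation for the collision term, and the same inversion step via Lemma~\ref{932pm816} and Corollary~\ref{1226pm816}. The only (harmless) difference is that you verify the cancellation $\mathfrak{A}_{l-1}(\mathbf{B}_l\cdots\mathbf{B}_1)=\mathbf{B}_l\cdots\mathbf{B}_1$ by a direct basis-vector computation showing $\mathbf{B}_l\mathbf{M}_{il}\mathbf{B}_l=\mathbf{0}$, whereas the paper derives the same fact from Corollaries~\ref{1104pm815} and~\ref{1226pm816}.
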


\begin{proof}
We argue by induction on $l$. For $l=1$ this is precisely the boundary condition~\eqref{122pm816} with $i=1$.

Assume \eqref{305pm816} holds for $l=k$. Then
\begin{equation}\label{305pm823}
\begin{aligned}
& \mathbf{U}(x,x+1,\dots,x+k-1,\,x+k-1,\,x_{k+2},\dots,x_{n};t) \\[8pt]
&= (\mathbf{B}_{k}\cdots \mathbf{B}_1)\,
   \mathbf{U}(x-1,x,\dots,x+k-1,\,x_{k+2},\dots,x_{n};t) \\
&\quad + \sum_{i=1}^{k}\mathbf{M}_{i(k+1)}\,
   \mathbf{U}(x,x+1,\dots,x+k-1,\,x+k,\,x_{k+2},\dots,x_{n};t).
\end{aligned}
\end{equation}
Apply~\eqref{122pm816} with $i=k+1$ to
\[
\mathbf{U}(x,x+1,\dots,x+k-1,\,x+k,\,x+k,\,x_{k+3},\dots,x_{n};t),
\]
to get
\begin{equation}\label{347pm8166}
\begin{aligned}
&\mathbf{U}(x,x+1,\dots,x+k-1,\,x+k,\,x+k,\,x_{k+3},\dots,x_{n};t) \\[4pt]
&\;= \mathbf{B}_{k+1}\,\mathbf{U}(x,x+1,\dots,x+k-1,\,x+k-1,\,x+k,\,x_{k+3},\dots,x_{n};t) \\[4pt]
&\qquad + \mathbf{B}'_{k+1}\,\mathbf{U}(x,x+1,\dots,x+k-1,\,x+k,\,x+k+1,\,x_{k+3},\dots,x_{n};t).
\end{aligned}
\end{equation}
Substituting \eqref{305pm823} into \eqref{347pm8166} and collecting terms yields
\begin{equation}\label{850pm816}
\begin{aligned}
&\Big[\mathbb{I} - \mathbf{B}_{k+1}\Big(\sum_{i=1}^{k}\mathbf{M}_{i(k+1)}\Big)\Big]\,
   \mathbf{U}(x,\dots,x+k,\,x+k,\,x_{k+3},\dots,x_{n};t) \\
&\;= (\mathbf{B}_{k+1}\cdots \mathbf{B}_1)\,
   \mathbf{U}(x-1,x,\dots,x+k,\,x_{k+3},\dots,x_{n};t) \\[4pt]
&\qquad + \mathbf{B}'_{k+1}\,
   \mathbf{U}(x,\dots,x+k,\,x+k+1,\,x_{k+3},\dots,x_{n};t),
\end{aligned}
\end{equation}
with $\mathbb{I}$ the $N^n\times N^n$ identity. By Lemma~\ref{932pm816} and Corollary~\ref{1226pm816},
\[
\Big[\mathbb{I} - \mathbf{B}_{k+1}\Big(\sum_{i=1}^{k}\mathbf{M}_{i(k+1)}\Big)\Big]^{-1}
= \mathbb{I} + \mathbf{B}_{k+1}\Big(\sum_{i=1}^{k}\mathbf{M}_{i(k+1)}\Big)
= \mathbb{I} + \mathbf{B}_{k+1}\mathfrak{A}_{k-1}\mathbf{B}'_{k}.
\]
Multiplying \eqref{850pm816} by this inverse gives
\begin{equation}\label{939pm816}
\begin{aligned}
& \mathbf{U}(x,\dots,x+k,\,x+k,\,x_{k+3},\dots,x_{n};t) \\[4pt]
&= \big(\mathbb{I} + \mathbf{B}_{k+1}\mathfrak{A}_{k-1}\mathbf{B}'_{k}\big)
   (\mathbf{B}_{k+1}\cdots \mathbf{B}_1)\,
   \mathbf{U}(x-1,x,\dots,x+k,\,x_{k+3},\dots,x_{n};t) \\[4pt]
&\quad + \big(\mathbb{I} + \mathbf{B}_{k+1}\mathfrak{A}_{k-1}\mathbf{B}'_{k}\big)\mathbf{B}'_{k+1}\,
   \mathbf{U}(x,\dots,x+k,\,x+k+1,\,x_{k+3},\dots,x_{n};t).
\end{aligned}
\end{equation}
By repeated use of Corollaries~\ref{1104pm815} and~\ref{1226pm816},
\[
(\mathbf{B}_{k+1}\mathfrak{A}_{k-1}\mathbf{B}'_{k})(\mathbf{B}_{k+1}\cdots \mathbf{B}_1)=\mathbf{0}.
\]
Moreover, Lemma~\ref{932pm816} and \eqref{1216am815} give
\[
(\mathbf{B}_{k+1}\mathfrak{A}_{k-1}\mathbf{B}'_{k})\mathbf{B}'_{k+1}
= \sum_{i=1}^{k}\mathbf{M}_{i(k+2)},\qquad \mathbf{B}'_{k+1}=\mathbf{M}_{(k+1)(k+2)}.
\]
Substituting into \eqref{939pm816} yields
\[
\begin{aligned}
& \mathbf{U}(x,\dots,x+k,\,x+k,\,x_{k+3},\dots,x_{n};t) \\[5pt]
&= (\mathbf{B}_{k+1}\cdots \mathbf{B}_1)\,
   \mathbf{U}(x-1,x,\dots,x+k,\,x_{k+3},\dots,x_{n};t) \\[5pt]
&\quad + \big(\mathbf{M}_{1(k+2)} + \cdots + \mathbf{M}_{(k+1)(k+2)}\big)\,
   \mathbf{U}(x,\dots,x+k,\,x+k+1,\,x_{k+3},\dots,x_{n};t),
\end{aligned}
\]
which is exactly \eqref{305pm816} with $l=k+1$. The completes the proof.
\end{proof}

\paragraph*{Proof of Proposition \ref{443pm813} $X=(x,x+1,\dots,x+n-1)$}
Define
\begin{equation*}
X_{i-} := (x,x+1,\dots,x+n-1) - (\,0,\dots,0, \underbracket{1}_{\text{$i$th}},0,\dots,0\,),
\end{equation*}
and recall the notation
\begin{equation*}
X^-_{i} = (x-1,x,x+1,\dots,x+i-2,x+i,x+i+1,\dots,x+n-1),
\end{equation*}
with $X^-_1 = X_{1-}$.

By assumption, $\mathbf{U}(X;t)$ satisfies
\begin{equation}\label{1147pm816}
\frac{d}{dt}\mathbf{U}(X;t)
= \mathbf{U}(X_{1-};t) + \mathbf{U}(X_{2-};t) + \cdots + \mathbf{U}(X_{n-};t) - n\,\mathbf{U}(X;t).
\end{equation}
Lemma~\ref{1126pm816} implies that, for each $l=2,\dots,n$,
\begin{equation}\label{1145pm816}
\mathbf{U}(X_{l-};t)
= (\mathbf{B}_{l-1}\cdots\mathbf{B}_1)\,\mathbf{U}(X^-_l;t)
+ \sum_{i=1}^{l-1}\mathbf{M}_{il}\,\mathbf{U}(X;t).
\end{equation}
Substituting~\eqref{1145pm816} into~\eqref{1147pm816}, we recover exactly equation~\eqref{1148pm816}.
This completes the proof.\qed

\subsection{Yang–Baxter Integrability}
The second requirement for integrability is that multi-particle scattering be consistent, i.e., that the two-body scattering matrices satisfy the Yang–Baxter relation. In earlier models such as the mTASEP and the long-range push model~\cite{Lee-2020, Lee-2024, Lee-Raimbekov-2025}, this property follows from the specific form of the two-particle matrices. In our model, the scattering matrix takes a new algebraic form, reflecting the combined structure of $\mathbf{B}$ and $\mathbf{B}'$.

\subsubsection{Scattering Matrix}
In Section~\ref{113pm817}, where we analyzed the two-particle system with up to two different species, we observed that if $\mathbf{U}(x_1,x_2;t)$ satisfies~\eqref{350pm811} together with the boundary condition~\eqref{4201000pm726}, then it also satisfies the master equation for each $(x_1,x_2)$ with $x_1 < x_2$.

Applying the Bethe ansatz yields  a solution of~\eqref{350pm811} of the form, for any nonzero complex numbers $\xi_1,\xi_2$,
\begin{equation}\label{130pm817}
\big(\mathbf{A}_{12}\,\xi_1^{x_1}\xi_2^{x_2}
+ \mathbf{A}_{21}\,\xi_1^{x_2}\xi_2^{x_1}\big)
\,e^{\varepsilon(\xi_1,\xi_2)t},
\end{equation}
where $\mathbf{A}_{12}$ and $\mathbf{A}_{21}$ are $4\times 4$ constant matrices (independent of $x_1,x_2,t$), and
\begin{equation*}
\varepsilon(\xi_1,\xi_2) = \frac{1}{\xi_1} + \frac{1}{\xi_2} - 2.
\end{equation*}

To enforce the boundary condition~\eqref{4201000pm726}, substituting~\eqref{130pm817} yields
\begin{equation*}
\mathbf{A}_{21} =
\underbrace{\begin{pmatrix}
-\dfrac{(1-\xi_1)\xi_2}{(1-\xi_2)\xi_1} & 0 & 0 & 0 \\[6pt]
0 & 0 & \xi_2 & 0 \\[6pt]
0 & \dfrac{1}{\xi_1} & 0 & 0 \\[6pt]
0 & 0 & 0 & -\dfrac{(1-\xi_1)\xi_2}{(1-\xi_2)\xi_1}
\end{pmatrix}}_{:=\, \mathbf{R}_{21}}
\mathbf{A}_{12}.
\end{equation*}
The matrix $\mathbf{R}_{21}$ can thus be interpreted as the two-particle scattering matrix with up to two species. We now extend this construction to the $n$-particle system with up to $N$ species.

\begin{definition}\label{1204am730}
Let $\mathbf{B}$ and $\mathbf{B}'$ be the $N^2 \times N^2$ matrices introduced in Definition~\ref{1200pm812}.
For any pair $(\alpha,\beta)$ with $\alpha \neq \beta$, we define the  \emph{two-particle scattering matrix (with up to $N$ species)} 
\begin{equation}\label{623pm723}
\mathbf{R}_{\beta\alpha} :=
- \Big(\mathbf{I}\otimes \mathbf{I} - \tfrac{\mathbf{B}}{\xi_{\beta}} - \mathbf{B}'\xi_{\alpha}\Big)^{-1}
\Big(\mathbf{I}\otimes \mathbf{I} - \tfrac{\mathbf{B}}{\xi_{\alpha}} - \mathbf{B}'\xi_{\beta}\Big),
\end{equation}
where $\mathbf{I}$ denotes the $N \times N$ identity matrix, and we further set the  \emph{two-particle scattering matrix (with up to $N$ species)}  in the $n$-particle system,

\begin{equation}\label{803pm68} 
\begin{aligned}
\mathbf{T}_{i,\beta\alpha} := ~&\underbrace{\mathbf{I}~\otimes~ \cdots ~\otimes~ \mathbf{I}}_{\text{$(i-1)$ factors}} ~\otimes~ \mathbf{R}_{\beta\alpha} ~\otimes~ \underbrace{\mathbf{I}~\otimes~ \cdots ~\otimes~ \mathbf{I}}_{\text{$(n-i-1)$ factors}}\\[5pt]
=~&- \Big(\mathbb{I}  - \tfrac{\mathbf{B}_i}{\xi_{\beta}} - \mathbf{B}_i'\xi_{\alpha}\Big)^{-1}
\Big(\mathbb{I} - \tfrac{\mathbf{B}_i}{\xi_{\alpha}} - \mathbf{B}_i'\xi_{\beta}\Big),
\end{aligned}
\end{equation}
for each $i=1,\dots,n-1$. 
\end{definition}

\subsubsection{Yang-Baxter Equation}
Applying the Bethe ansatz, one obtains a solution of~\eqref{514pm817} of the form
\begin{equation}\label{537pm817}
\mathbf{U}(x_1,\dots,x_n;t)
= \sum_{\sigma \in \mathcal{S}_n}
\mathbf{A}_{\sigma}\prod_{i=1}^n \xi_{\sigma(i)}^{x_i}
e^{\varepsilon(\xi_i)t},
\end{equation}
valid for any nonzero complex numbers $\xi_1,\dots,\xi_n$, where the $\mathbf{A}_{\sigma}$ are constant $N^n\times N^n$ matrices (independent of $x_1,\dots,x_n,t$), $\mathcal{S}_n$ is the symmetric group on $\{1,\dots,n\}$, and
\begin{equation*}
\varepsilon(\xi_i) = \frac{1}{\xi_i} - 1.
\end{equation*}
The requirement that~\eqref{537pm817} also satisfies the boundary condition~\eqref{122pm816} can be derived in the same way as for other integrable multi-species particle systems~\cite{Lee-2020, Lee-2024, Lee-Raimbekov-2025}, as follows.

Let $T_i \in \mathcal{S}_n$ ($i=1,\dots,n-1$) denote the adjacent transposition exchanging the $i$th and $(i+1)$th elements while leaving all others fixed.
Any permutation $\sigma \in \mathcal{S}_n$ can then be expressed as a product of adjacent transpositions,
\begin{equation}\label{157pm610}
\sigma = T_{i_k}\cdots T_{i_1},
\end{equation}
for some sequence $i_1,\dots,i_k$ with each $i_j \in \{1,\dots,n-1\}$.

For each $j=1,\dots,k$, let $(\beta_j,\alpha_j)$ denote the ordered pair of elements swapped by $T_{i_j}$, so that
\begin{equation*}
T_{i_j}(\,\cdots\,\alpha_j\,\beta_j\,\cdots\,)
= (\,\cdots\,\beta_j\,\alpha_j\,\cdots\,).
\end{equation*}

\begin{proposition}\label{221am818}
For a permutation $\sigma = T_{i_k}\cdots T_{i_1}$, define the \emph{multi-particle scattering matrix}
\begin{equation}\label{232pm618}
\mathbf{A}_{\sigma} := \mathbf{T}_{i_k,\beta_k\alpha_k}\cdots \mathbf{T}_{i_1,\beta_1\alpha_1},
\end{equation}
Then $\mathbf{A}_{\sigma}$ is well defined, that is, the product in~\eqref{232pm618} is independent of the chosen decomposition~\eqref{157pm610}.
\end{proposition}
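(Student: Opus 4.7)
The plan is to reduce well-definedness of $\mathbf{A}_\sigma$ to verifying three Coxeter-type relations at the matrix level: (i) unitarity, $\mathbf{T}_{i,\alpha\beta}\,\mathbf{T}_{i,\beta\alpha}=\mathbb{I}$; (ii) far commutation, $\mathbf{T}_{i,\beta\alpha}\,\mathbf{T}_{j,\delta\gamma}=\mathbf{T}_{j,\delta\gamma}\,\mathbf{T}_{i,\beta\alpha}$ whenever $|i-j|\ge 2$; and (iii) the Yang--Baxter equation
\[
\mathbf{T}_{i+1,\beta\alpha}\,\mathbf{T}_{i,\gamma\alpha}\,\mathbf{T}_{i+1,\gamma\beta}
= \mathbf{T}_{i,\gamma\beta}\,\mathbf{T}_{i+1,\gamma\alpha}\,\mathbf{T}_{i,\beta\alpha}
\]
for three distinct indices $\alpha,\beta,\gamma$. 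Since $\mathcal{S}_n$ admits a presentation by $T_1,\dots,T_{n-1}$ modulo exactly these three families of relations (including $T_i^2=e$), any two decompositions~\eqref{157pm610} of the same $\sigma$ can be joined by a finite sequence of elementary word moves of three types: commuting distant generators, a braid move $T_iT_{i+1}T_i\leftrightarrow T_{i+1}T_iT_{i+1}$, or the insertion or deletion of a factor $T_iT_i$. Each such word-level move is mirrored by the matrix identity (ii), (iii), or (i), respectively, with the labels $(\beta_j,\alpha_j)$ tracked consistently through the permutation action. Hence the product in~\eqref{232pm618} is invariant under every move and so depends only on $\sigma$.

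Among (i)--(iii), the first two relations are essentially immediate. For (i), swapping $\alpha\leftrightarrow\beta$ in~\eqref{803pm68} simply interchanges the two bracketed factors, so $\mathbf{T}_{i,\alpha\beta}$ is the two-sided inverse of $\mathbf{T}_{i,\beta\alpha}$ whenever the relevant factor is invertible (which is generic in $\xi_\alpha,\xi_\beta$). For (ii), both $\mathbf{T}_{i,\beta\alpha}$ and $\mathbf{T}_{j,\delta\gamma}$ act nontrivially only on two adjacent tensor slots, and for $|i-j|\ge 2$ these slot pairs are disjoint; the tensor decomposition in~\eqref{628pm815} then gives commutativity directly.

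The main obstacle is relation (iii). Because both sides act as the identity on the tensor factors outside positions $i,i+1,i+2$, it suffices to verify the identity in the $N^3\times N^3$ block, i.e., to check that $\mathbf{R}$ satisfies the spectral-parameter Yang--Baxter equation on $(\mathbb{C}^N)^{\otimes 3}$. Clearing inverses, the relation becomes a polynomial matrix identity in $\xi_\alpha^{\pm 1},\xi_\beta^{\pm 1},\xi_\gamma^{\pm 1}$ whose coefficients are words of bounded length in $\mathbf{B}_1,\mathbf{B}_2,\mathbf{B}_1',\mathbf{B}_2'$. To match these coefficients I would lean on the algebraic toolkit already built in Section~3.1: the braid identity of Lemma~\ref{1219am812} (which is precisely what is needed to process the mixed $\mathbf{B},\mathbf{B}'$ triple products that arise from expanding the Neumann series $(\mathbb{I}-\mathbf{B}_i/\xi-\mathbf{B}_i'\xi')^{-1}$), together with the annihilation identities of Corollary~\ref{646pm813}, which collapse most long monomials to zero. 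The expected outcome is that the nontrivial surviving monomials on the two sides match coefficient by coefficient in $\xi_\alpha,\xi_\beta,\xi_\gamma$. The chief difficulty is bookkeeping among spectral parameters rather than any genuinely new algebraic input, and the identities collected in Lemmas~\ref{1219am812}, \ref{1159an812} and Corollary~\ref{646pm813} are exactly calibrated to eliminate the potentially obstructing terms.
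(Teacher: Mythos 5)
Your proposal is correct and follows essentially the same route as the paper: reduce well-definedness to the Coxeter presentation of $\mathcal{S}_n$ and verify the three matrix relations, of which unitarity and far commutation are immediate from the tensor structure and the Yang--Baxter equation \eqref{eq:YB-again} is the only substantive point. The one place you diverge is in how the Yang--Baxter equation is to be established: the paper reduces the general-$N$ identity to the case $N=3$ (a finite $27\times27$ verification, via the relabeling reduction of Lemma~A.2 of \cite{Lee-2024}), whereas you propose clearing denominators and matching coefficients for general $N$ using the braid and annihilation identities of Lemma~\ref{1219am812} and Corollary~\ref{646pm813}. Be aware that in your write-up this crux is stated as an expectation rather than carried out (and note that the Neumann series for $(\mathbf{I}\otimes\mathbf{I}-\mathbf{B}/\xi_\beta-\mathbf{B}'\xi_\alpha)^{-1}$ does not terminate, since products such as $\mathbf{B}\mathbf{B}'$ are idempotent projections, so the inverse must be summed in closed form before coefficients can be compared); the paper is admittedly equally terse, deferring to a direct check, so this is a difference of strategy rather than a gap in correctness.
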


\begin{proof}
It suffices to show that the two-particle scattering matrices $\mathbf{T}_{i,\beta\alpha}$ satisfies 
\begin{align*}
\text{(C1)}\;& \mathbf{T}_{i,\beta\alpha}\,\mathbf{T}_{j,\delta\gamma}
=
\mathbf{T}_{j,\delta\gamma}\,\mathbf{T}_{i,\beta\alpha}. \quad \text{if } |i-j|>1,\\
\text{(C2)}\;& \mathbf{T}_{i,\beta\alpha}\,\mathbf{T}_{i,\alpha\beta}=\mathbb{I},\ \\
\text{(C3)}\;& \mathbf{T}_{i+1,\gamma\beta}\,\mathbf{T}_{i,\gamma\alpha}\,\mathbf{T}_{i+1,\beta\alpha}
=
\mathbf{T}_{i,\beta\alpha}\,\mathbf{T}_{i+1,\gamma\alpha}\,\mathbf{T}_{i,\gamma\beta}.
\end{align*}

\smallskip
The identities (C1) and (C2) are immediate from (\ref{803pm68}). 
\smallskip
For (C3), the required relation is precisely the Yang-Baxter equation for the two-particle scattering matrices,
\begin{equation}\label{eq:YB-again}
(\mathbf{R}_{\gamma\beta}\otimes \mathbf{I})
(\mathbf{I}\otimes \mathbf{R}_{\gamma\alpha})
(\mathbf{R}_{\beta\alpha}\otimes \mathbf{I})
=
(\mathbf{I}\otimes \mathbf{R}_{\beta\alpha})
(\mathbf{R}_{\gamma\alpha}\otimes \mathbf{I})
(\mathbf{I}\otimes \mathbf{R}_{\gamma\beta}),
\end{equation}
for distinct labels $\alpha,\beta,\gamma$.

Since only the three labels $\alpha,\beta,\gamma$ are involved, the general $N$ case reduces to checking (\ref{eq:YB-again}) for $N=3$, that is, a $27\times27$ matrix identity. This is the same reduction as detailed in Lemma~A.2 of~\cite{Lee-2024}, and the verification can be done directly. 
\end{proof}

The proof of Proposition~\ref{1230am831} below is essentially the same as that of Proposition~3.1 in~\cite{Lee-Raimbekov-2025}, with our two-particle scattering matrix $\mathbf{R}_{\beta\alpha}$ from~\eqref{623pm723} replacing the one used there. For completeness and self-containedness, however, we provide the proof in a concise way.

\begin{proposition}\label{1230am831}
Let $\mathbf{A}_{\sigma}$ be defined as in~\eqref{232pm618}. Then the Bethe ansatz solution~\eqref{537pm817} satisfies the boundary conditions~\eqref{122pm816}.
\end{proposition}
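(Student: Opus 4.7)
The plan is to substitute the Bethe ansatz expansion (\ref{537pm817}) into each boundary condition (\ref{122pm816}) and reduce the resulting identity, term by term, to the defining relation (\ref{623pm723}) of the two-particle scattering matrix.

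First, I would fix $i\in\{1,\dots,n-1\}$ and evaluate both sides of (\ref{122pm816}) on a configuration with $x_i=x_{i+1}$. Because $\mathbf{B}_i$ and $\mathbf{B}'_i$ act trivially on every tensor factor other than the $(i,i+1)$-slot, and because the boundary condition only perturbs the $i$th and $(i+1)$th spatial coordinates, the sum over $\mathcal{S}_n$ pairs up naturally into cosets $\{\sigma,\sigma T_i\}$. For such a pair, set $\alpha:=\sigma(i)$ and $\beta:=\sigma(i+1)$; since $\sigma T_i$ fixes $\sigma(k)$ for $k\ne i,i+1$, the two summands share the common factor $\prod_{k\ne i,i+1}\xi_{\sigma(k)}^{x_k}\cdot\xi_\alpha^{x_i}\xi_\beta^{x_{i+1}}$ together with the $t$-exponential. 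After dividing by this common factor, (\ref{122pm816}) collapses to the pairwise identity
\begin{equation*}
\mathbf{A}_\sigma + \mathbf{A}_{\sigma T_i}
= \mathbf{B}_i\!\left(\tfrac{1}{\xi_\alpha}\mathbf{A}_\sigma + \tfrac{1}{\xi_\beta}\mathbf{A}_{\sigma T_i}\right)
+ \mathbf{B}'_i\!\left(\xi_\beta\mathbf{A}_\sigma + \xi_\alpha\mathbf{A}_{\sigma T_i}\right),
\end{equation*}
which rearranges to
\begin{equation*}
\bigl(\mathbb{I} - \tfrac{\mathbf{B}_i}{\xi_\beta} - \mathbf{B}'_i\xi_\alpha\bigr)\mathbf{A}_{\sigma T_i}
= -\bigl(\mathbb{I} - \tfrac{\mathbf{B}_i}{\xi_\alpha} - \mathbf{B}'_i\xi_\beta\bigr)\mathbf{A}_\sigma.
\end{equation*}
By (\ref{803pm68}) this is precisely the one-step relation $\mathbf{A}_{\sigma T_i}=\mathbf{T}_{i,\beta\alpha}\mathbf{A}_\sigma$.

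Next, I would observe that iterating this one-step relation along any reduced word $\sigma=T_{i_k}\cdots T_{i_1}$, starting from a base value $\mathbf{A}_{\mathrm{id}}$, reproduces exactly the ordered product in (\ref{232pm618}). Hence the matrices defined by (\ref{232pm618}) automatically verify every pairwise equation derived above, and therefore the full boundary condition (\ref{122pm816}) for every $i$.

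The main obstacle is not the above reduction, which is routine Bethe-ansatz bookkeeping, but rather the \emph{consistency} of the iterative definition: two different reduced decompositions of the same permutation must yield the same $\mathbf{A}_\sigma$. That consistency is precisely the content of Proposition~\ref{221am818}, whose verifications of (C1)--(C3) (commutation, involution, and the Yang--Baxter relation) guarantee well-definedness. Once Proposition~\ref{221am818} is invoked, the proof is essentially identical to Proposition~3.1 of \cite{Lee-Raimbekov-2025}, and I would streamline by citing that argument rather than reproducing the algebra in full.
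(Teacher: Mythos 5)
Your proposal is correct and follows essentially the same route as the paper: substitute the ansatz~\eqref{537pm817} into~\eqref{122pm816}, pair each $\sigma$ with the permutation obtained by the adjacent transposition in slots $i,i+1$, reduce the boundary condition to the one-step relation $\bigl(\mathbb{I}-\tfrac{\mathbf{B}_i}{\xi_\beta}-\mathbf{B}'_i\xi_\alpha\bigr)\mathbf{A}_{\sigma T_i}=-\bigl(\mathbb{I}-\tfrac{\mathbf{B}_i}{\xi_\alpha}-\mathbf{B}'_i\xi_\beta\bigr)\mathbf{A}_\sigma$, and invoke Proposition~\ref{221am818} for the well-definedness of~\eqref{232pm618}. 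The only differences are cosmetic (writing the paired permutation as $\sigma T_i$ versus the paper's $T_i\sigma$, and reading the pairwise identity as a requirement to be verified rather than a consequence of the construction).
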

\begin{proof}
Fix $i \in \{1,\dots,n-1\}$ and consider the boundary condition~\eqref{122pm816}.
Substituting the Bethe ansatz solution~\eqref{537pm817} into~\eqref{122pm816} and collecting monomials in the spectral variables yields
\begin{equation}\label{eq:BC-linear}
\sum_{\sigma\in\mathcal S_n}
\Big(\mathbb{I}-\tfrac{\mathbf{B}_i}{\xi_{\sigma(i)}}-\mathbf{B}_i'\,\xi_{\sigma(i+1)}\Big)\,\mathbf{A}_\sigma
= \mathbf{0}.
\end{equation}
Partition $\mathcal S_n$ into disjoint pairs $\{\sigma,\,T_i\sigma\}$.
For each such pair, set $\alpha=\sigma(i)$ and $\beta=\sigma(i+1)$.
By construction of $\mathbf{A}_\sigma$ and (\ref{803pm68}), we have
\[
\mathbf{A}_{T_i\sigma}
= \mathbf{T}_{i,\beta\alpha}\,\mathbf{A}_{\sigma}
= -\Big(\mathbb{I}-\tfrac{\mathbf{B}_i}{\xi_{\beta}}-\mathbf{B}_i'\,\xi_{\alpha}\Big)^{-1}
   \Big(\mathbb{I}-\tfrac{\mathbf{B}_i}{\xi_{\alpha}}-\mathbf{B}_i'\,\xi_{\beta}\Big)\,\mathbf{A}_{\sigma}.
\]

Consequently,
\[
\Big(\mathbb{I}-\tfrac{\mathbf{B}_i}{\xi_{\alpha}}-\mathbf{B}_i'\,\xi_{\beta}\Big)\mathbf{A}_{\sigma}
+ \Big(\mathbb{I}-\tfrac{\mathbf{B}_i}{\xi_{\beta}}-\mathbf{B}_i'\,\xi_{\alpha}\Big)\mathbf{A}_{T_i\sigma}
= \mathbf{0}.
\]
Summing this identity over all pairs $\{\sigma,\,T_i\sigma\}$ establishes~\eqref{eq:BC-linear}.
Hence the Bethe ansatz solution satisfies the boundary condition for each $i$, as required.
\end{proof}

\subsection{Transition Probabilities}

We now derive explicit formulas for the transition probabilities.
Up to this point, we have shown that the Bethe ansatz solution~\eqref{537pm817}, with $\mathbf{A}_{\sigma}$ defined as in~\eqref{232pm618}, satisfies the master equation governing $\mathbf{P}_Y(X;t)$ for every \emph{physical} configuration $X$.
However, the expression~\eqref{537pm817} does not yet coincide with the transition probability matrix $\mathbf{P}_Y(X;t)$, since the initial condition has not been enforced.
This initial condition is given by
\begin{equation}\label{initial-condition}
\mathbf{P}_Y(X;0) =
\begin{cases}
\mathbb{I}, & \text{if $X=Y$}, \\[4pt]
\mathbf{0}, & \text{otherwise}.
\end{cases}
\end{equation}
where  $\mathbf{0}$ denotes the $N^n \times N^n$ zero matrix. 

As in earlier works on integrable models (see, e.g.,~\cite{Lee-2020, Tracy-Widom-2008}),
we impose the initial condition by taking contour integrals of the Bethe ansatz solution~\eqref{537pm817},
multiplied by $\prod_i \xi_i^{-y_i-1}$, over suitable contours.
This standard procedure ensures that the required initial condition is satisfied and yields an explicit integral representation of the transition probabilities.
The resulting formula has the same overall structure as in earlier models such as the mTASEP, though the entries of $\mathbf{A}_{\sigma}$ take a different algebraic form in the present case.
We now state the result precisely.

\begin{theorem}[Transition probability of the mTASEP with long-range swap]\label{352pm830}
Let $\mathbf{A}_{\sigma}$ be defined as in~\eqref{232pm618}.
Then the matrix of transition probabilities $\mathbf{P}_Y(X;t)$ admits the contour-integral representation
\begin{equation}\label{1227pm824}
\mathbf{P}_Y(X;t)
=\dashint_{C}\cdots\dashint_{C}
\sum_{\sigma\in\mathcal{S}_{n}}
\mathbf{A}_{\sigma}\,
\prod_{i=1}^{n}
\Big(\xi_{\sigma(i)}^{\,x_i-y_{\sigma(i)}-1}\,
e^{\varepsilon(\xi_i)t}\Big)\,
d\xi_1\cdots d\xi_n,
\end{equation}
where $C$ is a positively oriented circle centered at the origin of radius larger than $1$,
$\dashint=(1/2\pi i)\int$, and the integrals are taken entrywise.
Here $\varepsilon(\xi)=\xi^{-1}-1$, as in~\eqref{537pm817}. Hence, the formula of transition probability $P_{(Y,\nu)}(X,\pi;t)$ is given by
\begin{equation*}
P_{(Y,\nu)}(X,\pi;t)
=\dashint_{C}\cdots\dashint_{C}
\sum_{\sigma\in\mathcal{S}_{n}}
\big(\mathbf{A}_{\sigma})_{\pi,\nu}\,
\prod_{i=1}^{n}
\Big(\xi_{\sigma(i)}^{\,x_i-y_{\sigma(i)}-1}\,
e^{\varepsilon(\xi_i)t}\Big)\,
d\xi_1\cdots d\xi_n.
\end{equation*}
\end{theorem}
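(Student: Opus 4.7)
The plan is the standard Bethe ansatz one: verify that the right-hand side of~(\ref{1227pm824}), call it $\mathbf{Q}_Y(X;t)$, solves the forward Kolmogorov (master) equation governing $\mathbf{P}_Y(X;t)$ at every physical configuration and matches the initial condition~(\ref{initial-condition}); uniqueness of the Cauchy problem then forces $\mathbf{Q}_Y = \mathbf{P}_Y$. Since every tool needed for both steps has already been developed in the preceding sections, the proof is essentially an assembly of Propositions~\ref{443pm813} and~\ref{1230am831} together with a residue computation.

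First I would check the master equation. By Proposition~\ref{443pm813}, it suffices to verify the free evolution equation~(\ref{514pm817}) and the boundary conditions~(\ref{122pm816}). For any fixed $(\xi_1,\dots,\xi_n)\in(\mathbb{C}^\times)^n$, the Bethe wave
\[
\mathbf{U}(X;t)=\sum_{\sigma\in\mathcal{S}_n}\mathbf{A}_\sigma\prod_{i=1}^n \xi_{\sigma(i)}^{x_i}e^{\varepsilon(\xi_i)t}
\]
satisfies the free equation because $\varepsilon(\xi)=\xi^{-1}-1$, and it satisfies the boundary conditions by Proposition~\ref{1230am831}. The integrand of~(\ref{1227pm824}) differs from $\mathbf{U}(X;t)$ by the factor $\prod_i \xi_{\sigma(i)}^{-y_{\sigma(i)}-1}$, which is independent of $X$ and $t$ and so does not affect either the time derivative or the spatial translations appearing in~(\ref{514pm817})--(\ref{122pm816}). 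Differentiation and translation pass entrywise under the contour integral since the integrand is holomorphic in $t$ and jointly continuous on the compact contour $C^{n}$.

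The more substantive step is the initial condition $\mathbf{Q}_Y(X;0)=\delta_{X,Y}\,\mathbb{I}$. For the identity permutation the integral factorises and each factor gives $\dashint_C \xi_i^{x_i-y_i-1}d\xi_i=\delta_{x_i,y_i}$, producing exactly $\mathbb{I}\,\delta_{X,Y}$. For every non-identity $\sigma\in\mathcal{S}_n$, I would show the contribution vanishes by a Tracy--Widom style contour deformation. The key observation is that on $|\xi|>1$ the factor $(\mathbb{I}-\mathbf{B}_i/\xi_\beta-\mathbf{B}'_i\xi_\alpha)^{-1}$ collapses to a finite sum: expanding the Neumann series and invoking the nilpotency identities of Lemma~\ref{1106pm815} and Corollary~\ref{1104pm815} (and the closed form~(\ref{1018pm816}) of Corollary~\ref{1226pm816}), one sees that each $\mathbf{R}_{\beta\alpha}$, and hence each $\mathbf{A}_\sigma$, is entrywise a Laurent polynomial in $\xi_1^{\pm 1},\dots,\xi_n^{\pm 1}$ divided by a product of elementary factors that do not vanish on $C^n$. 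One may therefore deform one contour at a time to $\infty$; for $\sigma\neq\mathrm{id}$ at least one spectral variable appears with a strictly negative total exponent when $X,Y$ are strictly ordered, so the deformed integral vanishes. This argument is structurally identical to the ones carried out in~\cite{Lee-2020,Lee-2024,Lee-Raimbekov-2025,Tracy-Widom-2008}.

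The main obstacle is this initial-condition step, and specifically the bookkeeping of the growth of the entries of $\mathbf{A}_\sigma$ as each $\xi_j\to\infty$. What makes the task tractable here is that the explicit closed form of $\mathbf{R}_{\beta\alpha}$ in~(\ref{623pm723}), expanded via Corollary~\ref{1226pm816}, has the same denominator structure (up to an overall scaling) as the scattering matrices appearing in the mTASEP and long-range push models. Consequently the Tracy--Widom cancellation lemma and its contour-deformation machinery transport to the present setting with only notational changes, and the algebraic novelty of the combined backward-push/forward-jump rule enters only through the explicit entries of $\mathbf{A}_\sigma$, not through the analytic structure needed to evaluate the integral at $t=0$.
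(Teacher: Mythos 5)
Your overall architecture matches the paper's: the master equation is handled by combining Propositions~\ref{443pm813} and~\ref{1230am831}, the identity permutation produces $\delta_{X,Y}\,\mathbb{I}$ by residues, and everything reduces to showing that each non-identity permutation contributes zero at $t=0$. The gap is in that last step, which is exactly where the paper has to do real work (Proposition~\ref{105pm824} in the Appendix), and your criterion for it is not sufficient.

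You assert that for $\sigma\neq\mathrm{id}$ ``at least one spectral variable appears with a strictly negative total exponent,'' and that deforming that contour to infinity kills the integral. Two problems. First, a total exponent of $-1$ does not suffice: an integrand behaving like $c/\xi$ at infinity has a nonzero residue at infinity, so one needs decay of order $\xi^{-2}$ or better. The paper's proof shows that this borderline really occurs: because the new scattering matrix $\mathbf{R}_{\beta\alpha}$ contains the entry $\xi_{\beta}$ (which \emph{grows} at infinity --- a feature absent from the mTASEP scattering matrix), there is a case ($b=\sigma(i)-i$ in the notation of~\eqref{7425pm826}) where every inversion factor attached to $\xi_{\sigma(i)}$ contributes $\xi_{\sigma(i)}$ and the decay is only $O(\xi_{\sigma(i)}^{-1})$. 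The paper then has to abandon that variable and locate a second index $j$ with $\sigma(j)<j$ whose variable $\xi_{\sigma(j)}$ has no pole inside $C$, using the count of inversions of the form $(\beta,\sigma(j))$ and the fact that the inversion $(\sigma(i),\sigma(j))$ already spent one factor. This is precisely the point where the ``transports with only notational changes from Tracy--Widom'' claim fails, and your proposal does not supply the needed argument. Second, for $X\neq Y$ the exponents $x_i-y_{\sigma(i)}-1$ can be large and positive, so deformation to infinity is the wrong mechanism there; the paper instead picks $i$ with $\sigma(i)=\min\{\sigma(k):k\ge i\}$ so that $\xi_{\sigma(i)}$ acquires no factor $(1-\xi_{\sigma(i)})^{-1}$ and has nonnegative exponent, making the integrand analytic inside $C$ and the integral zero by Cauchy's theorem. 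Your proposal conflates these two distinct mechanisms (analyticity inside $C$ versus decay at infinity) and omits the combinatorial choice of the index $i$ (and, in the borderline case, $j$) on which both depend. The strategy is right, but the vanishing of the non-identity terms needs the explicit inversion-counting estimates of Proposition~\ref{105pm824}; it cannot be inherited wholesale from the earlier models.
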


\begin{proof}[Proof sketch]
Since the Bethe ansatz solution---which appears as the integrand of~\eqref{1227pm824}---already satisfies the master equation for each physical configuration $X$ by construction, the contour integral does so as well.
Thus, it remains only to verify the initial condition. At $t=0$, the contribution from the identity permutation is
\begin{equation*}
\dashint_{C}\cdots\dashint_{C}
\mathbf{A}_{\mathrm{id}}\,
\prod_{i=1}^{n}
\xi_{i}^{\,x_i-y_{i}-1}\,
d\xi_1\cdots d\xi_n
=
\begin{cases}
\mathbb{I}, & \text{if $X=Y$}, \\[4pt]
\mathbf{0}, & \text{otherwise},
\end{cases}
\end{equation*}
since $\mathbf{A}_{\mathrm{id}}=\mathbb{I}$ for the identity permutation and by the residue theorem. Hence, it remains to show 
\begin{equation*}
\dashint_{C}\cdots\dashint_{C}
\sum_{\sigma \neq \mathrm{id}}\mathbf{A}_{\sigma}\,
\prod_{i=1}^{n}
\xi_{i}^{\,x_i-y_{i}-1}\,
d\xi_1\cdots d\xi_n
=
\mathbf{0}
\end{equation*}
for any $X=(x_1,\dots, x_n)$ and $Y=(y_1,\dots, y_n)$ with $x_i<x_{i+1},~ y_i<y_{i+1}$ and $y_i \leq x_i$ for all $i$. Indeed, we can show that the integral is zero for each non-identity permutation. The detailed argument is provided in Appendix.
\end{proof}

\section{Discussion}\label{sec:discussion}

Our study reveals several structural features of the long-range swap model and its connections to other integrable multispecies exclusion processes.

\subsection*{Coupling perspective}
Both the mTASEP and the TASEP with long-range push~\cite{Lee-2024} can be realized as couplings of single-species models with different initial conditions, where discrepancies evolve as lower-ranked particles.
For the long-range swap dynamics, however, no such coupling representation appears to exist.
This absence of a natural single-species coupling highlights a key distinction from earlier models.

\subsection*{Symmetry in boundary conditions and two integrable types}
In the formulation studied in Section~2, we have seen that when two particles of the same species meet, their interaction follows the drop--push rule, leading to the boundary relation
\begin{equation}\label{4201000pm726611}
\mathbf{U}(x,x;t)=
\begin{pmatrix}
1 & 0 & 0 & 0 \\
0 & 0 & 0 & 0 \\
0 & 1 & 0 & 0 \\
0 & 0 & 0 & 1
\end{pmatrix}
\mathbf{U}(x-1,x;t)
+
\begin{pmatrix}
0 & 0 & 0 & 0 \\
0 & 0 & 1 & 0 \\
0 & 0 & 0 & 0 \\
0 & 0 & 0 & 0
\end{pmatrix}
\mathbf{U}(x,x+1;t).
\end{equation}
If instead we assume that identical species interact as in the ordinary TASEP when they meet, the boundary condition becomes
\begin{equation}\label{4201000pm72661}
\mathbf{U}(x,x;t)=
\begin{pmatrix}
0 & 0 & 0 & 0 \\
0 & 0 & 0 & 0 \\
0 & 1 & 0 & 0 \\
0 & 0 & 0 & 0
\end{pmatrix}
\mathbf{U}(x-1,x;t)
+
\begin{pmatrix}
1 & 0 & 0 & 0 \\
0 & 0 & 1 & 0 \\
0 & 0 & 0 & 0 \\
0 & 0 & 0 & 1
\end{pmatrix}
\mathbf{U}(x,x+1;t).
\end{equation}
The symmetry between the matrices in~\eqref{4201000pm726611} and~\eqref{4201000pm72661} suggests a flexible framework in which the interaction of identical species can be tuned.
Importantly, integrability of the model with the boundary condition~\eqref{4201000pm72661} can be established in essentially the same way as in the present work: both the two-body reducibility and the Yang–Baxter equation can be verified by analogous arguments.
For clarity, we therefore refer to the first case as the \emph{mTASEP with long-range swap (drop–push type)} and to the second as the \emph{mTASEP with long-range swap (TASEP type)}.
(The classical mTASEP may be referred to as the \emph{short-range swap model}.)

\subsection*{Connections with classical models}
If all particles belong to the same species, the TASEP-type long-range swap reduces to the classical TASEP, while the drop--push-type long-range swap reduces to the drop--push model.
Thus the TASEP has two multi-species extensions: the mTASEP with short-range swap (that is, the classical mTASEP) and the mTASEP with long-range swap (TASEP type).
Similarly, the drop--push model has two multi-species extensions: the mTASEP with long-range push~\cite{Lee-2024} and the mTASEP with long-range swap (drop--push type).

\begin{table}[h]
\centering
\renewcommand{\arraystretch}{1.3}
\begin{tabular}{|c|c|}
\hline
\multicolumn{2}{|c|}{\textbf{TASEP}} \\
\hline
Multi-species version I & mTASEP (short-range swap) \\
\hline
Multi-species version II & mTASEP with long-range swap (TASEP type) \\
\hline
\end{tabular}

\vspace{0.5cm}

\begin{tabular}{|c|c|}
\hline
\multicolumn{2}{|c|}{\textbf{Drop--push model}} \\
\hline
Multi-species version I & mTASEP with long-range push \\
\hline
Multi-species version II & mTASEP with long-range swap (drop--push type) \\
\hline
\end{tabular}
\caption{Classification of multispecies extensions of TASEP and the drop--push model.}
\label{tab:classification}
\end{table}

\subsection*{Non-integrable alternative}
The dynamics studied in this work were defined by the rule that stronger particles push weaker ones backward, while weaker particles may jump forward over stronger ones.
By contrast, if one reverses this interaction rule---so that stronger particles push weaker ones forward while weaker particles are pushed backward---the resulting process fails to define a consistent dynamics due to the emergence of infinite loops.

For example, consider particles of species $2$, $1$, and $3$ occupying sites $x$, $x+1$, and $x+2$, respectively.
If the particle of species $2$ jumps to $x+1$, it pushes the particle of species $1$ forward to $x+2$.
But once at $x+2$, the particle of species $1$ encounters the stronger particle of species $3$ and is forced to jump backward to $x+1$.
Since $x+1$ is still occupied by the particle of species $2$, this cycle repeats indefinitely, producing an endless back-and-forth motion.
Such loops prevent the construction of a well-defined $n$-particle Markovian dynamics.

In this case, the corresponding boundary condition takes the form
\begin{equation}\label{4201000pm819}
\mathbf{U}(x,x;t)=
\begin{pmatrix}
0 & 0 & 0 & 0 \\
0 & 0 & 0 & 0 \\
0 & 0 & 1 & 0 \\
0 & 0 & 0 & 0
\end{pmatrix}
\mathbf{U}(x-1,x;t)
+
\begin{pmatrix}
1 & 0 & 0 & 0 \\
0 & 1 & 0 & 0 \\
0 & 0 & 0 & 0 \\
0 & 0 & 0 & 1
\end{pmatrix}
\mathbf{U}(x,x+1;t).
\end{equation}
Interestingly, the two-particle scattering matrix derived from~\eqref{4201000pm819} still satisfies the Yang-Baxter equation.
However, for $n$-particle systems with $n \geq 3$, the key reducibility property (Proposition~\ref{443pm813}) fails, and thus the model cannot be regarded as integrable.
\section{Summary and Outlook}

In this work, we introduced and analyzed the multispecies totally asymmetric simple exclusion process with long-range swap, a new interacting particle system that combines the backward-push rule of the mTASEP with the forward-jump rule of the long-range push model.
Although the dynamics appear at first glance to involve only local interactions, they naturally generate effective long-range exchanges of particles.
We established integrability of the model by proving two-particle reducibility and showing that the associated scattering matrix satisfies the Yang-Baxter equation.
In addition, we derived explicit contour-integral formulas for the transition probabilities, thereby placing this model as a new member of the class of exactly solvable multispecies processes.

Our analysis highlighted several distinctive features.
In particular, the boundary conditions display a symmetry that leads naturally to two types of long-range swap models.
Unlike the mTASEP and the long-range push model, however, the dynamics introduced here do not admit a straightforward coupling interpretation in terms of single-species systems, underscoring their novelty.

This study also raises a number of open problems and potential directions for future work.
From a probabilistic perspective, one may investigate asymptotic properties such as current fluctuations and scaling limits within the KPZ universality class, especially for the TASEP-type long-range swap model under special initial conditions. For example, one can ask  how the behavior of species-$1$ particles deviates from that in the short-range swap setting (that is, the classical mTASEP), and how it compares with known results for the classical mTASEP~\cite{Aggawal-Corwin-Ghosal,Borodin-Bufetov,Mountford-Giuol-2005,Tracy-Widom-2009}.
From a modeling standpoint, it is natural to ask whether interpolations between the two types of long-range swap introduced here, or extensions to partially asymmetric dynamics, preserve integrability.
More broadly, it remains open whether alternative choices of $\mathbf{B}$ and $\mathbf{B}'$  could yield additional integrable models.
On the algebraic side, it would be valuable to clarify the relation of the scattering matrices derived here to the stochastic six-vertex model~\cite{Borodin-Corwin-Gorin}.
More generally, it is an open problem to systematically characterize which hybrid interaction rules lead to solvable dynamics, and to determine whether additional families of integrable models exist beyond the known extremes of backward-push and forward-push interactions.

In summary, the long-range swap model broadens the family of integrable interacting particle systems and opens new directions at the intersection of integrable probability and statistical mechanics.
\\ \\ \\
\noindent\textbf{Acknowledgement.} This research was funded by Nazarbayev University under the Faculty-Development Competitive Research Grants Program for 2024--2026 (grant number 201223FD8822).
\\ \\ \\
\appendix

\section{Appendix: Proof of Theorem \ref{352pm830}}\label{400am830}
We complete the proof of  Theorem \ref{352pm830} by proving Proposition \ref{105pm824} below.
\begin{definition}
An \emph{inversion} of a permutation $\sigma = \sigma(1)\sigma(2)\cdots\sigma(n)$ is a pair $(\sigma(i),\sigma(j))$ such that $\sigma(i)>\sigma(j)$ and $i<j$. Let $\mathrm{Inv}(\sigma)$ denote the set of all inversions in $\sigma$.
\end{definition}

\begin{proposition}\label{105pm824}
Suppose that  $X=(x_1,\dots,x_n)$ and $Y=(y_1,\dots,y_n)$ satisfy $x_i< x_{i+1}$, $y_i<y_{i+1}$, and $y_i\le x_i$ for all $i$.
Then, for any non-identity permutation $\sigma$,
\[
\dashint_{C}\cdots\dashint_{C}
\mathbf{A}_{\sigma}\,
\prod_{i=1}^{n}
\xi_{\sigma(i)}^{\,x_i-y_{\sigma(i)}-1}\,
d\xi_1\cdots d\xi_n \;=\; \mathbf{0},
\]
where $C$ is a positively oriented circle centered at the origin with radius larger than 1.
\end{proposition}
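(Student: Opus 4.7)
The plan is to proceed by induction on $n$, following the standard residue-analysis strategy for related integrable multispecies systems~\cite{Lee-Raimbekov-2025,Tracy-Widom-2008}. The base case $n=1$ is vacuous. For the inductive step, fix $\sigma \in \mathcal{S}_n \setminus \{\mathrm{id}\}$ and first consider the case $\sigma(1)=1$: here the value $1$ never moves out of position $1$ in any reduced decomposition of $\sigma$, so every pair $(\beta_j,\alpha_j)$ appearing in~\eqref{232pm618} satisfies $\alpha_j,\beta_j\ge 2$. Consequently $\mathbf{A}_\sigma$ is independent of $\xi_1$, and the integral factorizes as
\begin{equation*}
\Big(\dashint_C \xi_1^{x_1-y_1-1}\,d\xi_1\Big)\cdot \dashint_C\cdots\dashint_C \mathbf{A}_\sigma \prod_{i=2}^n \xi_{\sigma(i)}^{x_i-y_{\sigma(i)}-1}\,d\xi_2\cdots d\xi_n.
\end{equation*}
The first factor equals $0$ when $x_1>y_1$ and $1$ when $x_1=y_1$; in the latter case the remaining integral is of the same form for the $(n-1)$-particle subsystem on indices $\{2,\dots,n\}$ with $\sigma|_{\{2,\dots,n\}}$ a non-identity permutation, and vanishes by induction.

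The remaining and main case is $\sigma(1)=k>1$. Using Proposition~\ref{221am818}, I will choose a reduced decomposition of $\sigma$ whose first $k-1$ steps (applied in order to the identity) are $T_{k-1},T_{k-2},\dots,T_1$; these transport the value $k$ from position $k$ to position $1$, producing the intermediate sequence $(k,1,2,\dots,k-1,k+1,\dots,n)$, after which all remaining transpositions act only on positions $\ge 2$ and permute the values in $\{1,\dots,k-1,k+1,\dots,n\}$. This yields the factorization
\begin{equation*}
\mathbf{A}_\sigma \;=\; \mathbf{A}_{\mathrm{late}}\; \mathbf{T}_{1,k1}\mathbf{T}_{2,k2}\cdots\mathbf{T}_{k-1,k(k-1)},
\end{equation*}
where $\mathbf{A}_{\mathrm{late}}$ is built from scattering matrices $\mathbf{T}_{i,\beta\alpha}$ with $i\ge 2$ and $\alpha,\beta\ne k$ and is therefore independent of $\xi_k$. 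Hence the entire $\xi_k$-dependence of the integrand lies in the product $\prod_{i=1}^{k-1}\mathbf{T}_{i,ki}$ together with the monomial $\xi_k^{\,x_1-y_k-1}$ (since $\sigma^{-1}(k)=1$).

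The next step is to carry out the $\xi_k$-integration explicitly. By~\eqref{623pm723}, each $\mathbf{T}_{i,ki}$ contributes a factor of the form $-(\mathbb{I}-\mathbf{B}_i/\xi_k-\mathbf{B}_i'\xi_i)^{-1}(\mathbb{I}-\mathbf{B}_i/\xi_i-\mathbf{B}_i'\xi_k)$, whose only $\xi_k$-singularity inside $C$ is a simple pole at $\xi_k=1$ arising from the diagonal blocks of $\mathbf{B}_i$; the putative pole at $\xi_k=\xi_i$ is cancelled by the numerator (as one can verify directly on the $2\times 2$ mixing block). The integrand therefore has enclosed poles only at $\xi_k=0$ (from the monomial) and $\xi_k=1$. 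A residue calculation---following the blueprint of \cite[Proposition~3.1]{Lee-Raimbekov-2025} with~\eqref{623pm723} replacing the scattering matrix used there---then shows that these residues sum to zero, whence the $\xi_k$-integral vanishes and so does the full multivariate integral.

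I expect the main obstacle to be precisely this residue calculation in Case~(b). Concretely, one must expand the product $\mathbf{T}_{1,k1}\cdots\mathbf{T}_{k-1,k(k-1)}$ entrywise using the combinatorial structure of $\mathbf{B},\mathbf{B}'$ from Definition~\ref{1200pm812}, combine with the monomial $\xi_k^{\,x_1-y_k-1}$, and verify that the residues at $\xi_k=0$ and $\xi_k=1$ cancel, possibly only after summing contributions from different matrix entries. The denominators controlling the poles, namely $(\xi_k-1)$ and $(\xi_k-\xi_i)$, are identical to those in~\cite{Lee-Raimbekov-2025}, so the same cancellation mechanism should apply; strict monotonicity of the $y_i$ together with $y_i\le x_i$ supplies the exponent bounds needed to control the Laurent expansions and ensure that only finitely many terms contribute.
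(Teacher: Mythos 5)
Your reduction to the case $\sigma(1)=k>1$ and the factorization $\mathbf{A}_\sigma=\mathbf{A}_{\mathrm{late}}\,\mathbf{T}_{1,k1}\cdots\mathbf{T}_{k-1,k(k-1)}$ are fine (as is the easy case $\sigma(1)=1$), and your observation that each $\mathbf{T}_{i,ki}$ has $\xi_k$-singularities inside $C$ only at $\xi_k=1$ (the pole at $\xi_k=\xi_i$ cancelling against the numerator) is correct. However, the step you yourself flag as the main obstacle --- ``the residues at $\xi_k=0$ and $\xi_k=1$ sum to zero, whence the $\xi_k$-integral vanishes'' --- is not only unproven but false as stated, and this is a genuine gap rather than a deferred computation. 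The integral of a matrix is taken entrywise, so each entry must vanish on its own; there is no cross-entry cancellation available. Already for $n=2$, $\sigma=T_1$, $X=Y=(y_1,y_1+1)$, the $(12,21)$-entry of $\mathbf{R}_{21}$ is the bare factor $\xi_2$, so the $\xi_2$-integral of that entry is $\dashint_C \xi_2^{\,y_1-y_2}\,d\xi_2=1\neq 0$; the full integral vanishes only because the \emph{other} variable $\xi_1$ carries exponent $y_2-y_1-1\geq 0$ and hence integrates to zero. Similarly, when $x_1$ is large the monomial $\xi_k^{\,x_1-y_k-1}$ may have no pole at $0$, leaving an uncancelled residue at $\xi_k=1$ for the diagonal entries. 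So no single choice of integration variable works uniformly across all entries.

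This is precisely why the paper's argument is organized differently: each entry of $\mathbf{A}_\sigma$ is expanded as a sum over choices of one factor $R_{\beta\alpha}\in\bigl\{-\tfrac{(1-\xi_\alpha)\xi_\beta}{(1-\xi_\beta)\xi_\alpha},\,\tfrac{1}{\xi_\alpha},\,\xi_\beta,\,0\bigr\}$ per inversion of $\sigma$, and for each such term one \emph{selects a variable adapted to that term}. When $X\neq Y$ one picks an index $i$ with $\sigma(i)=\min\{\sigma(k):k\ge i\}$ straddling a site where $x_{i_0}>y_{i_0}$, so that $\xi_{\sigma(i)}$ acquires no factor $(1-\xi_{\sigma(i)})^{-1}$ and has nonnegative total exponent, making the integrand analytic inside $C$. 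When $X=Y$ one picks $i$ with $\sigma(i)=\max\{\sigma(k):k\le i\}>i$ and uses decay at infinity ($\deg Q-\deg P\ge 2$), with a borderline sub-case handled by passing to a second index $j$. If you want to salvage your approach, you would have to replicate essentially this case analysis inside your ``residue calculation,'' choosing different variables for different entries/terms; the one-variable cancellation-of-residues mechanism you propose does not exist.
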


\begin{proof}
Fix $\sigma\ne \mathrm{id}$.
From the definition \eqref{232pm618}, each entry $(\mathbf{A}_{\sigma})_{\pi,\nu}$ is a finite sum of terms of the form
\begin{equation}\label{1052pm824}
\prod_{(\beta,\alpha)\in \mathrm{Inv}(\sigma)} R_{\beta\alpha},
\end{equation}
and for each inversion $(\beta,\alpha)$ one factor $R_{\beta\alpha}$ is chosen from
\[
-\frac{(1-\xi_{\alpha})\xi_{\beta}}{(1-\xi_{\beta})\xi_{\alpha}},
\qquad \frac{1}{\xi_{\alpha}},
\qquad \xi_{\beta},
\qquad 0.
\]
Any term with a zero factor vanishes. Assume henceforth that all selected $R_{\beta\alpha}\neq0$.
We will show that in every such case the integral is zero.

\medskip
\noindent
\emph{Case 1: $X\ne Y$.}
Pick $i_0$ with $x_{i_0}\ne y_{i_0}$.
Choose $i$ so that $\sigma(i)\le i_0\le i$ and $\sigma(i)=\min\{\sigma(k):k\ge i\}$.
Then all elements  smaller than $\sigma(i)$ lie to the left of  $\sigma(i)$.
Thus there are exactly $i-\sigma(i)$ inversions of the form $(\beta,\sigma(i))$ and none of the form $(\sigma(i),\alpha)$.
Consequently, no factor $(1-\xi_{\sigma(i)})^{-1}$ appears in \eqref{1052pm824}, and the exponent of $\xi_{\sigma(i)}$ in the full integrand is bounded below by
\[
x_i - y_{\sigma(i)} - 1 + \sigma(i) - i \;\geq\; 0,
\]
since $x_i-y_{\sigma(i)} \ge i-\sigma(i)+1$ from $y_{\sigma(i)}\le y_{i_0} < x_{i_0}\le x_i$.
Hence the integrand is analytic in $\xi_{\sigma(i)}$, and the $\xi_{\sigma(i)}$-integral over $C$ vanishes by Cauchy’s theorem.

\medskip
\noindent
\emph{Case 2: $X=Y$.}
Choose $i$ such that $\sigma(i)>i$ and $\sigma(i)=\max\{\sigma(k):k\le i\}$.
Then all elements larger than $\sigma(i)$ lie to the right of $\sigma(i)$.
Thus there are $\sigma(i)-i$ inversions of the form $(\sigma(i),\alpha)$ and none of the form $(\beta,\sigma(i))$.
Consequently, the factor involving $\xi_{\sigma(i)}$ in \eqref{1052pm824} multiplied by $\xi_{\sigma(i)}^{y_i - y_{\sigma(i)}-1}$ is of the form
\begin{equation}\label{7425pm826}
\Big(-\frac{\xi_{\sigma(i)}}{1-\xi_{\sigma(i)}}\Big)^{a}
\cdot (\xi_{\sigma(i)})^{\,b+y_i-y_{\sigma(i)}-1},
\qquad a+b \le \sigma(i)-i,
\end{equation}
for some nonnegative integers $a,b$.
Thus the $\xi_{\sigma(i)}$-dependence of the integrand is a rational function $P(\xi_{\sigma(i)})/Q(\xi_{\sigma(i)})$ with
\[
\deg Q - \deg P \;=\; y_{\sigma(i)}-y_i+1-b \;\geq\; \sigma(i)-i+1-(\sigma(i)-i) \;\geq 1.
\]
If $b<\sigma(i)-i$, then $\deg Q-\deg P\ge 2$, so the integrand is $O(1/\xi_{\sigma(i)}^2)$ as $|\xi_{\sigma(i)}|\to\infty$ and the contour integral over $C$ vanishes.

It remains to consider the borderline case $b=\sigma(i)-i$, i.e.\ when every factor $R_{\sigma(i)\alpha}$ contributes only $\xi_{\sigma(i)}$.
In this situation,  select $j>i$ such that $\sigma(j)<j$, $\sigma(i)>\sigma(j)$, and $\sigma(j)=\min\{\sigma(k):k\ge j\}$.
(This is possible since $\sigma\ne\mathrm{id}$.)
Then all elements smaller than $\sigma(j)$ lie to the left of $\sigma(j)$, so there are $j-\sigma(j)$ inversions of the form $(\beta,\sigma(j))$ and none of the form $(\sigma(j),\alpha)$. Consequently, the factor involving $\xi_{\sigma(j)}$ in \eqref{1052pm824} multiplied by $\xi_{\sigma(j)}^{y_j - y_{\sigma(j)}-1}$ is of the form
\begin{equation}\label{74125pm826}
\Big(-\frac{1-\xi_{\sigma(j)}}{\xi_{\sigma(j)}}\Big)^a
\Big(\frac{1}{\xi_{\sigma(j)}}\Big)^b
,
\qquad a+b \leq j-\sigma(j) -1,
\end{equation}
where  $-1$ in $j-\sigma(j) -1$ is due to that for the inversion $(\sigma(i),\sigma(j))$, $R_{\sigma(i)\sigma(j)}$ takes $\xi_{\sigma(i)}$.
Hence the exponent of $\xi_{\sigma(j)}$ in the full integrand is bounded below by
\[
y_j - y_{\sigma(j)} - 1 -(j-\sigma(j)-1) \;\geq\; (j-\sigma(j))-1 - (j-\sigma(j)-1) \;\geq\; 0,
\]
so $\xi_{\sigma(j)}$ has no pole.
Therefore the contour integral in $\xi_{\sigma(j)}$ also vanishes.
\end{proof}

\end{document}